\documentclass[11pt]{article}
\usepackage{amsmath}
\usepackage{amsthm}
\usepackage{amsfonts}
\usepackage{hyperref}
\usepackage{graphicx}
\usepackage{pgf, tikz}
\usepackage{float}
\usetikzlibrary{arrows.meta, shadows, positioning} 
\usepackage{subcaption}
\usepackage{enumitem}

\usepackage[margin=1in]{geometry}
\hypersetup{
    colorlinks=true,
    linkcolor=blue,
    filecolor=blue,      
    urlcolor=blue,
    citecolor=blue
}

\newcommand{\R}{^\mathrm{r}}
\newcommand{\sgn}{\text{\normalfont sgn }}

\title{A General Approach to the Shape Transition of Run-and-Tumble Particles: The 1D PDMP Framework for Invariant Measure Regularity}

\author{Leo Hahn\thanks{Institut de Mathématiques, Université de Neuchâtel, Switzerland. Email: \texttt{leo.hahn@unine.ch}}}

\newtheorem{Thm}{Theorem}
\newtheorem{Prop}{Proposition}
\newtheorem{Cor}[Prop]{Corollary}
\newtheorem{Lem}[Prop]{Lemma}
\newtheorem{Def}[Prop]{Definition}
\newtheorem{Rem}[Prop]{Remark}
\newtheorem{Ass}{Assumption}
\newtheorem{Cex}[Prop]{Counterexample}

\begin{document}

\maketitle

\abstract{
Run-and-tumble particles (RTPs) have emerged as a paradigmatic example for studying nonequilibrium phenomena in statistical mechanics. The invariant measure of a wide class of RTPs subjected to a potential possesses a density that is continuous at high tumble rates but exhibits divergences at low ones. This key feature, known as shape transition, constitutes a qualitative indicator of the relative closeness (continuous density) or strong deviation (diverging density) from the equilibrium setting. Furthermore, the points at which the density diverges correspond to the configurations where the system spends most of its time in the low tumble rate regime. Building on and extending existing results concerning the regularity of the invariant measure of one-dimensional dynamical systems with random switching, we show how to characterize the shape transition even in situations where the invariant measure cannot be computed explicitly. Our analysis confirms shape transition as a robust, general feature of RTPs subjected to a potential. We also refine the regularity theory for the invariant measure of one-dimensional dynamical systems with random switching.
}

\section{Introduction}


Bacterial colonies~\cite{tailleur09}, flocks of birds~\cite{cavagna10} and robot swarms~\cite{deblais18} are all examples of active matter~\cite{ramaswamy10,marchetti13,bechinger16}, characterized by the transformation of energy into systematic movement at the particle level. This drives these systems out of thermodynamic equilibrium and causes them to display interesting behaviors absent from their equilibrium analogues, such as pattern formation~\cite{budrene95}, accumulation at boundaries~\cite{elgeti15} and motility-induced phase separation~\cite{cates15}. In this context, run-and-tumble particles (RTPs) have attracted particular interest. Alternating between periods of uniform linear motion (runs) and rapid, random reorientation (tumbles)~\cite{berg04}, these particles mimic the movement of bacteria~\cite{berg72} and algae~\cite{bennett13}. One-dimensional RTPs in particular constitute a minimal model for investigating nonequilibrium phenomena~\cite{slowman16,angelani17,malakar18,ledoussal19,das20}.

Remarkably, even a single one-dimensional RTP with two velocities inside a confining potential has an invariant measure that strongly differs from the Boltzmann distribution of passive systems. Indeed, this measure is supported on a compact interval $[x_-, x_+]$, and its density is continuous when tumble rates are high, but diverges at the boundaries $x_\pm$ when tumble rates are low~\cite{dhar19}. This is known as shape transition. A qualitative comparison with the continuous Boltzmann weights shows that this indicates whether the model is close to equilibrium (continuous density) or far from equilibrium (diverging density). This crucial feature of RTPs subjected to a potential constitutes the focus of the present paper. The dichotomy observed here is reminiscent of the distinction between a close-to-equilibrium and a far-from-equilibrium universality class in~\cite{hahn23} and the qualitative changes displayed by the invariant measure in~\cite{hahn24}. Also note that the points at which the density can diverge are of particular interest as they correspond to the system's most likely configurations in the low tumble rate regime. The concentration of probability mass at these points survives in the presence of thermal noise~\cite{khodabandehlou24}, even though the divergences do not. Considering the first coordinate of a higher-dimensional RTP under a harmonic potential~\cite{basu20} leads to an effective particle with three velocities. Its invariant measure can display singularities not only at the edges of its support but also in its interior. The same behavior is observed in the case of three velocities with more general transition rates~\cite{sun25}. The separation of two one-dimensional RTPs interacting through a potential~\cite{ledoussal21} can also be recast as an effective particle with three velocities. Its invariant measure is the solution of a second-order differential equation, which seems intractable in general but coincides with~\cite{basu20} in the harmonic case. Finally, similar singularities also arise in the many-particle case~\cite{touzo23}. It is important to note that all existing quantitative shape transition results rely on the explicit computation of the invariant measure, which seems intractable for systems with more than three velocities or potentials that are not harmonic.

The study of shape transition is part of the larger question of regularity for the invariant measure of dynamical systems with random switching~\cite{bakhtin12,bakhtin15} and, more generally, piecewise-deterministic Markov processes (PDMPs), which are characterized by the continual switching between deterministic motion and random jumps~\cite{davis93,malrieu16}. Under Hörmander bracket conditions at an accessible point, the invariant measure of these processes admits a density~\cite{bakhtin12,benaim15}. In line with the concept of shape transition, detailed analysis of specific models reveals that the invariant density can exhibit different behaviors, being smooth in some cases~\cite{bakhtin18} and developing singularities in others~\cite{bakhtin21}. High jump rates have recently been shown to ensure global regularity~\cite{benaim23,benaim24} and specific settings leading to singularities at low jump rates have also been identified~\cite{benaim22}. However, a detailed, local understanding of regularity is still missing for arbitrary fixed jump rates. In the one-dimensional setting, however, the picture is much clearer. On intervals where the vector fields driving the deterministic dynamics are $C^{r+1}$ and do not vanish, the invariant density has been shown to be $C^r$ in~\cite{bakhtin15}. Moreover, the same paper characterizes the asymptotic behavior of the density near points where a single vector field vanishes, without using an explicit formula for the invariant measure. In particular, this behavior, which depends only on the jump rates and the derivative of the vanishing vector field, determines the continuity or divergence at such points. Local boundedness of the density is studied in \cite{balazs11} in a similar setting. These results provide a framework for characterizing shape transition even when the invariant measure cannot be obtained explicitly.

In this article, we characterize the shape transitions of two RTP systems: a three-velocity model within the power-law potential $\frac{a}{p+1}|x|^{p+1}$ and a six-velocity model within the harmonic potential. These models were identified in~\cite{basu20} as natural extensions, but their shape transitions appeared intractable due to the difficulty of explicitly computing their invariant measure.  Tackling these models leads us to consider the unexplored scenario where multiple vector fields vanish simultaneously. This reveals a rich behavior arising from the interplay between the different vanishing vector fields. Finally, continuing the systematic use of the generator to investigate the invariant measure of RTPs~\cite{hahn23}, we show that to ensure that the densities are~$C^r$, the vector fields need only be~$C^{r}$ (instead of $C^{r+1}$). This is the optimal regularity assumption. Furthermore, this last result accommodates position-dependent jump rates~\cite{calvez14,singh20,jain24} and resetting mechanisms~\cite{evans18,santra20,bressloff20}, which naturally arise in applications.

The article is organized as follows. Section~\ref{sec:mathematical_setup} introduces the 1D PDMP framework, including 1D dynamical systems with random switching, and exemplifies how it can be used to model RTPs. Section~\ref{sec:main_results} presents the main results. Section~\ref{sec:regularity_noncritical_intervals} concerns the regularity of the invariant measure's density on intervals where no vector field vanishes. Section~\ref{sec:continuity_critical_point} examines the continuity of the density at points where multiple vector fields vanish. Finally, Section~\ref{sec:applications} combines the findings of the previous sections with preexisting results to analyze the shape transitions of the three-velocity power-law-potential and the six-velocity harmonic-potential model.

\section{Run-and-tumble particles as PDMPs}\label{sec:mathematical_setup}

Dynamical systems with random switching~\cite{bakhtin12,bakhtin15} and, more generally, piecewise-deterministic Markov processes (PDMPs)~\cite{davis93,malrieu16} combine deterministic motion and discrete random jumps. They arise naturally in a wide range of applications, including neuroscience~\cite{pakdaman10,duarte19}, biology~\cite{zeiser08,rudnicki17,lin18} and sampling~\cite{bouchardcote18,fearnhead18,michel20}. However, they have received less attention than classical diffusion processes in the mathematical literature. They also constitute the natural mathematical model for run-and-tumble particles (RTPs)~\cite{hahn23}, which alternate between constant-velocity motion (deterministic dynamics) and stochastic reorientation (random velocity jumps). In this section, we describe the mathematical framework which will allow us to study the shape transition of general RTPs. We start by introducing the subclass of one-dimensional PDMPs that will be used throughout this article.

\begin{Def}[Local characteristics]
Consider a finite index set $\Sigma$ as well as
\begin{itemize}
	\item a family $(v_\sigma)_{\sigma \in \Sigma}$ of locally Lipschitz vector fields $v_\sigma : \mathbb R \to \mathbb R$ s.t.~for all $y_0 \in \mathbb R$ the ODE
	$$
	\partial_t y(t) = v_\sigma(y(t)) \text{ with initial condition } y(0) = y_0
	$$
	can be solved for $t \ge 0$,
	\item a family $(\lambda_{\sigma\tilde\sigma})_{\sigma, \tilde\sigma \in \Sigma}$ of bounded measurable functions $\lambda_{\sigma\tilde\sigma} : \mathbb R \to \mathbb R$ such that \mbox{$\sum_{\tilde \sigma \in \Sigma} \lambda_{\sigma\tilde\sigma} = 0$} for all $\sigma \in \Sigma$ and $\lambda_{\sigma\tilde\sigma} \ge 0$ for all $\sigma \ne \tilde\sigma$,
	\item a family $(\lambda\R_\sigma)_{\sigma\in\Sigma}$ of bounded measurable functions $\lambda\R_\sigma : \mathbb R \to \mathbb R_+$,
	\item a family $\left(Q\R_{(x, \sigma)}\right)_{(x, \sigma) \in \mathbb R \times \Sigma}$ of probability measures on $\mathbb R \times \Sigma$ such that $(x, \sigma) \mapsto Q\R_{(x, \sigma)}(A)$ is measurable for all measurable sets $A$.
\end{itemize}
Also define the total jump rate $\lambda_\sigma := \sum_{\tilde\sigma \ne \sigma} \lambda_{\sigma\tilde\sigma} = -\lambda_{\sigma\sigma}$ and let $(\phi^\sigma_t)$ be the flow induced by $v_\sigma$.
\end{Def}

We are interested in the stochastic process $X_t = (x_t, \sigma_t)$ taking its values in $E = \mathbb R \times \Sigma$, where $x_t$ follows the differential equation
\begin{equation*}
\partial_t x = v_{\sigma_t}(x)
\end{equation*}
and with rate $\lambda_{\sigma_t}(x_t)$ the index $\sigma_t$ jumps to a new state distributed according to
$$
	\sum_{\tilde\sigma \ne \sigma_t} \frac{\lambda_{\sigma_t\tilde \sigma}(x_t)}{\lambda_{\sigma_t}(x_t)} \delta_{\tilde\sigma}.
$$

Furthermore, the couple $(x_t, \sigma_t)$ simultaneously jumps to a new position distributed according to $Q\R_{(x_t, \sigma_t)}$ with rate $\lambda\R_{\sigma_t}(x_t)$. In the context of RTPs, we think of the $x$ as the position and of $\sigma$ as the velocity. In this setting, the first kind of jump corresponds to a jump of the particle's velocity, while the second kind of jump is a position resetting with possible velocity randomization. The construction of the process is made precise by the following definition.

\begin{Def}[One-dimensional piecewise-deterministic Markov process]\label{def:1d_pdmp}
	Let the initial state $(x, \sigma) \in E$ be given and set $(\theta_0, \xi_0, \varsigma_0) = (0, x, \sigma)$. For $n \ge 0$, recursively define the sequence of random variables $(\theta_n, \xi_n, \varsigma_n)_{n \in \mathbb N}$ as follows
	\begin{itemize}
		\item $\theta_{n+1}$ has survivor function
		\begin{equation}\label{eq:survivor_function}
			\mathbb P(\theta_{n+1} > t) = \exp\left( -\int_0^t (\lambda_{\varsigma_n}+ \lambda\R_{\varsigma_n})(\phi^{\varsigma_n}_s(\xi_n)) ds \right),
		\end{equation}
		\item the couple $(\xi_{n+1}, \varsigma_{n+1})$ has distribution
		$$
			\frac{\lambda\R_{\varsigma_n}(\Xi_n) Q\R_{(\Xi_n,{\varsigma_n}) }+ \sum_{\tilde\sigma\ne{\varsigma_n}} \lambda_{{\varsigma_n}\tilde\sigma}(\Xi_n) \delta_{(\Xi_n, \tilde{\sigma})}}{\lambda\R_{\varsigma_{n}}(\Xi_n) +  \lambda_{{\varsigma_n}}(\Xi_n)} \text{ with } \Xi_n = \phi^{\varsigma_n}_{\theta_{n+1}}(\xi_n),
		$$
	\end{itemize}
	and $(\theta_{n+1}, \xi_{n+1}, \varsigma_{n+1})$ is conditionally independent of $(\theta_k, \xi_k, \varsigma_k)_{k \le n - 1}$ and $\theta_n$ given $(\xi_n, \varsigma_n)$. Finally set $T_n = \sum_{k = 0}^n \theta_k$ and
	$$
		X_t = (\phi^{\varsigma_n}_{t - T_n}(\xi_n), \varsigma_n) \text{ for } t \in [T_n, T_{n+1}).
	$$	
\end{Def}

For the convenience of the reader, we recall that the generator of a Markov process $X_t$ is an unbounded operator $\mathcal L$, which describes the infinitesimal rate of change of observables, i.e.~$\frac{d}{dt} \mathbb E[f(X_t)] = 	\mathbb E[\mathcal L f(X_t)]$. See, e.g., \cite[Def.~14.15]{davis93} for a rigorous definition. The generator constitutes the main computational tool in the study of Markov processes. In particular, it can be used to compute their invariant measure. The following proposition, which directly follows from~\cite[Th.~26.14]{davis93} and~\cite[Th.~25.5]{davis93}, characterizes the generator of 1D PDMPs.

\begin{Prop}[Extended generator]\label{prop:extended_generator} The process $X_t$ is a homogeneous strong Markov process. A bounded measurable function $f : \mathbb R \to \mathbb R$ is in the domain $D(\mathcal L)$ of its extended generator $\mathcal L$ if and only if
$$
t \mapsto f(\phi^\sigma_t(x), \sigma) \text{ is absolutely continuous (see~\cite[Sec.~11.3]{davis93}) for all } (x, \sigma) \in E
$$
and in that case
$$
\mathcal L f(x, \sigma) = \underbrace{v_\sigma(x) \partial_x f(x, \sigma)}_\text{\normalfont determ.~motion} + \underbrace{\sum_{\tilde \sigma \in \Sigma} \lambda_{\sigma\tilde\sigma}(x) f(x, \tilde \sigma)}_\text{\normalfont jumps in $\sigma$ only} + \underbrace{\lambda\R_\sigma(x) \left( Q\R_{(x, \sigma)}(f) - f(x, \sigma) \right)}_\text{\normalfont joint jumps in $x$ and $\sigma$}.
$$
\end{Prop}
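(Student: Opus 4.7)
The plan is to reduce the statement to a direct application of \cite[Th.~25.5]{davis93} (Markov and strong Markov property) together with \cite[Th.~26.14]{davis93} (characterization of the extended generator). Davis's framework uses a single jump rate and a single post-jump kernel, so the first step is to package the two jump mechanisms of Definition~\ref{def:1d_pdmp} into these objects. Setting
$$
\Lambda(x,\sigma) := \lambda_\sigma(x) + \lambda\R_\sigma(x), \qquad Q_{(x,\sigma)} := \frac{\lambda\R_\sigma(x)\, Q\R_{(x,\sigma)} + \sum_{\tilde\sigma \ne \sigma} \lambda_{\sigma\tilde\sigma}(x)\,\delta_{(x,\tilde\sigma)}}{\Lambda(x,\sigma)}
$$
(with an arbitrary convention on the null set $\{\Lambda = 0\}$), a short thinning/superposition argument for the inhomogeneous Poisson clocks governing the two jump families shows that the recursive construction of Definition~\ref{def:1d_pdmp} agrees in law with Davis's canonical construction having local characteristics $(v_\sigma, \Lambda, Q)$.

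Next I would verify the standing hypotheses of \cite{davis93}. The state space $E = \mathbb{R} \times \Sigma$ has no boundary, so the active boundary set $\Gamma^*$ is empty and the associated boundary jump rate and kernel play no role. Each $v_\sigma$ is locally Lipschitz and the ODE is solvable for all $t \ge 0$, so the flow $\phi^\sigma_t$ is well defined. Both $\Lambda$ and $Q$ are measurable, and boundedness of $\Lambda$ ensures that the counting process of jumps is stochastically dominated by a standard Poisson process, giving $T_n \to \infty$ almost surely and thus Davis's non-explosion condition. With these hypotheses in hand, \cite[Th.~25.5]{davis93} yields that $X_t$ is a homogeneous strong Markov process.

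Applying \cite[Th.~26.14]{davis93} then identifies $D(\mathcal L)$ as the set of bounded measurable functions $f$ such that $t \mapsto f(\phi^\sigma_t(x), \sigma)$ is absolutely continuous for every $(x,\sigma) \in E$ (the boundary condition in Davis's statement is vacuous since $\Gamma^* = \emptyset$) and provides
$$
\mathcal L f(x,\sigma) = v_\sigma(x)\,\partial_x f(x,\sigma) + \Lambda(x,\sigma)\bigl(Q_{(x,\sigma)}(f) - f(x,\sigma)\bigr).
$$
Substituting the explicit expressions for $\Lambda$ and $Q$, and using $\sum_{\tilde\sigma \in \Sigma} \lambda_{\sigma\tilde\sigma} = 0$ to absorb the $-\lambda_\sigma(x) f(x,\sigma)$ contribution into the sum over $\tilde\sigma \in \Sigma$, recovers the three-term decomposition stated in the proposition.

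The only step that is not routine bookkeeping is the verification that the two constructions coincide in law; everything else is a direct translation of Davis's theorems to our setting. I do not anticipate any genuine obstacle, but that identification is where the slight care is needed, because Davis packages everything into a single marked point process whereas Definition~\ref{def:1d_pdmp} keeps the velocity-switching and resetting clocks conceptually separate.
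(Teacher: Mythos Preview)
Your approach is exactly the paper's: the paper states that the proposition ``directly follows from \cite[Th.~26.14]{davis93} and \cite[Th.~25.5]{davis93}'' and gives no further argument, so your reduction to Davis's framework via the packaged rate $\Lambda$ and kernel $Q$ is precisely what is intended.

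There is one genuine subtlety you overlooked. Davis's standing hypotheses require that the post-jump kernel place no mass on the current state, i.e.\ $Q_{(x,\sigma)}(\{(x,\sigma)\}) = 0$. With your packaging this fails whenever $\lambda\R_\sigma(x) > 0$ and $Q\R_{(x,\sigma)}(\{(x,\sigma)\}) > 0$, which the paper explicitly allows (see the remark immediately following the proposition). The fix the paper points to is the minimal-process construction of \cite[Sec.~4]{durmus21}: one removes the fictitious self-jumps by thinning, obtaining an equivalent process whose kernel does satisfy Davis's condition, and then checks that the generator formula is unaffected since the self-jump term $Q\R_{(x,\sigma)}(\{(x,\sigma)\})\bigl(f(x,\sigma)-f(x,\sigma)\bigr)$ vanishes anyway. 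This is the point where ``slight care is needed,'' not the superposition of the two clocks, which is indeed routine.
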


\begin{Rem}
	Unlike~\cite{davis93}, we do not assume that $Q\R_{(x, \sigma)}(\{(x, \sigma)\}) = 0$ for all $(x, \sigma) \in E$. However, our framework can be reconciled with that of~\cite{davis93} by using a construction analogous to the minimal process defined in~\cite[Sec.~4]{durmus21}.
\end{Rem}

Except in Subsection~\ref{subsec:main_results_regularity_noncritical_intervals} and Section~\ref{sec:regularity_noncritical_intervals}, we will make the following simplifying assumption. 

\begin{Ass}\label{ass:constant_jump_rates_no_resetting}
	The $\lambda_{\sigma\tilde\sigma}$ are constant and irreducible and $\lambda\R_\sigma = 0$ (i.e.~no resetting).
\end{Ass}

We conclude this section by exemplifying how PDMPs can be used to model RTPs~\cite{hahn23}. The main example is given by two RTPs interacting through an attractive potential $V$~\cite{ledoussal21,hahn24}. The particles are described by their position $x_1, x_2 \in \mathbb R$ and their velocity $\sigma_1, \sigma_2 \in \mathbb R$. The positions follow the ODEs
$$
	\partial_t x_1 = f(x_1 - x_2) + v \sigma_1, \qquad \partial_t x_2 = f(x_2 - x_1) + v \sigma_2,
$$
where $f = -V'$ is the inter-particle force and satisfies $f(-x) = -f(x)$. The velocities $\sigma_1, \sigma_2$ are independent Markov jump processes. In the case of bacteria and algae modeled by run-and-tumble particles, the reorientation occurs on a significantly shorter timescale than their directed runs~\cite{berg72}. Therefore, reorientation is often treated as instantaneous, resulting in the transition rates of Figure~\ref{fig:instantaneous_tumble_single_particle_velocity_transition_rates} for the $\sigma_i$. More refined models~\cite{slowman17,guillin24,hahn24,sun25} incorporate an additional $0$-velocity state to account for the non-motile phase during reorientation, leading to the rates of Figure~\ref{fig:finite_tumble_single_particle_velocity_transition_rates}.

\begin{figure}[ht!]
	\centering
	\newcommand{\MyNodeDistance}{1.6cm} 
	\newcommand{\MyNodeSize}{0.9cm}
	\newcommand{\MyFont}{\footnotesize}
	
	\begin{subfigure}[t]{0.45\textwidth}
		\centering
		\begin{tikzpicture}[
		>={Stealth[length=3mm, width=2mm]},
		every edge/.style={thick, draw=blue!70!black},
		node distance=\MyNodeDistance,
		main/.style={
			draw=black!80,
			fill=blue!10,
			circle,
			minimum size=\MyNodeSize,
			font=\MyFont,
			drop shadow,
			minimum size=.8cm
		}
		]
		\node[main] (I+) {$-1$};
		\node[main] (I-) [right=of I+] {$+1$};
		\path[->]
		(I+) edge[bend left=30] node[fill=white, inner sep=2pt] {$\omega$} (I-)
		(I-) edge[bend left=30] node[fill=white, inner sep=2pt] {$\omega$} (I+);
		\end{tikzpicture}
		\caption{Instantaneous tumble}
		\label{fig:instantaneous_tumble_single_particle_velocity_transition_rates}
	\end{subfigure}%
	\hspace{0.02\textwidth}
	\begin{subfigure}[t]{0.45\textwidth}
		\centering
		\begin{tikzpicture}[
		>={Stealth[length=3mm, width=2mm]},
		every edge/.style={thick, draw=red!70!black},
		node distance=\MyNodeDistance,
		main/.style={
			draw=black!80,
			fill=red!10,
			circle,
			minimum size=\MyNodeSize,
			font=\MyFont,
			drop shadow,
			minimum size=.8cm
		}
		]
		\node[main] (F+) {$-1$};
		\node[main] (F0) [right=of F+] {$0$};
		\node[main] (F-) [right=of F0] {$+1$};
		\path[->]
		(F+) edge[bend left=25] node[fill=white, inner sep=2pt] {$\alpha$} (F0)
		(F0) edge[bend left=25] node[fill=white, inner sep=2pt] {$\tfrac{\beta}{2}$} (F+)
		(F0) edge[bend right=25] node[fill=white, inner sep=2pt] {$\tfrac{\beta}{2}$} (F-)
		(F-) edge[bend right=25] node[fill=white, inner sep=2pt] {$\alpha$} (F0);
		\end{tikzpicture}
		\caption{Finite tumble}
		\label{fig:finite_tumble_single_particle_velocity_transition_rates}
	\end{subfigure}
	\caption{Markov jump process followed by the single-particle velocities}
	\label{fig:single_particle_velocity_transition_rates}
\end{figure}
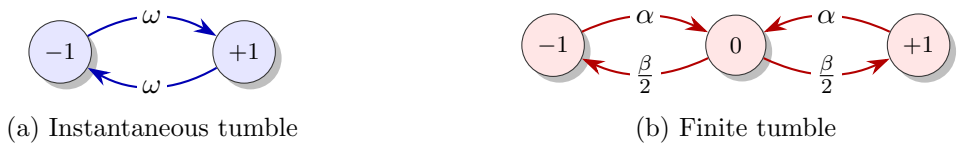

The process $(x_1, x_2, \sigma_1, \sigma_2)$ does not reach a steady state so the particle separation $x = x_2 - x_1$ and relative velocity $\sigma = \sigma_2 - \sigma_1$ are considered instead. The particle separation $x$ obeys the differential equation
$$
	\partial_t x = 2f(x) + v\sigma.
$$
If the $\sigma_i$ follow Figure~\ref{fig:instantaneous_tumble_single_particle_velocity_transition_rates} then $\sigma = \sigma_2 - \sigma_1$ is Markov jump process following Figure~\ref{fig:instantaneous_tumble_relative_particle_velocity_transition_rates}. However, if the $\sigma_i$ follow Figure~\ref{fig:finite_tumble_single_particle_velocity_transition_rates} then $\sigma_2 - \sigma_1$ no longer has the Markov property. This can be fixed by splitting $\sigma_2 - \sigma_1 = 0$ into two different states $0_\pm$ and $0_0$ corresponding to $\sigma_1 = \sigma_2 = \pm 1$ and $\sigma_1 = \sigma_2 = 0$ respectively. The resulting transition rates for $\sigma$ are shown in Figure~\ref{fig:finite_tumble_relative_particle_velocity_transition_rates}.

\begin{figure}[H]
	\centering
	\newcommand{\ND}{1.1cm}       
	\newcommand{\NS}{0.6cm}       
	\newcommand{\FS}{\footnotesize} 
	\newcommand{\TipLen}{2mm}
	\newcommand{\TipWid}{1.5mm}
	
	\begin{subfigure}[t]{0.38\textwidth}
		\centering
		\begin{tikzpicture}[
		>={Stealth[length=\TipLen, width=\TipWid]},
		every edge/.style={thick, draw=blue!70!black},
		node distance=\ND,
		main/.style={
			draw=black!80,
			fill=blue!10,
			circle,
			minimum size=\NS,
			font=\FS,
			drop shadow,
			minimum size=.8cm
		}
		]
		\node[main] (m2) {$-2$};
		\node[main] (zero) [right=of m2] {$0$};
		\node[main] (p2) [right=of zero] {$+2$};
		
		\path[->]
		(zero) edge[bend right=25] node[fill=white, inner sep=1.5pt] {$\omega$} (m2)
		(zero) edge[bend left=25]  node[fill=white, inner sep=1.5pt] {$\omega$} (p2)
		(m2)   edge[bend right=25] node[fill=white, inner sep=1.5pt] {$2\omega$} (zero)
		(p2)   edge[bend left=25]  node[fill=white, inner sep=1.5pt] {$2\omega$} (zero);
		\end{tikzpicture}
		\caption{Instantaneous tumble}
		\label{fig:instantaneous_tumble_relative_particle_velocity_transition_rates}
	\end{subfigure}%
	\begin{subfigure}[t]{0.58\textwidth}
		\centering
		\begin{tikzpicture}[
		>={Stealth[length=\TipLen, width=\TipWid]},
		every edge/.style={thick, draw=red!70!black},
		node distance=\ND,
		main/.style={
			draw=black!80,
			fill=red!10,
			circle,
			minimum size=\NS,
			font=\FS,
			drop shadow,
			minimum size=.8cm
		}
		]
		\node[main] (m2)  {$-2$};
		\node[main] (m1)  [right=of m2]    {$-1$};
		\node[main] (0pm) [above right=of m1] {$0_\pm$};
		\node[main] (00)  [below right=of m1] {$0_0$};
		\node[main] (p1)  [below right=of 0pm] {$+1$};
		\node[main] (p2)  [right=of p1]       {$+2$};
		
		\path[->]
		(m2) edge[bend left=25]  node[fill=white, inner sep=1.5pt] {$2\alpha$} (m1)
		(m1) edge[bend left=25] node[fill=white, inner sep=1.5pt] {$\frac\beta2$}  (m2)
		(p2) edge[bend right=25] node[fill=white, inner sep=1.5pt] {$2\alpha$} (p1)
		(p1) edge[bend right=25]  node[fill=white, inner sep=1.5pt] {$\frac\beta2$}  (p2)
		
		(m1) edge[bend left=25]  node[fill=white, inner sep=1.5pt] {$\tfrac\beta2$} (0pm)
		(m1) edge[bend right=25] node[fill=white, inner sep=1.5pt] {$\alpha$}      (00)
		(p1) edge[bend right=25] node[fill=white, inner sep=1.5pt] {$\tfrac\beta2$} (0pm)
		(p1) edge[bend left=25]  node[fill=white, inner sep=1.5pt] {$\alpha$}      (00)
		
		(0pm) edge[bend left=25]  node[fill=white, inner sep=1.5pt] {$\alpha$}      (m1)
		(0pm) edge[bend right=25] node[fill=white, inner sep=1.5pt] {$\alpha$}      (p1)
		(00)  edge[bend right=25] node[fill=white, inner sep=1.5pt] {$\beta$} (m1)
		(00)  edge[bend left=25]  node[fill=white, inner sep=1.5pt] {$\beta$} (p1)
		
		;
		\end{tikzpicture}
		\caption{Finite tumble}
		\label{fig:finite_tumble_relative_particle_velocity_transition_rates}
	\end{subfigure}
	
	\caption{Markov jump process followed by the relative velocity}
	\label{fig:relative_particle_velocity_transition_rates}
\end{figure}
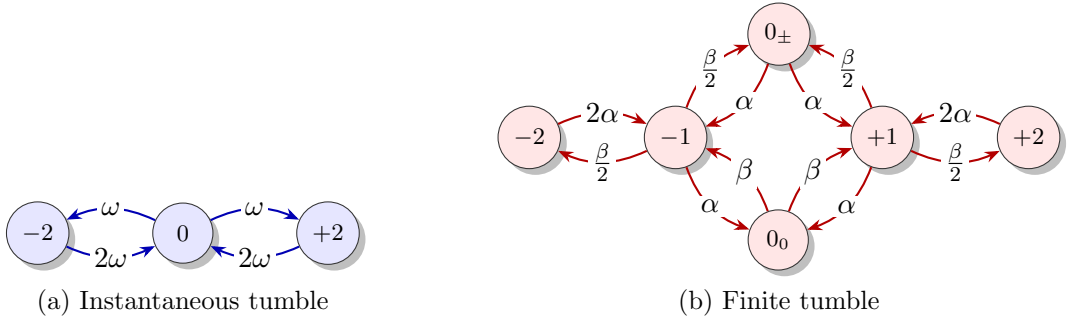

The following two processes will be studied in detail in the present article
\begin{itemize}
	\item the \emph{two-particle instantaneous power-law process} where $f(x) = -a (\sgn x) |x|^p$ with $a > 0$, $p > 1$ and $\sigma$ follows the rates of Figure~\ref{fig:instantaneous_tumble_relative_particle_velocity_transition_rates} (equivalently $V(x) = \frac{a}{p+1}|x|^{p+1}$ and $\sigma_1,\sigma_2$ follow  Figure~\ref{fig:instantaneous_tumble_single_particle_velocity_transition_rates}),
	\item the \emph{two-particle finite harmonic process} where $f(x) = -ax$ with $a > 0$ and $\sigma$ follows the rates of Figure~\ref{fig:finite_tumble_relative_particle_velocity_transition_rates} (equivalently $V(x) = \frac{a}{2}x^2$ and $\sigma_1,\sigma_2$ follow  Figure~\ref{fig:finite_tumble_single_particle_velocity_transition_rates}).
\end{itemize}

\begin{Def}[Two-particle instantaneous power-law process]\label{def:instantaneous_powerlaw_process}
	Let $a, v, \omega > 0$ and $p > 1$. We call \emph{two-particle instantaneous power-law process} the 1D PDMP obtained by taking 
	\begin{align*}
	\Sigma &= \{2, 0, -2\}, & v_\sigma(x) &= -2 a (\sgn x) |x|^p + v \sigma, & \lambda\R_\sigma(x) &= 0,
	\end{align*}
	and 
	$$
		(\lambda_{\sigma\tilde\sigma}(x))_{\sigma, \tilde\sigma \in \Sigma} =  \bordermatrix{ & \tilde \sigma = 2 & \tilde \sigma =  0 & \tilde \sigma =  -2 \cr
			\sigma = 2 & -2  {\omega} & 2  {\omega} & 0 \cr
			\sigma = 0 &{\omega} & -2  {\omega} & {\omega} \cr
			\sigma = -2 & 0 & 2  {\omega} & -2  {\omega}},
	$$
	in Definition~\ref{def:1d_pdmp}. The $Q\R_{(x, \sigma)}$ need not be specified as $\lambda\R_\sigma =0$.
\end{Def}

\begin{Def}[Two-particle finite harmonic process]\label{def:finite_harmonic_process}
	Let $a, v, \alpha, \beta> 0$. We call \emph{two-particle finite harmonic process} the 1D PDMP obtained by taking
	\begin{align*}
\Sigma &= \{2, 1, 0_\pm, 0_0, -1, -2\}, & v_\sigma(x) &= -2a x + v \sigma, & \lambda\R_\sigma(x) &= 0,
\end{align*}
with the convention that $0_\pm \cdot v = 0_0 \cdot v = 0$ and
$$
(\lambda_{\sigma\tilde\sigma}(x))_{\sigma, \tilde\sigma \in \Sigma} = 
\bordermatrix{ & \tilde \sigma = 2 & \tilde \sigma = 1 & \tilde \sigma = 0_\pm & \tilde \sigma = 0_0 & \tilde \sigma = -1 & \tilde \sigma = -2 \cr
	\sigma = 2 & -2  {\alpha} & 2  {\alpha} & 0 & 0 & 0 & 0 \cr
	\sigma =1 & \frac{1}{2}  {\beta} & -{\alpha} - {\beta} & \frac{1}{2}  {\beta} & {\alpha} & 0 & 0 \cr
	\sigma = 0_\pm& 0 & {\alpha} & -2  {\alpha} & 0 & {\alpha} & 0 \cr
	\sigma = 0_0 & 0 & {\beta} & 0 & -2  {\beta} & {\beta} & 0 \cr
	\sigma = -1 & 0 & 0 & \frac{1}{2}  {\beta} & {\alpha} & -{\alpha} - {\beta} & \frac{1}{2}  {\beta} \cr
	\sigma = -2 & 0 & 0 & 0 & 0 & 2  {\alpha} & -2  {\alpha} 
},
$$
in Definition~\ref{def:1d_pdmp}. The $Q\R_{(x, \sigma)}$ need not be specified as $\lambda\R_\sigma =0$.
\end{Def}

In the remainder of this article, the two-particle instantaneous power-law process (resp.~two-particle finite harmonic process) will often simply be called \emph{power-law process} (resp.~\emph{harmonic process}). We end this section with an example displaying joint jumps in $x$ and $\sigma$, and therefore not satisfying Assumption~\ref{ass:constant_jump_rates_no_resetting}. It models a single free RTP which resets its position and randomizes its velocity with rate $r > 0$~\cite{evans18}. Here, $x$ denotes the single particle's position and $\sigma$ its velocity.

\begin{Def}[Single-particle resetting process~\cite{evans18}] Let $v, \omega, r > 0$. We call \emph{single-particle resetting process} the 1D PDMP obtained by taking
\begin{align*}
	\Sigma &= \{1, -1\}, & v_\sigma(x) &= v \sigma, & \lambda\R_\sigma(x) &= r, & Q\R_{(x, \sigma)} &= \delta_0 \otimes \left( \frac12 \delta_{-1} + \frac12 \delta_1 \right),
\end{align*}
and
$$
(\lambda_{\sigma\tilde\sigma}(x))_{\sigma, \tilde\sigma \in \Sigma} = 
\bordermatrix{ & \tilde \sigma = 1 & \tilde \sigma = -1 \cr
	\sigma = 1 & -\omega & \omega \cr
	\sigma = -1 & \omega & -\omega \cr
},
$$
in Definition~\ref{def:1d_pdmp}.
\end{Def}

\section{Main results}\label{sec:main_results}

The invariant measure of run-and-tumble particles subjected to a potential often~\cite{dhar19,basu20,ledoussal21,sun25} admits a density, i.e.~takes the form  $\pi = \sum_{\sigma \in \Sigma} p_\sigma(x) dx \otimes \delta_\sigma$. The~$p_\sigma$, which we refer to as invariant densities in a slight abuse of terminology, are continuous when the tumble rates are high and diverge when they are low. This behavior, called shape transition, helps determine how the system relates to equilibrium. By comparison with the continuous Boltzmann weights characteristic of the equilibrium setting, continuity of the $p_\sigma$ suggests that the system is close to equilibrium, while divergence indicates a strong deviation. This interpretation is supported by the fact that the process is well-approximated by a diffusion in the infinite tumble rate limit. Crucially, in the statistical mechanics literature, shape transition is systematically studied by explicitly computing the invariant measure. This approach, however, seems intractable for general tumbling mechanisms and potentials. This article's main results, which are detailed in the following subsections, all revolve around overcoming this intractability by leveraging and extending results from \cite{bakhtin15}, which do not require detailed knowledge of the invariant measure.

\subsection{Regularity on noncritical intervals}\label{subsec:main_results_regularity_noncritical_intervals}

The only points where the invariant densities can display singularities are those where the vector fields $v_\sigma$ vanish. This is the essence of~\cite[Th.~1]{bakhtin15}, which our first main contribution,  Theorem~\ref{thm:Cr_regularity_noncritical_interval}, significantly extends. Importantly, Theorem~\ref{thm:Cr_regularity_noncritical_interval} does not require Assumption~\ref{ass:constant_jump_rates_no_resetting}. In other words, it applies to processes with position-dependent jump rates~\cite{calvez14,singh20,jain24} and resetting~\cite{evans18,santra20,bressloff20} not covered by previous results but arising in applications. Note that density singularities appear precisely at the configurations that are the most likely in the low tumble rate regime, thus making it important to locate them.

\begin{Def} Let $\mu$ be a measure on $\mathbb R \times \Sigma$ and $I \subset \mathbb R$ an open interval.
\begin{itemize}
	\item  We say that $\mu$ has a density (resp.~is $C^r$) on $I \times \{ \sigma \}$ if there exist $m \in L^1(I)$ (resp.~$m \in C^r(I)$) such that
	$$
	\mu(f) = \int_I f(x, \sigma) m(x) dx
	$$
	for all bounded measurable functions $f : E \to \mathbb R$ vanishing outside of $I \times \{\sigma\}$.
	\item We say that $\mu$ has a density (resp.~is $C^r$) on $I$ if $\mu$ has a density (resp.~is $C^r$) on $I \times \{\sigma\}$ for all $\sigma \in \Sigma$. 
	\item When $\mu$ has a density (resp.~is $C^r$) on $I = \mathbb R$, we simply say that $\mu$ has a density (resp. is~$C^r$). This is equivalent to the existence of $m_\sigma \in L^1(\mathbb R)$ (resp.~$m_\sigma \in C^r(\mathbb R)$) such that
	$$
	\mu = \sum_{\sigma \in \Sigma} m_\sigma(x) dx \otimes \delta_\sigma.
	$$
\end{itemize}	
\end{Def}

The following auxiliary measure captures the regularity of the resetting mechanism.

\begin{Def}\label{def:kappa_pi}
	To every bounded measure $\pi$ on $E$ we associate another measure $\kappa^\pi$ defined by
	$$
	\kappa^\pi(f)= \int_E \lambda\R_\sigma(x) Q\R_{(x, \sigma)}(f) d\pi(x, \sigma)
	$$
	for all bounded measurable $f : E \to \mathbb R$.
\end{Def}

\begin{Thm}\label{thm:Cr_regularity_noncritical_interval}
	Let $\pi$ be an invariant measure, $I$ an open interval and $r \ge 1$. If
	\begin{itemize}
		\item the \( v_\sigma \) do not vanish on $I$ and are \( C^r \) on \( I \),
		\item \( \lambda_{\sigma\tilde\sigma} \) and \( \lambda\R_\sigma \) are \( C^{r-1} \) on $I$,
		\item $\kappa^\pi$ is $C^{r-1}$ on $I$,
	\end{itemize}
	then $\pi$ is $C^r$ on $I$.
\end{Thm}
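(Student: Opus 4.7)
The plan is to convert the invariance identity $\pi(\mathcal L f) = 0$ into a linear first-order distributional system for the restrictions of $\pi$ to $I$, and then to bootstrap regularity, relying on the non-vanishing of the $v_\sigma$ to invert the highest-order term. Evaluating $\pi(\mathcal L f) = 0$ against test functions of the form $f(x, \sigma) = \varphi(x) \mathbf{1}_{\sigma = \sigma_0}$ with $\varphi \in C^\infty_c(I)$ and $\sigma_0 \in \Sigma$, and using Proposition~\ref{prop:extended_generator}, one obtains, after integration by parts, the identity
\begin{equation*}
\partial_x(v_{\sigma_0} T_{\sigma_0}) = \sum_{\sigma \in \Sigma} \lambda_{\sigma\sigma_0} T_\sigma - \lambda\R_{\sigma_0} T_{\sigma_0} + k^\pi_{\sigma_0}
\end{equation*}
in $\mathcal D'(I)$, where $T_\sigma : \varphi \mapsto \pi(\varphi \otimes \mathbf{1}_\sigma)$ is the marginal of $\pi$ and $k^\pi_{\sigma_0} \in C^{r-1}(I)$ is the density of $\kappa^\pi|_{I \times \{\sigma_0\}}$ furnished by the assumption on $\kappa^\pi$. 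The entire argument is carried out at the level of the generator, which is precisely what allows one to assume only $C^r$ regularity of the $v_\sigma$ rather than the $C^{r+1}$ required by the flow-based argument of~\cite{bakhtin15}.

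The first substantive step is to show that $T_{\sigma_0}$ has a density on $I$. The right-hand side of the identity above is a finite Radon measure on $I$: it is a sum of bounded measurable functions multiplied by the finite measures $T_\sigma$, plus the continuous function $k^\pi_{\sigma_0}$. Hence the distributional derivative $\partial_x(v_{\sigma_0} T_{\sigma_0})$ is itself a Radon measure. A measure on $I$ whose distributional derivative is again a measure must be absolutely continuous with a BV density, because any atomic or singular continuous component of the measure would produce a distributional derivative strictly more singular than a measure. Since $v_{\sigma_0}$ is $C^r$ and nowhere zero on $I$, dividing out yields an $L^1_{\mathrm{loc}}$ density $p_{\sigma_0}$ for $T_{\sigma_0}$.

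The second substantive step is a bootstrap. Setting $q_\sigma := v_\sigma p_\sigma$, the distributional system becomes the linear first-order ODE
\begin{equation*}
\partial_x q_\sigma = \sum_{\tilde\sigma \in \Sigma} \frac{\lambda_{\tilde\sigma\sigma}}{v_{\tilde\sigma}} q_{\tilde\sigma} - \frac{\lambda\R_\sigma}{v_\sigma} q_\sigma + k^\pi_\sigma
\end{equation*}
on $I$, with coefficients and inhomogeneity in $C^{r-1}(I)$. From $q_\sigma \in L^1_{\mathrm{loc}}$, the right-hand side is $L^1_{\mathrm{loc}}$, so $q_\sigma$ is absolutely continuous, whence $p_\sigma = q_\sigma/v_\sigma$ is continuous. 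An induction completes the proof: if $p_\sigma \in C^k(I)$ for some $k < r$, the right-hand side lies in $C^{\min(k, r-1)}$, so $q_\sigma \in C^{k+1}$, and dividing by $v_\sigma \in C^r$ gives $p_\sigma \in C^{k+1}$. Iterating, $p_\sigma \in C^r(I)$.

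The delicate point is the measure-to-density argument of the first substantive step. One must simultaneously verify that $v_{\sigma_0} T_{\sigma_0}$ is a measure and that its distributional derivative is a measure, and then invoke the standard fact that this forces absolute continuity. All three regularity hypotheses feed into this: the boundedness of $\lambda_{\sigma\tilde\sigma}$ and $\lambda\R_\sigma$ ensures that the jump and reset terms remain measure-valued, the continuity of $k^\pi_{\sigma_0}$ on $I$ prevents the reset mechanism from injecting distributional singularities of higher order, and the $C^r$ non-vanishing of $v_{\sigma_0}$ allows densities of $v_{\sigma_0} T_{\sigma_0}$ to be transferred unchanged to $T_{\sigma_0}$. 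Once absolute continuity on $I$ is secured, the remaining bootstrap is the routine regularity theory for a linear first-order ODE system.
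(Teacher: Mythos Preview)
Your argument is correct and reaches the same conclusion, but by a genuinely different mechanism from the paper's. Both proofs begin identically: pass from invariance to the distributional Fokker--Planck system and substitute $q_\sigma = v_\sigma \pi_\sigma$ to obtain a linear first-order system with $C^{r-1}$ coefficients and inhomogeneity. The divergence is in how regularity is extracted from this system. You proceed in two stages: first, a measure-theoretic observation (a distribution whose derivative is a Radon measure is a BV function, so each $q_\sigma$ has an $L^1_{\mathrm{loc}}$ density), and second, a standard bootstrap induction through the regularity ladder $L^1_{\mathrm{loc}} \to C^0 \to \cdots \to C^r$. The paper instead handles both existence and regularity in a single stroke via an integrating-factor argument (its Lemma~\ref{lem:regularity_linear_ODE}): it introduces the fundamental matrix $T$ solving $T' = -TA$, computes that $\sum_{\tilde\sigma} T_{\sigma\tilde\sigma}\mu_{\tilde\sigma}$ has distributional derivative equal to the $C^{r-1}$ function $\sum_{\tilde\sigma} T_{\sigma\tilde\sigma} b_{\tilde\sigma}$, and concludes directly that this combination---and hence, by invertibility of $T$, each $\mu_\sigma$---is $C^r$. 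The integrating-factor route is shorter and avoids both the BV lemma and the induction; your bootstrap is more elementary, requires no matrix ODE theory, and makes the gain of one derivative at each step completely transparent.
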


\begin{Rem}
	One has that $\kappa^\pi$ is $C^{r-1}$ on $I$ for all measures $\pi$ under either of the following conditions
	\begin{itemize}
		\item $\lambda_\sigma\R(x) Q\R_{(x,\sigma)}(I \times \Sigma) = 0$ for all $(x, \sigma) \in E$ (i.e.~no resetting to $I \times \Sigma$),
		\item there exist a finite number of measures $Q\R_1, \ldots, Q\R_N$, all of which are $C^{r-1}$ on $I$, such that $Q\R_{(x, \sigma)} \in \{ Q\R_1, \ldots, Q\R_N \}$ for all $(x, \sigma) \in E$.
	\end{itemize}
\end{Rem}

Note that we obtain an additional derivative compared to~\cite[Th.~1]{bakhtin15}, which requires \( v_\sigma \in C^{r+1}(I) \) to ensure \( p_\sigma \in C^r(I)\). Considering examples such as~\cite[Prop.~3.12]{faggionato09}, where the invariant measure is explicit, shows that Theorem~\ref{thm:Cr_regularity_noncritical_interval} captures the optimal regularity assumption on $v_\sigma$. The next theorem implies the continuity of the invariant densities even on intervals where the $\lambda_{\sigma\tilde\sigma}$ are discontinuous, as in~\cite{fontbona12,calvez14}. This is coherent with the fact that $\lambda_{\sigma\tilde\sigma}$ need only be $C^{k-1}$ in Theorem~\ref{thm:Cr_regularity_noncritical_interval}. Heuristically, this regularizing effect occurs because the jump rates are integrated along the flow of the ODEs, as reflected, e.g., in the survivor function~\eqref{eq:survivor_function}. 

\begin{Thm}\label{thm:continuity_noncritical_interval}
	If the invariant measure $\pi$ has a density on the open interval $I$ and $\sigma_0 \in \Sigma$ is~s.t.
	\begin{itemize}
		\item the vector field $v_{\sigma_0}$ does not vanish on $I$,
		\item $\kappa^\pi$ has a density on $I \times \{\sigma_0\}$,
	\end{itemize}
	then $\pi$ is $C^0$ on $I \times \{\sigma_0\}$.
\end{Thm}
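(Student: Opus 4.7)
The plan is to derive a distributional equation for the density $p_{\sigma_0}$ of $\pi$ on $I \times \{\sigma_0\}$ from the invariance identity $\pi(\mathcal L f) = 0$, and then use the non-vanishing of $v_{\sigma_0}$ to extract continuity. The starting point is to apply the invariance identity to test functions of the form $f(x, \sigma) = \varphi(x) \mathbf 1_{\sigma = \sigma_0}$, with $\varphi \in C^\infty_c(I)$. By Proposition~\ref{prop:extended_generator}, such an $f$ lies in $D(\mathcal L)$ since $t \mapsto \varphi(\phi^{\sigma_0}_t(x))$ is absolutely continuous (the flow is $C^1$ in $t$ because $v_{\sigma_0}$ is locally Lipschitz) and the map is identically zero on $\sigma \neq \sigma_0$. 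Boundedness of $f$ and of all jump rates makes $\mathcal L f$ $\pi$-integrable, so $\pi(\mathcal L f) = 0$.

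Expanding $\mathcal L f$ using Proposition~\ref{prop:extended_generator}, integrating against $\pi$, and using the densities $p_\sigma$ of $\pi$ on $I$ (which exist by hypothesis) and the density $k_{\sigma_0}$ of $\kappa^\pi$ on $I \times \{\sigma_0\}$ (also by hypothesis), the identity $\pi(\mathcal L f) = 0$ becomes
$$
\int_I v_{\sigma_0}(x) \varphi'(x) p_{\sigma_0}(x)\, dx + \int_I \varphi(x) \Bigl[ \sum_{\sigma \in \Sigma} \lambda_{\sigma\sigma_0}(x) p_\sigma(x) - \lambda\R_{\sigma_0}(x) p_{\sigma_0}(x) + k_{\sigma_0}(x) \Bigr] dx = 0.
$$
Since this holds for every $\varphi \in C^\infty_c(I)$, it encodes the distributional equation
$$
\partial_x \bigl( v_{\sigma_0} p_{\sigma_0} \bigr) = \sum_{\sigma \in \Sigma} \lambda_{\sigma\sigma_0} p_\sigma - \lambda\R_{\sigma_0} p_{\sigma_0} + k_{\sigma_0} \qquad \text{on } I,
$$
whose right-hand side is in $L^1_{\mathrm{loc}}(I)$ because $\Sigma$ is finite, the jump rates are bounded measurable, and the $p_\sigma$ and $k_{\sigma_0}$ are in $L^1_{\mathrm{loc}}(I)$.

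Consequently $v_{\sigma_0} p_{\sigma_0}$ lies in $W^{1,1}_{\mathrm{loc}}(I)$, so it admits an absolutely continuous representative on $I$. Because $v_{\sigma_0}$ is continuous (locally Lipschitz) and does not vanish on $I$, the reciprocal $1/v_{\sigma_0}$ is continuous on $I$, and therefore
$$
p_{\sigma_0} = \frac{v_{\sigma_0} p_{\sigma_0}}{v_{\sigma_0}}
$$
admits a continuous representative on $I$, which is precisely what it means for $\pi$ to be $C^0$ on $I \times \{\sigma_0\}$.

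The main obstacle is bookkeeping rather than conceptual: one has to check carefully that the test functions $f$ above belong to the extended-generator domain and that the invariance $\pi(\mathcal L f)=0$ can indeed be applied to them (i.e., that the jump and resetting contributions are integrable against $\pi$), so that the distributional identity above is genuinely valid on all of $I$. Once this is set up, the regularity conclusion follows from the standard fact that a function with $L^1_{\mathrm{loc}}$ distributional derivative is absolutely continuous, combined with dividing by the continuous, non-vanishing coefficient $v_{\sigma_0}$.
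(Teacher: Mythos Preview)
Your proof is correct and follows essentially the same route as the paper: the paper packages the invariance identity $\pi(\mathcal L f)=0$ for test functions of the form $f(x,\sigma)=f_\sigma(x)$ into a separate Fokker--Planck lemma (Lemma~\ref{lem:fokker_planck}), and then applies it exactly as you do---the distributional derivative of $v_{\sigma_0}p_{\sigma_0}$ lies in $L^1(I)$, hence $v_{\sigma_0}p_{\sigma_0}$ is continuous, and dividing by the continuous non-vanishing $v_{\sigma_0}$ gives the result. The only difference is that you re-derive the needed special case of the lemma inline rather than citing it.
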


\begin{Rem} Under either of the following conditions, $\kappa^\pi$ has a density on $I \times \{ \sigma_0 \}$ for all measures~$\pi$
	\begin{itemize}
		\item $\lambda\R_\sigma(x)Q\R_{(x,\sigma)}(I \times \{\sigma_0\}) = 0$ for all $(x, \sigma) \in E$ (i.e.~no resetting to $I \times \{ \sigma_0\}$),
		\item there exist a finite number of measures $Q\R_1, \ldots, Q\R_N$, all of which have a density on $I \times \{\sigma_0\}$, such that $Q\R_{(x, \sigma)} \in \{ Q\R_1, \ldots, Q\R_N \}$ for all $(x, \sigma) \in E$.
	\end{itemize}
\end{Rem}

\begin{Rem}
Note that in Theorem~\ref{thm:Cr_regularity_noncritical_interval}, the density of $\pi$ is a consequence, whereas in Theorem~\ref{thm:continuity_noncritical_interval}, it is an assumption. This assumption can, however, be verified a priori using the techniques of~\cite{bakhtin12,benaim15}.
\end{Rem}

The proofs we provide for Theorems~\ref{thm:Cr_regularity_noncritical_interval} and~\ref{thm:continuity_noncritical_interval} are short and rooted in the theory of differential equations and distributions rather than probability. They continue the systematic use of the generator to investigate the invariant measure of RTPs~\cite{hahn23}.

\subsection{Continuity at simply critical points}\label{subsec:general_approach_shape_transition}

Given that invariant densities can only develop singularities at points where the vector fields $v_\sigma $ vanish, it is natural to investigate the conditions under which such divergences actually occur. Following~\cite{bakhtin15,balazs11}, we address this question working under Assumption~\ref{ass:constant_jump_rates_no_resetting} as well as the following Assumption~\ref{ass:structure_assumption_single_vanishing_vector_fields} for the rest of Subsection~\ref{subsec:general_approach_shape_transition}.

\begin{Ass}\label{ass:structure_assumption_single_vanishing_vector_fields} 
The state $(x_0, \sigma_0) \in E$ is such that $v_{\sigma_0}(x_0) = 0$ and $v_\sigma(x_0) \ne 0$ for $\sigma \ne \sigma_0$.
\end{Ass}

Heuristically, if $v_{\sigma_0}(x_0) = 0$ and $v_{\sigma_0}'(x_0) < 0$ then the deterministic dynamics locally converges to $x_0$ at exponential speed, leading to a stark accumulation of mass at that point for the invariant measure. This is balanced by the stochastic jumps, which change the vector field with rate $\lambda_{\sigma_0}$, stopping the deterministic contraction. The presence or absence of a singularity at $x_0$ is determined by this competition between the contracting deterministic dynamics and the stochastic jumps. By~\cite[Th.~3]{bakhtin15} one has that $p_{\sigma_0}$
\begin{itemize}
	\item diverges at $x_0$ if $v_{\sigma_0}'(x_0) < 0$ and $\lambda_{\sigma_0} < -v_{\sigma_0}'(x_0)$,
	\item is continuous at $x_0$ if $v_{\sigma_0}'(x_0) < 0$ and $\lambda_{\sigma_0} > -v_{\sigma_0}'(x_0)$,
	\item is continuous at $x_0$ if $v_{\sigma_0}'(x_0) > 0$, regardless of the jump rate $\lambda_{\sigma_0}$.
\end{itemize}
The density $p_\sigma$ is continuous at $x_0$ for $\sigma \ne \sigma_0$ by~\cite[Rem.~6]{bakhtin15}. In fact, \cite[Th.~3]{bakhtin15} gives the precise asymptotic behavior of $p_{\sigma_0}$ at $x_0$. The result when $v_{\sigma_0}'(x_0) > 0$ is intuitive because, in that case, $x_0$ is repelling instead of attracting, so there is no accumulation of probability mass at that point. It follows from~\cite[Th.~1]{balazs11} that if
$$
	|v_{\sigma_0}(x)| \sim C|x - x_0|^\nu \text{ for } C >0, \nu > 1 \text{ when } x \to x_0
$$
then the density is locally bounded at $x_0$ irrespective of $\lambda_{\sigma_0}$. In fact, one even expects the density to be continuous in this case. Indeed, in this setting, the convergence to $x_0$ is sub-exponential and thus cannot compete with the exponentially distributed stochastic jumps. Although this falls outside the scope of the present article, one expects that if $\nu < 1$ and $x_0$ is attracting, then the deterministic dynamics reaches $x_0$ in finite time, causing the invariant measure to have an atom at $(x_0, \sigma_0)$.

Our second main result is the detailed picture of the shape transition undergone by the power-law process and the harmonic process. Such processes were identified in~\cite{basu20} as natural candidates for further investigation, but the explicit computation of their invariant measure seems out of reach. The study of their shape transition has therefore eluded previous approaches, emphasizing the value of the local perspective described above.

\begin{Thm}[Shape transition of the power-law process]\label{thm:instantaneous_powerlaw_process}
	The unique invariant measure of the two-particle instantaneous power-law process has the form
	$$
		\pi = \sum_{\sigma \in\Sigma} p_\sigma(x) dx \otimes \delta_\sigma
	$$
	where $p_\sigma \in L^1(\mathbb R)$ for $\sigma \in \Sigma$. All $p_\sigma$ vanish outside $[x_-, x_+]$ where $x_\pm = \pm (v/a)^{\frac{1}{p}}$. Furthermore
	\begin{itemize}
		\item $p_{\pm2} \in C^0(\mathbb R \setminus \{ x_\pm \})$ and $p_0 \in C^0(\mathbb R)$,
		\item $p_{\pm 2}$ is continuous at $x_\pm$ if and only if $\omega > a px^{p-1}$.
	\end{itemize}
\end{Thm}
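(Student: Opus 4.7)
The plan is to combine Theorem~\ref{thm:Cr_regularity_noncritical_interval} with the local analyses of~\cite{bakhtin15} and~\cite{balazs11}, after carefully identifying the points where the vector fields $v_\sigma$ vanish. Observe that $v_2$ vanishes only at $x_+$, $v_{-2}$ only at $x_-$, and $v_0$ only at $0$, while for $|x| > x_+$ all three vector fields point strictly inward. Hence $[x_-, x_+] \times \Sigma$ is invariant under the dynamics. Uniqueness of the invariant measure on this compact set, together with its support in $[x_-, x_+] \times \Sigma$, follows from a standard accessibility plus Doeblin-type argument, while existence of a density can be obtained either from the Hörmander-bracket criterion of~\cite{bakhtin12, benaim15} or directly from Theorem~\ref{thm:Cr_regularity_noncritical_interval}. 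This produces $\pi = \sum_\sigma p_\sigma(x)\,dx \otimes \delta_\sigma$ with $p_\sigma \in L^1(\mathbb R)$ vanishing outside $[x_-, x_+]$.

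On the open intervals $(x_-, 0)$ and $(0, x_+)$ no vector field vanishes and all rates are constant, so Theorem~\ref{thm:Cr_regularity_noncritical_interval} gives $p_\sigma \in C^\infty$ there. Each point of $\{x_-, 0, x_+\}$ satisfies Assumption~\ref{ass:structure_assumption_single_vanishing_vector_fields}, since exactly one of the three vector fields vanishes at it. By~\cite[Rem.~6]{bakhtin15}, the densities associated to the \emph{nonvanishing} vector fields remain continuous through such a point: $p_0$ and $p_{\mp 2}$ are continuous at $x_\pm$, and $p_{\pm 2}$ are continuous at $0$. This already establishes $p_{\pm 2} \in C^0(\mathbb R \setminus \{x_\pm\})$ and reduces the remaining work to the continuity of $p_0$ at $0$ and the shape-transition criterion at $x_\pm$.

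For the shape transition at $x_+$, I compute $v_2'(x_+) = -2a p x_+^{p-1} < 0$ while the total jump rate out of $\sigma = 2$ is $\lambda_2 = 2\omega$. Applying~\cite[Th.~3]{bakhtin15} directly, $p_2$ is continuous at $x_+$ if and only if $\lambda_2 > -v_2'(x_+)$, i.e., $\omega > a p x_+^{p-1}$. The $x \mapsto -x$, $\sigma \mapsto -\sigma$ symmetry of the process, which enforces $p_2(x) = p_{-2}(-x)$, transfers the conclusion to $x_-$.

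The main obstacle is the continuity of $p_0$ at $0$, because $v_0'(0) = 0$ places $(0, 0)$ outside the sharp dichotomy of~\cite[Th.~3]{bakhtin15}. The asymptotic $|v_0(x)| \sim 2a|x|^p$ with $p > 1$ still falls within~\cite[Th.~1]{balazs11}, which supplies local boundedness of $p_0$ near $0$. To upgrade this to continuity, I would work with the distributional stationary equation $(v_0 p_0)' = 2\omega(p_2 + p_{-2} - p_0)$ obtained from $\pi(\mathcal L f) = 0$ applied to test functions $f(x, \sigma) = g(x)\mathbf 1_{\sigma = 0}$ supported near $(0, 0)$. Solving this linear first-order ODE on $(0, \epsilon)$ by variation of constants, the continuity of $p_{\pm 2}$ at $0$ together with an L'H\^opital-type argument on the resulting integral representation yields $p_0(0^+) = p_2(0) + p_{-2}(0)$. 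The same argument on $(-\epsilon, 0)$ combined with the symmetry $p_0(x) = p_0(-x)$ (a direct consequence of the $x \mapsto -x$, $\sigma \mapsto -\sigma$ invariance) gives $p_0(0^-) = p_0(0^+)$ and completes the proof that $p_0 \in C^0(\mathbb R)$.
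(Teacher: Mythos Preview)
Your proposal is correct and follows essentially the same route as the paper: existence and uniqueness of a density-carrying invariant measure via an accessibility/bracket argument, $C^0$-regularity away from the zeros of the $v_\sigma$, the shape-transition dichotomy at $x_\pm$ from \cite[Th.~3]{bakhtin15}, and---for the delicate point $p_0$ at $0$---solving the scalar ODE $(v_0 p_0)' = 2\omega(p_2 + p_{-2} - p_0)$ by variation of constants and extracting the limit. The paper packages the last step as a separate asymptotic lemma (Lemma~\ref{lem:integral_eq}) proved by integration by parts, whereas your L'H\^opital argument on the quotient $\bigl(\int_x^\epsilon e^{g(y)}R(y)\,dy\bigr)\big/\bigl(2ax^p e^{g(x)}\bigr)$ with $g(x)=\tfrac{\omega}{a(p-1)}x^{-(p-1)}$ reaches the same limit $R(0)/(2\omega)=p_2(0)+p_{-2}(0)$ more directly; the citation of \cite{balazs11} for local boundedness is then superfluous, since the explicit integral representation already controls $p_0$ near~$0$.
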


For the power-law process, because $v_0(0) = v_0'(0) = 0$, the continuity of $p_0$ at $0$ does not follow from~\cite{bakhtin15}. As~\cite{balazs11} only yields local boundedness, we turn to a direct computation to prove continuity.

\begin{Thm}[Shape transition of the harmonic process]\label{thm:finite_harmonic_process}
The unique invariant measure of the two-particle finite harmonic process has the form
$$
\pi = \sum_{\sigma \in\Sigma} p_\sigma(x) dx \otimes \delta_\sigma
$$
where $p_\sigma \in L^1(\mathbb R)$. Setting $x_{\pm k} = \pm\frac{kv}{2a}$ for $k = 1, 2$, all $p_\sigma$ vanish outside $[x_{-2}, x_{+2}]$ and
\begin{itemize}
	\item $p_{\pm k} \in C^0(\mathbb R \setminus \{ x_{\pm k}\})$ for $k = 1, 2$ and $p_{0_\pm}, p_{0_0} \in C^0(\mathbb R \setminus \{ 0 \})$,
	\item $p_{\pm2}$ is continuous at $x_{\pm2}$ if and only if $\alpha > a$,
	\item $p_{\pm 1}$ is continuous at $x_{\pm1}$ if and only if $\alpha + \beta > 2a$,
	\item $p_{0_\pm}$ is continuous (resp.~diverges) at $0$ if $\alpha > a$ (resp.~$\alpha < a$),
	\item $p_{0_0}$ is continuous (resp.~diverges) at $0$ if $\beta > a$ (resp.~$\beta < a$).
\end{itemize}
\end{Thm}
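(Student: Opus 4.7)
The plan is to decompose the analysis into four parts: (a) existence, uniqueness, and support of $\pi$; (b) $C^0$ regularity away from the critical set $\{x_{\pm 2}, x_{\pm 1}, 0\}$ consisting of zeros of the vector fields; (c) behavior at the four \emph{simply critical} points $x_{\pm 2}, x_{\pm 1}$; and (d) behavior at $x = 0$, the unique point where multiple vector fields vanish simultaneously.

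For (a), I would first check that $[x_{-2}, x_{+2}]$ is forward-invariant: at $x = x_{+2} = v/a$ one has $v_\sigma(x_{+2}) = -2v + v\sigma \le 0$ for all $\sigma \in \Sigma$ (with the convention that $0_\pm \cdot v = 0_0 \cdot v = 0$), and symmetrically at $x_{-2}$, so the deterministic flows cannot exit this interval and the jumps leave $x$ unchanged. Irreducibility of the rate matrix combined with accessibility through concatenations of the flows yields unique ergodicity by standard PDMP arguments, and the existence of a density follows from the Hörmander-type accessibility condition of~\cite{bakhtin12,benaim15} applied to, e.g., the pair $v_{+1}, v_{-1}$.

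For (b), I would apply Theorem~\ref{thm:Cr_regularity_noncritical_interval} on each component of $\mathbb R \setminus \{x_{\pm 2}, x_{\pm 1}, 0\}$. Since $\lambda\R_\sigma \equiv 0$ under Assumption~\ref{ass:constant_jump_rates_no_resetting} we have $\kappa^\pi = 0$, the $\lambda_{\sigma\tilde\sigma}$ are constant, and the $v_\sigma$ are affine; hence each $p_\sigma$ is actually $C^\infty$ on that set, and in particular $C^0$. For (c), at each of $x_{\pm 2}$ and $x_{\pm 1}$ exactly one vector field vanishes, with derivative $-2a$, so Assumption~\ref{ass:structure_assumption_single_vanishing_vector_fields} holds. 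The criterion recalled in Subsection~\ref{subsec:general_approach_shape_transition} (from \cite[Th.~3]{bakhtin15}) compares the total outgoing rate with $|v_\sigma'|$: continuity of $p_{\pm 2}$ at $x_{\pm 2}$ is equivalent to $\lambda_{\pm 2} = 2\alpha > 2a$, and continuity of $p_{\pm 1}$ at $x_{\pm 1}$ to $\lambda_{\pm 1} = \alpha + \beta > 2a$, precisely the stated thresholds. The continuity of the other $p_\sigma$ at these points follows from \cite[Rem.~6]{bakhtin15}.

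The main obstacle, and the reason the harmonic process eluded previous explicit approaches, is step (d) at $x = 0$, where both $v_{0_\pm}$ and $v_{0_0}$ vanish. This case is exactly what Section~\ref{sec:continuity_critical_point} is designed to handle. The structural observation that should drive the analysis is that the states $0_\pm$ and $0_0$ do not communicate directly: from either one the only outgoing transitions are to $\pm 1$, whose vector fields satisfy $v_{\pm 1}(0) = \pm v \ne 0$. Thus the two vanishing branches share the common contraction rate $-v_{0_\pm}'(0) = -v_{0_0}'(0) = 2a$ but decouple through their separate total escape rates $2\alpha$ and $2\beta$, and feed back from the smooth side via the $\pm 1$ densities, which are already known to be $C^0$ at $0$ by (b). The delicate step will be to show rigorously, via the Section~\ref{sec:continuity_critical_point} machinery, that this decoupling legitimizes treating each branch as an independent single-vanishing problem, yielding the two independent thresholds $\alpha > a$ for $p_{0_\pm}$ and $\beta > a$ for $p_{0_0}$; the remaining $p_\sigma$ for $\sigma \in \{\pm 1, \pm 2\}$ extend continuously to $0$ because $v_\sigma(0) \ne 0$ there.
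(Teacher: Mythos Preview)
Your proposal is correct and follows essentially the same route as the paper. In particular, your key structural observation for step~(d)---that $0_\pm$ and $0_0$ do not communicate directly, so each can be treated as a singleton $S$ in Theorem~\ref{thm:continuity_critical_point}, reducing the criterion to the single-vanishing threshold $\lambda_\sigma > -v_\sigma'(0)$---is exactly what the paper does; the paper additionally spells out the verification of Assumptions~\ref{ass:integrability} and~\ref{ass:continuity} (notably the diagonalizability of $B_0$ and invertibility of $A$), which you correctly flag as the ``delicate step'' still to be carried out.
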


For the harmonic process, as was the case in~\cite{basu20} and~\cite[Sec.~III]{sun25}, the picture is more intricate as divergences can appear in the interior of the support of the invariant measure rather than only at its edges. Note that $v_{0_\pm}(0) = v_{0_0}(0) = 0$ so Assumption~\ref{ass:structure_assumption_single_vanishing_vector_fields} is not satisfied at $x = 0$. Hence the continuity of $p_{0_\pm}, p_{0_0}$ in Theorem~\ref{thm:finite_harmonic_process} cannot be obtained from~\cite{bakhtin15,balazs11}. In fact, the following counterexample shows that the picture becomes more involved when multiple vector fields vanish at the same time.

\begin{Cex}\label{def:counterexample}
	Consider the 1D PDMP obtained by taking $\Sigma = \{1, 2, 3\}$ as well as
	\begin{align*}
		v_1(x) &= -x, & v_2(x) &= -2x(1 - x), & v_3(x) = 1 - x,
	\end{align*}
	and
	\begin{align*}
		(\lambda_{\sigma\tilde\sigma}(x))_{\sigma, \tilde\sigma \in \Sigma} &=  \bordermatrix{ & \tilde \sigma = 1 & \tilde \sigma =  2 & \tilde \sigma =  3 \cr
			\sigma = 1 & -2  {\omega} & 2  {\omega} & 0 \cr
			\sigma = 2 &{\omega} & -2  {\omega} & {\omega} \cr
			\sigma = 3 & 0 & 2  {\omega} & -2  {\omega}}, & \lambda\R_\sigma(x) &= 0,
	\end{align*}
	in Definition~\ref{def:1d_pdmp}. The $Q\R_{(x, \sigma)}$ need not be specified as $\lambda\R_\sigma =0$.
\end{Cex}

Explicitly solving Fokker-Planck (see Lemma~\ref{lem:fokker_planck}) shows that the invariant measure of this process is unique and has the form $\pi = \frac1Z\sum_{\sigma \in \Sigma} p_\sigma(x) dx \otimes \delta_\sigma$ where
\begin{align*}
p_1(x) &= 1_{\{0 < x < 1\}} \cdot 2 x^{\frac{3 - \sqrt{5}}{2}\omega -1} (1-x)^{\frac{1+\sqrt{5}}{2} \omega }, \\
p_2(x) &= 1_{\{0 < x < 1\}} \cdot (1 + \sqrt{5}) x^{\frac{3 - \sqrt{5}}{2}\omega -1} (1-x)^{\frac{1+\sqrt{5}}{2} \omega -1 }, \\
p_3(x) &= 1_{\{0 < x < 1\}} \cdot (4+2\sqrt{5}) x^{\frac{3 - \sqrt{5}}{2}\omega} (1-x)^{\frac{1+\sqrt{5}}{2} \omega -1 },
\end{align*}
and $Z > 0$ is a normalizing constant. In particular
$$
	p_1 \text{ and } p_2 \text{ are continuous at } 0 \iff \omega > \frac{3 + \sqrt{5}}{2}.
$$
Using~\cite[Th.~2]{bakhtin15} even though it cannot be applied because Assumption~\ref{ass:structure_assumption_single_vanishing_vector_fields} is not satisfied would yield
\begin{align*}
	p_1 \text{ is continuous at } 0 &\iff \omega> 1/2, & p_2 \text{ is continuous at } 0 &\iff \omega > 1.
\end{align*}

Thus, while the invariant densities remain continuous above a critical tumble-rate threshold and diverge below, this threshold is higher than anticipated. Understanding this discrepancy and, more generally, what happens when multiple vector fields vanish is the topic of the next subsection.

\subsection{Continuity at multiply critical points}\label{subsec:intro_C0_critical_points}

When a single vector field vanishes, jumping to any other vector field stops the deterministic contraction. However, when multiple vector fields have a common zero, the process can jump from one vanishing $v_\sigma$ to another. The deterministic contraction then continues, possibly with a different rate. This suggests that understanding singularity formation requires analyzing the combined contraction effect of all vanishing $v_\sigma$ and the overall rate at which the system exits this group of vector fields. It also suggests that, if Assumption~\ref{ass:structure_assumption_single_vanishing_vector_fields} is not satisfied but direct jumps between vanishing vector fields are not possible, the continuity threshold of~\cite{bakhtin15} should remain valid. This applies, in particular, to the harmonic process. We now investigate the continuity of the subset of invariant densities $(p_\sigma)_{\sigma \in S}$ at $x_0$ under the following assumption on the index set $S$.

\begin{Ass}\label{ass:structure_assumption_multiple_vanishing_vector_fields}  One has that $\Sigma_0 := \{ \sigma \in \Sigma : v_\sigma(x_0) = 0 \} \ne \Sigma$  and $S \subset \Sigma_0$ is non-empty.
\end{Ass}

The case $\Sigma_0 = \Sigma$ is considered in~\cite{benaim19}, although with a focus on whether or not the invariant measure has a density rather than its regularity. We work under Assumptions~\ref{ass:constant_jump_rates_no_resetting} and~\ref{ass:structure_assumption_multiple_vanishing_vector_fields} and assume without loss of generality that $x_0 = 0$ during the rest of Subsection~\ref{subsec:intro_C0_critical_points}. It is useful to study the continuity of $(p_\sigma)_{\sigma \in S}$ separately for different index sets $S \subset \Sigma_0$.

Observe that the condition $\lambda_{\sigma_0} > -v_{\sigma_0}'(0)$ ensuring continuity  in~\cite[Th.~3]{bakhtin15} can be rewritten as
$$
	\mathbb E_{\sigma_0} \left[ \int_0^{\tilde\tau} e^{-t v_{\sigma_0}'(0)} dt \right] < +\infty \text{ where } \tilde \tau = \inf\{ t \ge 0 : \sigma_t \ne \sigma_0 \}.
$$
In the case of a subset of vanishing vector fields $(v_\sigma)_{\sigma \in S}$, this criterion generalizes to
\begin{equation}\label{eq:Ea_finite}
\mathbb E_\sigma \left[ \int_0^\tau e^{-\int_0^t v_{\sigma_s}'(0) ds} dt \right] < +\infty \text{ where } \tau = \inf \{ t \ge 0 : \sigma_t \notin S \},
\end{equation}
thus giving a way to compare the rate at which the process leaves this subset and the joint contraction rate. This paper's last main result is that, in essence, the $(p_\sigma)_{\sigma \in S}$ are continuous when~\eqref{eq:Ea_finite} is satisfied and diverge when it is not. Importantly, the expectation in~\eqref{eq:Ea_finite} can be computed explicitly by solving a system of linear equations (see Lemma~\ref{lem:Ec_ec_general_properties}). This makes~\eqref{eq:Ea_finite} effective in the sense that they can easily be checked on explicit models. In particular, in the case of Counterexample~\ref{def:counterexample}, taking $S = \{1, 2\}$, one has (see Proposition~\ref{prop:counterexample_continuity_threshold})
\begin{align*}
\eqref{eq:Ea_finite} \iff \omega > \frac{3+ \sqrt{5}}{2} \text{ for } \sigma = 1, 2,
\end{align*}
thus recovering the correct continuity threshold. All of this is made rigorous in the upcoming Theorem~\ref{thm:continuity_critical_point} by considering the following expectations, which are a slight generalization of~\eqref{eq:Ea_finite}.

\begin{Def} For all families of reals $c = (c_\sigma)_{\sigma \in S}$ define
	\begin{align*}
	E^{c}_\sigma := \mathbb E_\sigma \left[ \int_0^\tau e^{\int_0^t c_{\sigma_s}ds} dt \right]
	\end{align*}
	where $\tau = \inf \{ t \ge 0 : \sigma_t \notin S \}$.
\end{Def}
The key idea is to reformulate continuity as the integrability of certain functions, as follows:
\begin{itemize}
	\item If $I_\mathrm{d}(\epsilon, \eta) := \sum_{\sigma \in S} \int_0^\epsilon x^{-1+\eta} p_\sigma(x) dx = +\infty$ then $\varlimsup_{x \to 0+} \sum_{\sigma \in S} p_\sigma(x) = +\infty$ so $\sum_{\sigma \in S} p_\sigma$ cannot be continuous at $0$.
	\item If $I_\mathrm{c}(\epsilon) := \sum_{\sigma \in S} \int_0^\epsilon \frac{1}{x (\log x)^2} p_\sigma(x) dx < +\infty$ and 
	$$
	p_\sigma(x) \sim C x^\nu (\log x)^k \text{ when } x \to \text{$0+$} \text{ with } C \ne 0, \nu \in \mathbb R \text{ and } k \in \mathbb N,
	$$
	then $\nu > 0$ or $\nu = k = 0$. Hence $p_\sigma$ admits a limit to the right at $0$.
\end{itemize}
To analyze $I_\mathrm{d}(\epsilon, \eta)$ and $I_\mathrm{c}(\epsilon)$, we relate them to $E^c_\sigma$ by linearizing the deterministic dynamics around $0$. This enables the estimation of both integrals. For analytic vector fields, the asymptotic behavior $p_\sigma(x) \sim C x^\nu (\log x)^k$ can be shown as in~\cite[Sec.~7.2]{bakhtin15} using the theory of differential equations with regular singular points~\cite[Sec.~3.11]{taylor21}. Note that linearizing the deterministic dynamics was also the key to the nature of the invariant measure in~\cite{benaim19}.

\begin{Def}\label{def:properties_of_S} Define $S_\mathrm{in} = \{ \sigma \in \Sigma \setminus S : \max_{\tilde\sigma \in S} \lambda_{\sigma\tilde\sigma} > 0\}$. We say that $S$ is
	\begin{itemize}
		\item \emph{backward-complete} if $S_\mathrm{in} \cap \Sigma_0 = \emptyset$,
		\item \emph{irreducible} if for all $\sigma, \tilde\sigma \in S$ there exists a sequence $\sigma = \sigma_1, \sigma_2, \ldots, \sigma_N = \tilde\sigma \in S$ such that $\lambda_{\sigma_n\sigma_{n+1}} > 0$ for $n = 1, 2, \ldots, N - 1$.
	\end{itemize}
\end{Def}

\begin{Ass}\label{ass:integrability} One has that:

\begin{enumerate}[label=(D\arabic*)]
	\item \label{ass:D1_compact_set} There exists a compact set $K \subset \mathbb R$ such that $0 \in \mathring{K}$ and
	$$
		\phi^\sigma_t(K) \subset K \text{ for all } t \ge 0 \text{ and } \sigma \in \Sigma.
	$$
	\item \label{ass:D2_right_diff} For all $\sigma \in S$
	$$
	v_\sigma(x) = -a_\sigma x + o(x) \text{ when } x \to 0+
	$$
	where $a_\sigma> 0$.
	\item \label{ass:D3_S_irreducible} The index set $S$ is irreducible.
	\item \label{ass:D4_TV_convergence} The invariant measure $\pi = \sum_{\sigma \in \Sigma} p_\sigma(x)dx\otimes \delta_\sigma$ is unique, admits a density and satisfies
	$$
\sup_{(x, \sigma) \in K \times \Sigma} \Vert \delta_{(x, \sigma)} P_t - \pi \Vert_\mathrm{TV} \to 0 \text{ when } t \to +\infty,
$$
where the total variation distance $\Vert \cdot\Vert_\mathrm{TV}$ is defined by $\Vert \nu - \mu \Vert_\mathrm{TV} = \sup \nu(A) - \mu(A)$ with the $\sup$ running over all measurable sets $A$. Moreover, $\delta_{(x, \sigma)} P_t$ is the law at time $t$ of the process started from the initial state $(x, \sigma)$.

	\item \label{ass:D5_inside_support} The invariant measure satisfies $\pi([0, \epsilon] \times S) > 0$ for all $\epsilon> 0$.
\end{enumerate}
\end{Ass}

\begin{Ass}\label{ass:continuity}
	One has:
\begin{enumerate}[label=(E\arabic*)]
		\item \label{ass:E1_analytic} The $v_\sigma$ are all analytic at $x = 0$ and $a_\sigma := -v'_\sigma(0) \ne 0$ for all $\sigma \in \Sigma_0$.
		\item \label{ass:E2_spectral} The matrix $B_0 = ((B_0)_{\sigma\tilde\sigma})_{\sigma, \tilde\sigma \in \Sigma}$ defined by
		$$	
		(B_0)_{\sigma\tilde\sigma} = \left\{
		\begin{array}{cl}
		-\lambda_{\tilde\sigma\sigma}/a_{\tilde\sigma} &\text{ if } \tilde\sigma \in \Sigma_0, \\
		0 &\text{ if } \tilde\sigma \notin \Sigma_0,
		\end{array}
		\right.
		$$
		is diagonalizable and all its eigenvalues are real.
		\item \label{ass:E3_invertibility} The matrix $A = (A_{\sigma\tilde\sigma})_{\sigma, \tilde\sigma\in S \cup S_\mathrm{in}}$ defined by
		$$
		A_{\sigma\tilde\sigma} = \left\{
		\begin{array}{cl}
		\lambda_{\tilde\sigma\sigma} + a_\sigma 1_{\{\sigma = \tilde\sigma\}} & \text{ if } \sigma \in S,\\
		1_{\{\sigma = \tilde\sigma\}} &\text{ if } \sigma \in S_\mathrm{in}, \\
		\end{array}
		\right.
		$$
		is invertible.
	\end{enumerate}
\end{Ass}

\begin{Thm}[Continuity at critical point]\label{thm:continuity_critical_point} Assume that Assumptions~\ref{ass:constant_jump_rates_no_resetting}, \ref{ass:structure_assumption_multiple_vanishing_vector_fields} and~\ref{ass:integrability} are satisfied.
\begin{enumerate}[label=(\roman*)]
	\item If there exists $\gamma > 0$ such that
	\begin{equation}\label{eq:Ec_infinite}
		\max_{\sigma \in S} |c_\sigma - a_\sigma| < \gamma \implies \min_{\sigma \in S} E^c_\sigma = +\infty
	\end{equation}
	then there exist $\epsilon, \eta > 0$ such that
	$$
		\sum_{\sigma \in S} \int_0^\epsilon x^{-1 + \eta} p_\sigma(x) dx = +\infty.
	$$
	In particular $\varlimsup_{x \to 0+} \sum_{\sigma \in S} p_\sigma(x) = +\infty$.
	\item If $S$ is backward-complete and there exists $\gamma > 0$ such that
	\begin{equation}\label{eq:Ec_finite}
		\max_{\sigma \in S} |c_\sigma - a_\sigma| < \gamma \implies \max_{\sigma \in S} E^c_\sigma < +\infty	
	\end{equation}
	then there exists $\epsilon > 0$ such that
	\begin{equation}\label{eq:integrability_before_continuity}
	\sum_{\sigma \in S} \int_0^\epsilon \frac{1}{x (\log x)^2} p_\sigma(x) dx < +\infty.
	\end{equation}
	\item If Assumption~\ref{ass:continuity} is satisfied and~\eqref{eq:Ec_finite} holds then $p_\sigma$ is continuous at $x =  0$ for $\sigma \in S$.
\end{enumerate}
\end{Thm}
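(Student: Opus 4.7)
The proof proceeds in three parts, all centered on analyzing the Fokker--Planck relation $(v_\sigma p_\sigma)' = \sum_{\tilde\sigma}\lambda_{\tilde\sigma\sigma}p_{\tilde\sigma}$ (equivalent to the invariance $\pi(\mathcal L f) = 0$ for suitable $f$) and linearizing it near $0$ via $v_\sigma(x) = -a_\sigma x + o(x)$. The algebraic backbone is the Feynman--Kac characterization
\[
\sum_{\tilde\sigma\in S}\lambda_{\sigma\tilde\sigma}E^c_{\tilde\sigma}+c_\sigma E^c_\sigma+1=0 \qquad (\sigma\in S),
\]
of the $E^c_\sigma$ (Lemma~\ref{lem:Ec_ec_general_properties}): divergence $\min_\sigma E^c_\sigma = +\infty$ corresponds exactly to the coefficient matrix of this system being singular, and finiteness of $\max_\sigma E^c_\sigma$ uniformly to its uniform invertibility.

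For part (i), the plan is to multiply the Fokker--Planck relation by $x^{-1+\eta}$ and integrate over $(0,\epsilon)$; an integration by parts plus the linearization turn the drift contribution into $(1-\eta)a_\sigma\int_0^\epsilon x^{-1+\eta}p_\sigma\,dx$ modulo small remainders from the $o(x)$-part of $v_\sigma$ and from the boundary at $\epsilon$. Setting $J_\sigma(\eta) = \int_0^\epsilon x^{-1+\eta}p_\sigma(x)\,dx$, the integrated relation becomes, for $\sigma\in S$,
\[
(1-\eta)a_\sigma J_\sigma(\eta) + \sum_{\tilde\sigma\in S}\lambda_{\tilde\sigma\sigma}J_{\tilde\sigma}(\eta) \;=\; b_\sigma + \mathrm{error},
\]
where $b_\sigma$ is a boundary term at $\epsilon$ plus contributions from $J_{\tilde\sigma}$ for $\tilde\sigma\in S_{\mathrm{in}}$. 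The coefficient matrix of this system is exactly the Feynman--Kac matrix with $c_\sigma=(1-\eta)a_\sigma$; under hypothesis~(i) this matrix is singular for all $\eta$ small, and a perturbation argument shows $\mathbf b$ generically fails to lie in its range, so no finite $(J_\sigma)_{\sigma\in S}$ can solve it and $I_d(\epsilon,\eta) = +\infty$. The $\limsup$ claim follows since local boundedness of $\sum_{\sigma\in S}p_\sigma$ near $0$ would give $I_d(\epsilon,\eta) \le M\epsilon^\eta/\eta < +\infty$.

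Part (ii) follows the same scheme, multiplying the Fokker--Planck relation by $1/(x(\log x)^2)$ before integrating; the analogous computation produces a linear system in $K_\sigma = \int_0^\epsilon (x(\log x)^2)^{-1}p_\sigma\,dx$ with the same matrix structure. Under hypothesis~(ii) this matrix is uniformly invertible on a neighborhood of $c=a$, and backward-completeness together with Theorem~\ref{thm:continuity_noncritical_interval} makes the $S_{\mathrm{in}}$-boundary terms finite, yielding the bound~\eqref{eq:integrability_before_continuity}; without backward-completeness, inflow from $\Sigma_0\setminus S$ could itself be singular and spoil the bound. For part (iii), Assumption~(E1) makes the Fokker--Planck system a linear ODE system with a regular singular point at $0$, whose indicial matrix is a conjugate of $B_0-I$ and is real-diagonalizable by~(E2); (E3) removes the remaining degeneracies so that the Frobenius/Fuchs theory applies exactly as in~\cite[Sec.~7.2]{bakhtin15}, giving each $p_\sigma(x) \sim C x^\nu(\log x)^k$ as $x\to 0+$ for some $\nu\in\mathbb R$, $k\in\mathbb N$. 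The integrability~\eqref{eq:integrability_before_continuity} rules out all exponents with $\nu<0$ and $(\nu, k) = (0, k)$ with $k\ge 1$; the only survivors are $\nu>0$ or $(\nu,k)=(0,0)$, each yielding a finite right-limit of $p_\sigma$ at $0$, and combined with continuity on $(0,\epsilon)$ from Theorem~\ref{thm:Cr_regularity_noncritical_interval} this gives continuity of $p_\sigma$ at $0$.

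The main obstacle is the careful tracking of error terms in (i)--(ii): the $o(x)$ remainders in $v_\sigma$ contribute $o(J_\sigma)$- or $o(K_\sigma)$-type terms that must be absorbed on the left-hand side (requiring $\epsilon$ small); boundary terms at $\epsilon$ and $S_{\mathrm{in}}$-contributions must be shown bounded, with backward-completeness playing this role in (ii); and the identification of singularity of the matrix with divergence of $\min_\sigma E^c_\sigma$ uniformly in $c$ near $a$ hinges on irreducibility~(D3) and positivity of the jump rates, while (D4) is used to justify that the test functions implicit in the integration by parts are in the domain of $\mathcal L$ and integrable against $\pi$.
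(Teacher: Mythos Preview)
Your approach is genuinely different from the paper's and, as written, has a real gap in part~(i) together with a missing step in part~(iii).

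\textbf{Part (i)--(ii): different route.} The paper does \emph{not} integrate the Fokker--Planck relation against $x^{-1+\eta}$ or $1/(x(\log x)^2)$. Instead it constructs an induced Markov chain $Z_n = X_{\tau_n}$ on $[0,\epsilon]\times S$ (entries into the box), proves a stopping-time representation $\pi(f) = \mathbb E_{\pi_Z}[\tau_1]^{-1}\,\mathbb E_{\pi_Z}\bigl[\int_0^{\tau_1} f(X_t)\,dt\bigr]$, and then linearizes the \emph{trajectory} $x_t$ via Gr\"onwall to compare $\int_0^{\tilde\tau_1} x_t^{-1+\eta}\,dt$ directly with $x_0^{-1+\eta}\int_0^\tau e^{\int_0^t c_{\sigma_s}\,ds}\,dt$. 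Assumptions~(D1), (D4), (D5) enter precisely here: (D1)+(D4) give well-posedness and uniqueness of $\pi_Z$ via a Doeblin argument, and (D5) ensures $\mathbb E_{\pi_Z}[x_0^{-1+\eta}]>0$. In part~(ii), backward-completeness is used in a separate lemma to show $\mathbb E_{\pi_Z}\bigl[1/(x_0(\log x_0)^2)\bigr]<\infty$ by controlling where the process can re-enter $[0,\epsilon]\times S$. Your description of how (D4) is used (``test functions in the domain'') does not match.

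\textbf{The gap.} Your linear-algebra claim that ``$\min_\sigma E^c_\sigma = +\infty$ corresponds exactly to the coefficient matrix being singular'' is false. Already in the scalar case $S=\{\sigma_0\}$ one has $E^c_{\sigma_0}=+\infty$ iff $c_{\sigma_0}\ge\lambda_{\sigma_0}$, while the matrix $c_{\sigma_0}-\lambda_{\sigma_0}$ is singular only at equality. In general (see the paper's Lemma on $E^c$ and $e^c$), $\min_\sigma E^c_\sigma=+\infty$ is equivalent to the \emph{absence of a nonnegative solution} to the Feynman--Kac system, which can happen with an invertible matrix whose unique solution $e^c$ has a negative entry. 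Your argument ``$\mathbf b$ generically fails to lie in the range'' therefore does not apply; and even in the invertible-but-negative case you cannot conclude $J_\sigma$ is infinite, because the right-hand side of your integrated system is not $-1$ but an uncontrolled combination of boundary data at $\epsilon$ and $S_{\mathrm{in}}$-inflow, so $(J_\sigma)$ need not be close to $(e^c_\sigma)$. There is also an unaddressed boundary term at $0$ in your integration by parts: finiteness of $J_\sigma$ does not by itself give $x^\eta p_\sigma(x)\to 0$.

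\textbf{Part (iii).} Your sketch establishes only the existence of a right-hand limit of $p_\sigma$ at $0$. The paper then does two further things you omit: it argues separately that a left-hand limit exists (same reasoning, or trivially if $\pi([-\epsilon,0]\times\{\sigma\})=0$), and it matches the two limits by passing to the limit in $(v_\sigma p_\sigma)' = \sum_{\tilde\sigma\in S\cup S_{\mathrm{in}}}\lambda_{\tilde\sigma\sigma}p_{\tilde\sigma}$, obtaining a linear system $A\,(p_\sigma(0\pm))_{\sigma\in S\cup S_{\mathrm{in}}} = (p_\sigma(0\pm)\,1_{\sigma\in S_{\mathrm{in}}})$ whose invertibility is exactly Assumption~(E3). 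That, not ``removing Frobenius degeneracies'', is the role of~(E3); your account misattributes it.
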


\begin{Rem} Assumptions~\ref{ass:D4_TV_convergence} and~\ref{ass:D5_inside_support} can be checked using~\cite[Cor.~2.7]{benaim18} and~\cite[Sec.~6]{bakhtin15} respectively.
\end{Rem}

As expected, conditions~\eqref{eq:Ec_infinite} and~\eqref{eq:Ec_finite} coincide with the continuity threshold of~\cite[Th.~3]{bakhtin15} when $\sigma_t$ cannot switch between two states of $\Sigma_0$ without passing through a state in $\Sigma \setminus \Sigma_0$. Indeed, in that case, fixing $\sigma_0 \in \Sigma_0$ and taking $S = \{ \sigma_0 \}$ yields
\begin{align*}
	\eqref{eq:Ec_infinite} &\iff \lambda_{\sigma_0} < -v_{\sigma_0}'(0), & \eqref{eq:Ec_finite} &\iff \lambda_{\sigma_0} > -v_{\sigma_0}'(0).
\end{align*}


\begin{Rem}
	The natural next step after studying the invariant measure is to examine the speed of convergence toward it. While this falls outside the scope of this article, we note that this question has successfully been addressed for specific RTP models using spectral analysis~\cite{malakar18,mallmin19,das20,mischler17}, Harris-type theorems~\cite{fontbona16,evans23}, coupling~\cite{fontbona12,guillin24,hahn24} and hypocoercivity techniques~\cite{calvez14,eberle25}.
\end{Rem}

\section{Regularity on noncritical intervals}\label{sec:regularity_noncritical_intervals}

To establish the regularity of invariant measures on intervals where no $v_\sigma$ vanishes, we first reformulate the generator characterization of invariance
$$
	\pi \text{ is invariant } \iff \int \mathcal L f d\pi = 0 \text{ for all } f \in D(\mathcal L),
$$
where $\mathcal L$ is the generator and $D(\mathcal L)$ its domain, as a system of linear differential equations in the sense of distributions. This is the content of Lemma~\ref{lem:fokker_planck}. We then show that all solutions of this system are regular.

\begin{Lem}[Fokker-Planck]\label{lem:fokker_planck} If $\pi = \sum_{\sigma \in \Sigma} \pi_\sigma \otimes \delta_\sigma$ is invariant then for all $\sigma \in \Sigma$
	\[
	-\pi_\sigma(v_\sigma f') = \sum_{\tilde \sigma \in \Sigma} \pi_{\tilde\sigma} (\lambda_{\tilde \sigma \sigma} f) -\pi_\sigma(\lambda\R_\sigma f) + \kappa^\pi_\sigma(f) \text{ for all } f \in C^1_c(\mathbb R),
	\]
	where $\kappa^\pi = \sum_{\sigma \in \Sigma} \kappa^\pi_\sigma \otimes \delta_\sigma$ is as in Definition~\ref{def:kappa_pi}.
\end{Lem}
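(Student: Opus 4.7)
The plan is to deduce the claimed identity from the generator characterization of invariance, $\int \mathcal L g \, d\pi = 0$ for all $g \in D(\mathcal L)$, by applying it to a carefully chosen family of test functions indexed by $f \in C^1_c(\mathbb R)$ and $\sigma \in \Sigma$. Concretely, given such $f$ and $\sigma$, I would set
$$g(y, \tau) := f(y)\,\mathbf 1_{\{\tau = \sigma\}}, \qquad (y, \tau) \in E.$$
Each slice $t \mapsto g(\phi^\tau_t(y), \tau)$ is either identically $0$ (when $\tau \ne \sigma$) or equal to $f(\phi^\sigma_t(y))$, which is $C^1$ in $t$ since $f \in C^1$ and the flow $\phi^\sigma_t$ is $C^1$ in $t$ by the ODE. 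Hence $g$ is bounded, measurable, and its flow-slices are absolutely continuous, placing $g \in D(\mathcal L)$ by Proposition~\ref{prop:extended_generator}.

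Next I would compute $\mathcal L g$ term by term using the explicit form of the extended generator in Proposition~\ref{prop:extended_generator}. The deterministic-motion term collapses to $v_\sigma(y) f'(y)\,\mathbf 1_{\{\tau = \sigma\}}$, the pure-$\sigma$-jump term collapses to $\lambda_{\tau\sigma}(y) f(y)$ (since $g(y, \tilde\sigma)$ is nonzero only when $\tilde\sigma = \sigma$), and the joint-jump term gives $\lambda\R_\tau(y)(Q\R_{(y,\tau)}(g) - f(y)\mathbf 1_{\{\tau = \sigma\}})$. Integrating $\mathcal L g$ against $\pi = \sum_{\tau} \pi_\tau \otimes \delta_\tau$ then splits into four parts:
\begin{itemize}
    \item $\sum_\tau \int v_\tau f' \mathbf 1_{\{\tau=\sigma\}} d\pi_\tau = \pi_\sigma(v_\sigma f')$,
    \item $\sum_\tau \int \lambda_{\tau\sigma}(y) f(y) d\pi_\tau(y) = \sum_{\tilde\sigma} \pi_{\tilde\sigma}(\lambda_{\tilde\sigma\sigma} f)$,
    \item $\sum_\tau \int \lambda\R_\tau(y) Q\R_{(y,\tau)}(g) d\pi_\tau(y) = \kappa^\pi(g) = \kappa^\pi_\sigma(f)$ by Definition~\ref{def:kappa_pi},
    \item $-\sum_\tau \int \lambda\R_\tau f \mathbf 1_{\{\tau=\sigma\}} d\pi_\tau = -\pi_\sigma(\lambda\R_\sigma f)$.
\end{itemize}
Setting the sum to zero and isolating $-\pi_\sigma(v_\sigma f')$ then yields the claimed Fokker-Planck identity.

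The only technical point that needs attention is integrability of $\mathcal L g$ with respect to $\pi$, which is needed to justify writing $\int \mathcal L g \, d\pi = 0$; this follows because $f$ has compact support, $v_\sigma$ is locally Lipschitz hence locally bounded, and the $\lambda_{\sigma\tilde\sigma}, \lambda\R_\sigma$ are bounded, so $\mathcal L g$ is bounded with compact $x$-support, while $\pi$ is finite. Beyond this bookkeeping, there is no essential obstacle: the lemma is really a translation between the probabilistic (generator-based) and the distributional (test-function-based) formulations of invariance, and the work lies entirely in choosing the right indicator-type test function and in verifying that it belongs to $D(\mathcal L)$.
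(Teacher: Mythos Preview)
Your proposal is correct and follows essentially the same route as the paper: both apply the generator characterization $\int \mathcal L g\,d\pi = 0$ to test functions of the form $g(y,\tau)=f_\tau(y)$ and read off the $\sigma$-component (the paper takes all $f_\tau$ arbitrary and then specializes, you take $f_\tau = f\,\mathbf 1_{\{\tau=\sigma\}}$ directly). One small correction: $\mathcal L g$ need not have compact $x$-support---the resetting contribution $\lambda\R_\tau(y)\,Q\R_{(y,\tau)}(g)$ can be nonzero for all $y$---but it is bounded (since $\lambda\R_\tau$ is bounded and $|Q\R_{(y,\tau)}(g)|\le\|f\|_\infty$), which is all you need against the finite measure $\pi$; this is exactly the bound the paper establishes before passing from local martingale to martingale.
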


\begin{Rem}
	
	Writing $\pi = \sum_{\sigma \in \Sigma} \pi_\sigma \otimes \delta_\sigma$ and $\kappa^\pi = \sum_{\sigma \in \Sigma} \kappa^\pi_\sigma \otimes \delta_\sigma$ is not an assumption, as any measure $\mu$ on $E$ can be written as $\mu = \sum_{\sigma\in\Sigma}\mu_\sigma \otimes\delta_\sigma$ where the $\mu_\sigma$ are measures on $\mathbb R$. In particular, we do not assume here that $\pi$ or $\kappa^\pi$ have a density.
\end{Rem}

\begin{proof}
	For \( \sigma \in \Sigma \) let \( f_\sigma \in C^1_c(\mathbb R) \) be arbitrary but fixed. Define \( f : \mathbb R \times \Sigma \to \mathbb R \) by $f(x, \sigma) = f_\sigma(x)$. It follows from Proposition~\ref{prop:extended_generator} that \( M(t) = f(X_t) - f(X_0) - \int_0^t \mathcal L f(X_s) ds \) is a local martingale under any initial distribution where
	\[
	\mathcal L f(x, \sigma) = v_\sigma(x) \partial_x f(x, \sigma) + \sum_{\tilde\sigma \in \Sigma} \lambda_{\sigma\tilde \sigma}(x) f(x, \tilde \sigma) + \lambda\R_\sigma(x) \left( Q\R_{(x, \sigma)}(f) - f(x, \sigma)\right).
	\]	
	Note that
	\[
	\Vert \mathcal L f\Vert_\infty\le \left(\sup_{\substack{x \in K \\ \sigma \in \Sigma}} |v_\sigma(x) f_\sigma'(x)| \right) +
	\left(\max_{\sigma \in \Sigma}\sum_{\tilde\sigma \in \Sigma} \Vert \lambda_{\sigma\tilde\sigma}\Vert_{\infty}\right) \Vert f \Vert_{\infty}
	%
	+ 2 \left(\max_{\sigma \in \Sigma}\Vert \lambda\R_\sigma \Vert_\infty  \right)\Vert f \Vert_\infty
	\]
	where \( K := \bigcup_{\sigma \in \Sigma} \mathrm{supp}(f_\sigma) \), hence
	\[
	\mathbb E_\mu \left[ \sup_{s\le t} |M(s)| \right] \le 2 ||f||_\infty + t || \mathcal L f ||_{\infty} < +\infty.
	\]
	Hence \( M(t) \) is a martingale under any initial distribution. In particular, because \( \pi \) is invariant, we have
	\[
	0 = \mathbb E_\pi \left[ f(X_t) - f(X_0) - \int_0^t \mathcal L f(X_s) ds \right] = -\mathbb E_\pi \left[\int_0^t \mathcal L f (X_s) ds \right]
	\]
	and
	\[
	\mathbb E_\pi\left[\int_0^t\mathcal L f(X_s) ds\right] = \int_0^t \mathbb E_\pi \left[ \mathcal Lf(X_s)\right] ds = t 
	\int \mathcal L f d\pi.
	\]
	Thus \( \int \mathcal L f d\pi = 0 \). Expressing this in terms of the \( f_\sigma \) we get
	\begin{align*}
	0
	&= \sum_\sigma \int \left( v_\sigma(x) f_\sigma'(x) + \sum_{\tilde\sigma}\lambda_{\sigma\tilde\sigma}(x) f_{\tilde\sigma}(x) + \lambda\R_\sigma(x) \left[ Q\R_{(x, \sigma)}(f) - f_\sigma(x) \right]\right) d\pi_\sigma(x) \\
	&= \sum_\sigma \pi_\sigma(v_\sigma f_\sigma') + \sum_\sigma \left(\sum_{\tilde\sigma} \pi_{\tilde\sigma}(\lambda_{\tilde\sigma\sigma} f_\sigma) \right)- \sum_\sigma \pi_\sigma (\lambda\R_\sigma f_\sigma) + \sum_\sigma \kappa^\pi_\sigma(f_\sigma).
	\end{align*}
	The claim now follows from the fact that the \( f_\sigma \) were arbitrary.
\end{proof}

Theorem~\ref{thm:continuity_noncritical_interval} immediately follows.

\begin{proof}[Proof of Theorem~\ref{thm:continuity_noncritical_interval}]
	Lemma~\ref{lem:fokker_planck} implies that the distribution $\varphi_{\sigma_0} := v_{\sigma_0} \pi_{\sigma_0}$ has derivative
	$$
		\sum_{\tilde \sigma \in \Sigma} \lambda_{\tilde \sigma \sigma_0} \pi_{\tilde\sigma} -\lambda\R_{\sigma_0} \pi_{\sigma_0}+ \kappa^\pi_{\sigma_0}.
	$$
	By assumption this derivative is in $L^1(I)$ so $\varphi_{\sigma_0}$ and $\pi_{\sigma_0} = \frac{1}{v_{\sigma_0}}\varphi_{\sigma_0}$ are continuous.
\end{proof}

The following lemma shows that all distributional solutions of systems of linear differential equations with regular coefficients are regular strong solutions.

\begin{Lem}\label{lem:regularity_linear_ODE} Let \( I \subset \mathbb R \) be an open interval and $k \in \mathbb N$. Further let \( A_{\sigma\tilde\sigma}, b_\sigma \in C^k(I) \) for \( \sigma, \tilde \sigma \in \Sigma \). If the family of bounded measures $(\mu_\sigma)_{\sigma \in \Sigma}$ satisfies 
	\begin{equation}\label{eq:linear_ODE}
	-\mu_\sigma(f') = \sum_{\tilde\sigma \in \Sigma} \mu_{\tilde\sigma}(A_{\sigma\tilde\sigma} f) + \int_I b_\sigma(x) f(x) dx \text{ for all } f \in C^1_c(I)
	\end{equation}
	for all $\sigma \in \Sigma$, then $\mu_\sigma \in C^{k+1}(I)$ for all $\sigma \in \Sigma$.
\end{Lem}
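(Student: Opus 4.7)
The distributional reformulation of the hypothesis is $\mu' = A\mu + b$ on $I$, where $\mu = (\mu_\sigma)_\sigma$ is a vector of bounded measures, $A = (A_{\sigma\tilde\sigma})$ is a matrix of $C^k$ functions, and $b = (b_\sigma)$ is a vector of $C^k$ functions. My plan is to construct a smooth invertible matrix-valued integrating factor $\Phi$ so that $\Phi\mu$ satisfies a distributional equation whose right-hand side is a $C^k$ function; classical regularity will then force $\Phi\mu \in C^{k+1}$, and hence $\mu = \Phi^{-1}(\Phi\mu) \in C^{k+1}$.

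Concretely, I would let $\Phi : I \to \mathbb{R}^{\Sigma\times\Sigma}$ be the classical solution of $\Phi' = -\Phi A$ with $\Phi(x_0) = I$ at some $x_0 \in I$. Since $A \in C^k(I)$, standard global existence for linear systems plus a short bootstrap on $\Phi' = -\Phi A$ yields $\Phi \in C^{k+1}(I)$, and Abel's formula $\det \Phi(x) = \exp\bigl(-\int_{x_0}^x \mathrm{tr}\,A\bigr)$ ensures $\Phi(x)$ is invertible for every $x \in I$, so $\Phi^{-1} \in C^{k+1}(I)$ as well.

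The key step is then to verify that the vector of measures $\nu := \Phi\mu$, defined componentwise by $\nu_\sigma = \sum_{\tilde\sigma}\Phi_{\sigma\tilde\sigma}\mu_{\tilde\sigma}$, satisfies $\nu'_\sigma = (\Phi b)_\sigma$ in the sense of distributions. Picking $f \in C^1_c(I)$ and using that $\Phi \in C^1(I)$, the function $\Phi_{\sigma\tilde\sigma} f$ lies in $C^1_c(I)$, so the hypothesis applies to it. Writing $\Phi_{\sigma\tilde\sigma} f' = (\Phi_{\sigma\tilde\sigma} f)' - \Phi'_{\sigma\tilde\sigma} f$ inside $-\nu_\sigma(f') = -\sum_{\tilde\sigma} \mu_{\tilde\sigma}(\Phi_{\sigma\tilde\sigma} f')$, the $(\Phi_{\sigma\tilde\sigma} f)'$ contribution becomes $\sum_\tau \mu_\tau((\Phi A)_{\sigma\tau} f) + \int (\Phi b)_\sigma f\,dx$ by the hypothesis, while the $\Phi'_{\sigma\tilde\sigma} f$ contribution becomes $-\sum_{\tilde\sigma} \mu_{\tilde\sigma}((\Phi A)_{\sigma\tilde\sigma} f)$ via $\Phi' = -\Phi A$. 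These two $\Phi A$ terms cancel exactly, leaving $\nu'_\sigma(f) = \int (\Phi b)_\sigma f\,dx$, as claimed.

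To conclude, $(\Phi b)_\sigma \in C^k(I)$ as a sum of products of $C^k$ functions; the standard fact that a distribution on an open interval whose distributional derivative is a $C^k$ function agrees, up to an additive constant, with a $C^{k+1}$ primitive then gives $\nu_\sigma \in C^{k+1}(I)$, and multiplying by the $C^{k+1}$ matrix $\Phi^{-1}$ yields $\mu_\sigma \in C^{k+1}(I)$. I expect the main delicate point to be the distributional bookkeeping in the cancellation step, but it should go through smoothly because the hypothesis is stated for exactly the class $C^1_c(I)$ of test functions, and multiplication by the $C^1$ function $\Phi_{\sigma\tilde\sigma}$ preserves this class.
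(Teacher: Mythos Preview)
Your proposal is correct and essentially identical to the paper's proof: the paper also solves $T' = -TA$ with $T(x_0) = \mathrm{Id}$, computes the distributional derivative of $T\mu$ via the same Leibniz splitting, obtains the same cancellation leaving $(T\mu)' = Tb$, and then inverts $T$ to conclude $\mu \in C^{k+1}$. The only cosmetic differences are that you invoke Abel's formula for invertibility where the paper cites a textbook, and you explicitly mention the bootstrap giving $\Phi \in C^{k+1}$.
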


\begin{proof} Let $x_0 \in I$ be fixed. Set $A(x) = (A_{\sigma\tilde\sigma}(x))_{\sigma,\tilde\sigma\in\Sigma}$ and let $T(x) = (T_{\sigma\tilde\sigma}(x))_{\sigma,\tilde\sigma\in\Sigma}$ be the unique $C^{k+1}$ solution of the matrix-valued differential equation
	$$
	T' = -TA
	$$
	with initial condition $T(x_0) = \text{Id}$. It follows from Grönwall's inequality that $T$ can be defined on the entire interval $I$ and $T(x_0) = \text{Id}$ implies that $T(x)$ is invertible for all $x \in I$ (see~\cite[Sec.~3.8]{taylor21}).

	Now differentiate $\sum_{\tilde\sigma} T_{\sigma\tilde\sigma} \mu_{\tilde\sigma}$ in the sense of distributions by taking $f \in C^\infty_c(I)$ and computing
	$$
	\left( \sum_{\tilde\sigma} T_{\sigma\tilde\sigma} \mu_{\tilde\sigma} \right)'(f) = -\sum_{\tilde\sigma} \mu_{\tilde\sigma}(T_{\sigma\tilde\sigma} f') = -\sum_{\tilde\sigma} \mu_{\tilde\sigma} ((T_{\sigma\tilde\sigma} f)') + \sum_{\tilde\sigma} \mu_{\tilde\sigma}(T_{\sigma\tilde\sigma}' f).
	$$
	
	Using \eqref{eq:linear_ODE} and $T_{\sigma\tilde\sigma}' = -\sum_{\hat\sigma} T_{\sigma\hat\sigma} A_{\hat\sigma\tilde\sigma}$ yields
	\begin{align*}
	\notag&\left( \sum_{\tilde\sigma} T_{\sigma\tilde\sigma} \mu_{\tilde\sigma} \right)'(f) \\
	&= \sum_{\tilde\sigma} \sum_{\hat\sigma} \mu_{\hat\sigma}(A_{\tilde\sigma\hat\sigma}T_{\sigma\tilde\sigma}f) + \sum_{\tilde\sigma} \int_I b_{\tilde\sigma}(x) T_{\sigma \tilde\sigma}(x) f(x) dx + \sum_{\tilde \sigma} \mu_{\tilde\sigma}( -\sum_{\hat\sigma} T_{\sigma\hat\sigma} A_{\hat\sigma\tilde\sigma} f) \\
	\notag&= \int_I \left( \sum_{\tilde \sigma} T_{\sigma\tilde\sigma}(x) b_{\tilde\sigma}(x)\right) f(x) dx.
	\end{align*}
	The antiderivative of a distribution is unique up to a constant. Hence there exists $C_\sigma \in \mathbb R$ s.t.
	$$
	\sum_{\tilde \sigma} T_{\sigma\tilde\sigma} \mu_{\tilde\sigma} = \left( \sum_{\tilde \sigma} \int_{x_0}^x T_{\sigma\tilde\sigma}(y) b_{\tilde\sigma}(y) dy + C_\sigma \right) dx
	$$
	has a $C^{k+1}$ density with respect to the Lebesgue measure. Because $T$ is $C^{k+1}$ and invertible, it follows that the $\mu_\sigma$ admit a $C^{k+1}$ density on $I$.
\end{proof}
	
Theorem~\ref{thm:Cr_regularity_noncritical_interval} is a direct consequence of the previous lemma.

\begin{proof}[Proof of Theorem~\ref{thm:Cr_regularity_noncritical_interval}] Set $\varphi_\sigma = v_\sigma \pi_\sigma$. It follows from Lemma~\ref{lem:fokker_planck} that for $\sigma \in \Sigma$
	$$
	-\varphi_\sigma(f') = \sum_{\tilde \sigma \in \Sigma} \varphi_{\tilde\sigma} \left(\frac{\lambda_{\tilde \sigma \sigma}}{v_{\tilde\sigma}} f\right) -\varphi_\sigma\left(\frac{\lambda\R_\sigma }{v_\sigma}f\right) + \int_I f(x) k^\pi_\sigma(x) dx \text{ for all } f \in C^1(I)
	$$
	where $k^\pi_\sigma \in C^{r-1}(I)$ is the density of $\kappa^\pi_\sigma$. 
	Lemma~\ref{lem:regularity_linear_ODE} then implies that each $\varphi_\sigma$, and consequently each $\pi_\sigma = \frac{1}{v_\sigma}\,\varphi_\sigma$, possesses a $C^r$ density on~$I$.
\end{proof}

\section{Continuity at multiply critical points}\label{sec:continuity_critical_point}

In this section we prove Theorem~\ref{thm:continuity_critical_point}, working under Assumptions~\ref{ass:constant_jump_rates_no_resetting} and~\ref{ass:structure_assumption_multiple_vanishing_vector_fields} throughout. As noted in Section~\ref{subsec:intro_C0_critical_points}, the continuity of the invariant densities is determined by the behavior of the integrals
\[
I_\mathrm{d}(\epsilon, \eta) = \sum_{\sigma \in S} \int_0^\epsilon x^{-1+\eta} p_\sigma(x)\,dx,
\]
and  
\[
I_\mathrm{c}(\epsilon) = \sum_{\sigma \in S} \int_0^\epsilon \frac{1}{x (\log x)^2} \,dx.
\]
Divergence of $I_\mathrm{d}(\epsilon, \eta)$ implies discontinuity, while the finiteness of $I_\mathrm{c}(\epsilon)$ guarantees continuity under suitable asymptotics for $p_\sigma$. These integrals are analyzed by linearizing the deterministic dynamics around $x =  0$ and linking them to the expectations
$$
	E^{c}_\sigma = \mathbb E_\sigma \left[ \int_0^\tau e^{\int_0^t c_{\sigma_s}ds} dt \right].
$$

\subsection{Proof of Theorem~\ref{thm:continuity_critical_point} (i)}\label{sec:c0_critical_point_i}

The following classical representation of the invariant measure~\eqref{eq:stopping_time_representation} is at the heart of the link between $I_\mathrm{d}(\epsilon, \eta), I_\mathrm{c}(\epsilon)$ and $E^c_\sigma$.

\begin{Def}[Induced Markov chain] Let $\epsilon > 0$. Set $\tau_0 = 0$ as well as
	\begin{align*}
	\tilde \tau_n &= \inf \{ t \ge \tau_{n-1} : X_t \notin [0,\epsilon] \times S\}, \\
	\tau_n  &= \{t > \tilde \tau_n : X_t\in [0, \epsilon]\times S\},
	\end{align*}
	for all $n \ge 1$ and define $Z_n = X_{\tau_n}$.
\end{Def}

It follows from the strong Markov property of $X_t$ that $Z_n$ is a Markov chain (provided it is well defined, i.e.~$\tau_n < +\infty$ a.s.~for all $n \ge 0$). Its state space is $E_Z := [0, \epsilon] \times S$.

\begin{Lem}\label{lem:stopping_time_representation}
	Under Assumption~\ref{ass:integrability}, there exists $\delta > 0$ such that for all $\epsilon \in (0, \delta)$:
	\begin{enumerate}[label=(\roman*)]
		\item One has that $\mathbb E_\mu[\tau_1] < +\infty$ for all measures $\mu$ on $E_Z$ (in particular $Z_n$ is well defined).
		\item The Markov chain $Z_n$ admits a unique invariant measure $\pi_Z$.
		\item For all bounded (or positive) measurable $f : E \to \mathbb R$ one has
		\begin{align}\label{eq:stopping_time_representation}
		\pi(f) = \frac{1}{\mathbb E_{\pi_Z}\left[ \tau_1 \right]} \mathbb E_{\pi_Z} \left[\int_0^{\tau_1} f(X_t) dt\right]
		\end{align}
		where $\pi$ is the unique invariant measure of $X_t$.
	\end{enumerate}
\end{Lem}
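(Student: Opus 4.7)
The plan is to dispatch the three claims in order. The crucial decomposition is $\tau_1 = \tilde\tau_1 + (\tau_1 - \tilde\tau_1)$, splitting the return time into the exit time from $E_Z = [0,\epsilon] \times S$ and the hitting time of $E_Z$ from the exit point. The main obstacle I foresee is bounding $\tilde\tau_1$: condition~\ref{ass:D2_right_diff} tells us that for $\sigma \in S$ and $x \ge 0$ small the flow contracts toward $0$ and cannot exit $[0, \epsilon]$ from the right, so the exit from $E_Z$ is forced to happen through a velocity jump into $\Sigma \setminus S$, and one must quantify how fast such jumps occur uniformly in the starting state.

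For (i), I first handle $\tilde\tau_1$. Assumption~\ref{ass:constant_jump_rates_no_resetting} makes the rate matrix $\lambda$ irreducible on $\Sigma$, and $\Sigma \setminus S \ne \emptyset$ since $\Sigma_0 \ne \Sigma$. Hence there exist $n_0 \in \mathbb N$ and $p_0 > 0$ such that the embedded velocity chain, started from any $\sigma \in S$, reaches $\Sigma \setminus S$ in at most $n_0$ jumps with probability at least $p_0$. Since the holding times are exponential with uniformly bounded rate, this yields uniform exponential tails for the exit time of $\sigma_t$ from $S$, which dominates $\tilde\tau_1$; in particular $\sup_{z \in E_Z} \mathbb E_z[\tilde\tau_1] < +\infty$. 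For the second summand, $X_{\tilde\tau_1} \in K$ by~\ref{ass:D1_compact_set}, and combining the uniform TV-convergence in~\ref{ass:D4_TV_convergence} with $\pi(E_Z) > 0$ from~\ref{ass:D5_inside_support} yields $T > 0$ and $\rho > 0$ such that $\mathbb P_y(X_T \in E_Z) \ge \rho$ for every $y \in K$; iterating the strong Markov property then delivers geometric tails for $\tau_1 - \tilde\tau_1$, uniformly over the exit point. The role of $\delta$ is simply to ensure that $[0, \delta]$ lies in a neighborhood of $0$ where~\ref{ass:D2_right_diff} gives genuine contraction and inside the compact $K$.

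For (ii) and (iii), I follow a classical regenerative argument. First, the minorisation $\mathbb P_y(X_T \in E_Z) \ge \rho$ from the previous paragraph, evaluated at the return times $\tau_n$, upgrades to a Doeblin-type condition for $Z_n$; combined with (i) this yields Harris recurrence and hence a unique invariant probability $\pi_Z$, proving (ii). For (iii), define
\begin{equation*}
\Psi(f) := \mathbb E_{\pi_Z}\!\left[\int_0^{\tau_1} f(X_t)\,dt\right],
\end{equation*}
which is a finite positive measure on $E$ by (i). Using Fubini and the Markov property applied to $P_t f$, one rewrites $\Psi(P_t f) = \mathbb E_{\pi_Z}\!\left[\int_t^{\tau_1 + t} f(X_s)\,ds\right]$; splitting this integral as $\int_0^{\tau_1} - \int_0^t + \int_{\tau_1}^{\tau_1 + t}$ and invoking the strong Markov property at $\tau_1$ together with the $\pi_Z$-invariance of $Z_n$ to match the last two contributions yields $\Psi(P_t f) = \Psi(f)$. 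Hence $\Psi$ is $P_t$-invariant; normalising by $\Psi(E) = \mathbb E_{\pi_Z}[\tau_1]$ and invoking uniqueness of $\pi$ from~\ref{ass:D4_TV_convergence} gives $\pi = \Psi/\mathbb E_{\pi_Z}[\tau_1]$, which is precisely~\eqref{eq:stopping_time_representation}.
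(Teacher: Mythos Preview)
Your arguments for parts (i) and (iii) are correct and follow the paper closely for (i); for (iii) you actually spell out the regenerative/Kac computation that the paper dispatches by citing \cite[Th.~6.26]{benaim22course}, and your computation (using $1_{\{s<\tau_1\}}\in\mathcal F_s$, the Markov property, and then cancelling the boundary pieces via strong Markov at $\tau_1$ and $\pi_Z$-invariance) is sound.

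The gap is in (ii). The sentence ``the minorisation $\mathbb P_y(X_T\in E_Z)\ge\rho$ \dots\ upgrades to a Doeblin-type condition for $Z_n$'' is not a proof. A Doeblin condition for $Z_n$ requires a fixed probability measure $\nu$ on $E_Z$ and $c>0$ with $\mathbb P_z(Z_1\in\cdot)\ge c\,\nu(\cdot)$ for every $z\in E_Z$. What you have established is only that $X_T$ lies in $E_Z$ with uniformly positive probability; this controls neither the law of the first return point $X_{\tau_1}$ nor does it furnish a common minorising measure. Uniform TV-convergence of $X_t$ towards $\pi$ does not directly yield $\delta_y P_T\ge c\,\pi$ either (the overlap $\delta_yP_T\wedge\pi$ depends on $y$), and even if it did, transferring a time-$T$ minorisation of $X_t$ to a one-step minorisation of the induced chain $Z_n$ is a separate, non-automatic step.

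The paper closes this gap by a constructive path argument rather than TV-convergence: pick $\sigma^*\in\Sigma\setminus S$ reachable from $S$ in one jump with $v_{\sigma^*}(0)>0$, and observe that for $\epsilon$ small the process, starting anywhere in $E_Z$, reaches the single state $(\epsilon,\sigma^*)$ before $\tau_1$ with probability bounded below uniformly (using irreducibility of $S$ to reach a state from which one can jump to $\sigma^*$, then letting the positive drift carry $x_t$ to $\epsilon$). The strong Markov property at the hitting time of $(\epsilon,\sigma^*)$ then gives $\mathbb P_z(Z_1\in\cdot)\ge c\,\mathbb P_{(\epsilon,\sigma^*)}(X_{\tau_1}\in\cdot)$, which is exactly Doeblin with $\nu=\mathbb P_{(\epsilon,\sigma^*)}(X_{\tau_1}\in\cdot)$. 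You should replace your one-line ``upgrade'' with an argument of this type.
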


\begin{proof} (i) As $0 \in \mathring K$, there exists $\delta > 0$ such that $[0, \epsilon] \subset K$ for all $\epsilon \in (0, \delta)$. By the strong Markov property we have
	$$
	\mathbb E_{\mu} [\tau_1]= \mathbb E_{\mu}[\tilde\tau_1] + \mathbb E_{\mu}[\mathbb E_{X_{\tilde\tau_1}}[\bar\tau]] \le \mathbb E_{\mu}[\tilde\tau_1] + \sup_{(x, \sigma) \in [0, \epsilon] \times \Sigma} \mathbb E_{(x, \sigma)}[\bar\tau]
	$$
	where $\bar \tau = \inf \{t > 0 : X_t \in [0, \epsilon] \times S\}$ using that $X_{\bar\tau} \in [0, \epsilon] \times \Sigma$ a.s.~(by the continuity of $x_t$). One has $\mathbb E_{\mu}[\tilde\tau_1] \le \mathbb E_{\mu}[\tau]$ where $\tau = \inf \{ t \ge 0 : \sigma_t \notin S\}$. Because $\sigma_t$ is an irreducible Markov jump process with finite state space and $S \ne \Sigma$, we have $\mathbb E_{\mu}[\tau] < +\infty$.

	Because $\pi([0, \epsilon] \times S) > 0$ and $\sup_{(x, \sigma) \in K \times \Sigma} \Vert \delta_{(x, \sigma)} P_t - \pi \Vert_\mathrm{TV} \to 0$ when $t \to +\infty$, there exists $T > 0$ such that
	$$
	\inf_{(x,  \sigma) \in K \times\Sigma} \mathbb P_{(x, \sigma)} \left( X_T \in [0, \epsilon] \times S \right) =: p > 0.
	$$
	
	Because $
	\phi^\sigma_t(K) \subset K \text{ for all } t \ge 0 \text{ and } \sigma \in \Sigma
	$, one has $\mathbb P_{(x, \sigma)}(X_T \in K \times\Sigma) = 1$ for all $(x, \sigma) \in K\times \Sigma$. It follows from the Markov property that $\mathbb P_{(x, \sigma)}(\bar\tau > kT) \le (1 - p)^k$ for all $(x, \sigma) \in K \times \Sigma$. Thus
	$$
	\mathbb E_{(x, \sigma)}[\bar\tau] = T \mathbb E_{(x, \sigma)}[\bar\tau/T] = T \int_0^{+\infty} \mathbb P_{(x, \sigma)}(\tau/T > s) ds \le T \sum_{k = 0}^{+\infty} \mathbb P_{(x, \sigma)}(\tau/T > k) \le \frac{T}{p}
	$$
	for all $(x, \sigma) \in K \times \Sigma$. Putting everything together, we get $\mathbb E_{\mu} [\tau_1] < +\infty$.

	(ii) It suffices to show that $Z_n$ satisfies the Doeblin condition. Let $\sigma^* \in \Sigma \setminus S$ be such that there exists $\sigma \in S$ with $\lambda_{\sigma\sigma^*} > 0$ and assume without loss of generality that $v_{\sigma^*}(0) > 0$. Under Assumption~\ref{ass:integrability}, there exists $\delta > 0$ such that for all $\epsilon \in (0, \delta)$ one has $v_\sigma(x) \le 0$ for $(x, \sigma) \in E_Z$. Assume without loss of generality that $v_{\sigma^*}(x) > 0$ for $x \in [0, \epsilon]$. Setting $\hat\tau = \inf \{ t \ge 0 : X_t = (\epsilon, \sigma^*)\}$, one has that for all positive measurable $f : E_Z \to \mathbb R$
	\begin{align*}
		\mathbb E_{(x, \sigma)}[f(X_{\tau_1})] \ge \mathbb E_{(x, \sigma)} \left[ 1_{\{\hat\tau< \tau_1\}} \mathbb E_{X_{\hat\tau}} [f(X_{\tau_1})]\right] = \mathbb P_{(x, \sigma)} \left( \hat\tau< \tau_1 \right) \mathbb E_{(\epsilon, \sigma^*)} \left[ f(X_{\tau_1}) \right]
	\end{align*} 
	by the strong Markov property. Thus it suffices to show $\inf_{(x, \sigma) \in E_Z} P_{(x, \sigma)} \left( \hat\tau< \tau_1 \right) > 0$ to show the Doeblin property.
	
	Starting from an initial position $(x, \sigma) \in [0, \epsilon] \times \{\sigma^*\}$, if $\sigma_t$ does not jump before time $\epsilon/\left( \inf_{0 \le x \le \epsilon} v_{\sigma^*}(x) \right)$ then $X_t$ passes through the state $(\epsilon, \sigma^*)$ before the stopping time $\tau_1$. Hence
	$$
	\mathbb P_{(x, \sigma)} \left(\hat\tau< \tau_1\right) \ge \mathbb P_{(x, \sigma)} \left( X_{\tilde\tau_1} \in [0, \epsilon] \times \{ \sigma^* \} \right) e^{-\lambda_{\sigma^*} \epsilon/\left( \inf_{0 \le x \le \epsilon} v_{\sigma^*}(x) \right)}.
	$$
	The assertion now follows from the fact that $\inf_{(x, \sigma) \in E_Z} \mathbb P_{(x, \sigma)}(X_{\tilde\tau_1} \in [0, \epsilon] \times \{ \sigma^* \}) > 0$ is implied by the irreducibility of $S$.
	
	(iii)  It follows from~\cite[Th.~6.26]{benaim22course} that the right hand side of \eqref{eq:stopping_time_representation} is an invariant measure. Hence~\eqref{eq:stopping_time_representation} follows from the uniqueness of $\pi$.
\end{proof}

Theorem~\ref{thm:continuity_critical_point} (i) is derived by linearizing the deterministic dynamics around $0$.

\begin{proof}[Proof of Theorem~\ref{thm:continuity_critical_point}~(i)]
	
	Fix $\delta \in (0, \min_{\sigma \in S} a_\sigma)$ to be chosen later. Because $v_\sigma(x) = -a_\sigma x + o(x)$ when $x \to 0+$, there exists $\epsilon> 0$ such that for all $\sigma \in S$ one has
	$$
			v_\sigma(x) \le (-a_\sigma + \delta) x \le 0 \text{ for all } x \in [0, \epsilon].
	$$
	Grönwall's inequality implies
	$$
		x_t \le x_0 e^{\int_0^t (-a_{\sigma_s} + \delta) ds }.
	$$
	
	Fix $\eta \in (0, 1)$ to be chosen later. Taking $\epsilon$ smaller if necessary, one has by Lemma~\ref{lem:stopping_time_representation}
	\begin{align*}
	\sum_{\sigma\in S} \int_0^\epsilon x^{-1 + \eta} p_\sigma(x) dx &= \frac{1}{\mathbb E_{\pi_Z}\left[ \tau_1 \right]} \mathbb E_{\pi_Z} \left[\int_0^{\tau_1} 1_{\{X_t \in [0, \epsilon] \times S\}} x_t^{-1+\eta} dt\right] \\
	&= \frac{1}{\mathbb E_{\pi_Z}\left[\tau_1 \right]} \mathbb E_{\pi_Z} \left[\int_0^{\tilde \tau_1} x_t^{-1+\eta} dt\right] \\
	&\ge \frac{1}{\mathbb E_{\pi_Z}\left[\tau_1 \right]} \mathbb E_{\pi_Z} \left[\int_0^{\tilde \tau_1} x_0^{-1 + \eta} e^{\int_0^t (-1 + \eta) (-a_{\sigma_s} + \delta) ds}dt\right] \\
	&= \frac{1}{\mathbb E_{\pi_Z}\left[\tau_1 \right]}  \mathbb E_{\pi_Z} \left[\int_0^{\tau} x_0^{-1 + \eta} e^{\int_0^t (-1 + \eta) (-a_{\sigma_s} + \delta) ds}dt\right]
	\end{align*}
	using the fact that $v_\sigma(x) \le 0$ for $(x, \sigma) \in [0, \epsilon] \times S$ implies that $\tilde \tau_1 = \tau$ for the last equality.
	
	Choosing $\delta$ and $\eta$ such that $c_\sigma := (-1 + \eta)(-a_\sigma + \delta)$ satisfies $\max_{\sigma \in S} |c_\sigma - a_\sigma| < \gamma$ and using the strong Markov property yields
	$$
		\mathbb E_{\pi_Z} \left[\int_0^{\tau} x_0^{-1 + \eta} e^{\int_0^t c_{\sigma_s} ds}dt \right] \ge \underbrace{\mathbb E_{\pi_Z} \left[ x_0^{-1+\eta} \right]}_{>0} \underbrace{\left( \min_{\sigma \in S} E^c_\sigma \right)}_{=+\infty}.
	$$
\end{proof}

\subsection{Proof of Theorem~\ref{thm:continuity_critical_point} (ii)}

	Using the same ideas as in Section~\ref{sec:c0_critical_point_i}, one can show
	$$
		I_\mathrm{c}(\epsilon) \le \frac{1}{\mathbb E_{\pi_Z}\left[ \tau_1 \right]} \mathbb E_{\pi_Z} \left[ \frac{1}{x_0 (\log x_0)^2} \right]\left(\max_{\sigma \in S} E^c_\sigma\right)
	$$
	for suitably chosen $c_\sigma$. Assuming that the $E^c_\sigma$ are finite, the finiteness of $I_\mathrm{c}(\epsilon)$ follows from the next lemma, whose proof is presented after that of Theorem~\ref{thm:continuity_critical_point} (ii).
	
\begin{Lem}\label{lem:x0_integrability}
	If Assumption~\ref{ass:integrability} is satisfied and $S$ is backward-complete then
	$$
		\mathbb E_{\pi_Z} \left[ \frac{1}{x_0 (\log x_0)^2} \right] < +\infty.
	$$
\end{Lem}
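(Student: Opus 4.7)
My plan is to represent $\pi_Z$ through a flux identity relating it to the invariant density $p_\sigma$ of $X_t$, and then to use backward-completeness to guarantee that $p_{\bar\sigma}$ is bounded near $x=0$ for $\bar\sigma\in S_{\mathrm{in}}$.

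First, I would classify the entries into $E_Z=[0,\epsilon]\times S$. From the proof of Lemma~\ref{lem:stopping_time_representation} one has $v_\sigma\le 0$ on $[0,\epsilon]$ for $\sigma\in S$, and Assumption~\ref{ass:D2_right_diff} ensures the flow cannot reach $x=0$ in finite time. Thus every exit from $E_Z$ is a $\sigma$-jump out of $S$, and every entry $Z_n=X_{\tau_n}$ is either (a) a $\sigma$-jump from some $\bar\sigma\in S_{\mathrm{in}}$ into some $\sigma''\in S$ occurring at a position $x\in[0,\epsilon]$, or (b) a downward flow-crossing of $\{x=\epsilon\}$ by a trajectory with $\sigma\in S$. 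Only type~(a) can produce entries with $x$ near $0$.

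Next I would establish the Palm/flux identity that, for every bounded measurable $g:E_Z\to\mathbb R_+$,
$$
\pi_Z(g) = \mathbb E_{\pi_Z}[\tau_1]\left[\sum_{\bar\sigma\in S_{\mathrm{in}},\,\sigma''\in S}\lambda_{\bar\sigma\sigma''}\int_0^\epsilon p_{\bar\sigma}(x)g(x,\sigma'')\,dx + \sum_{\sigma\in S}|v_\sigma(\epsilon)|p_\sigma(\epsilon)g(\epsilon,\sigma)\right].
$$
The first sum counts the $\pi$-expected number of type-(a) entries per unit time via the standard expected-number-of-jumps formula for PDMPs; the second counts type-(b) entries via a boundary-flux formula obtained by testing the Fokker-Planck relation of Lemma~\ref{lem:fokker_planck} against smooth approximations of $1_{[0,\epsilon]}$ and identifying the surface term at $x=\epsilon$. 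The identity itself then follows from the renewal theorem $N_T/T\to 1/\mathbb E_{\pi_Z}[\tau_1]$ together with the ergodic theorem for $Z_n$, both justified by Lemma~\ref{lem:stopping_time_representation}.

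Finally I would apply the identity to $g(x,\sigma)=1_{\{0<x\le\epsilon\}}/(x(\log x)^2)$ through truncation and monotone convergence. Backward-completeness gives $S_{\mathrm{in}}\cap\Sigma_0=\emptyset$, so $v_{\bar\sigma}(0)\ne 0$ for every $\bar\sigma\in S_{\mathrm{in}}$; Theorem~\ref{thm:continuity_noncritical_interval} then ensures each $p_{\bar\sigma}$ is continuous, and in particular bounded, on $[0,\epsilon]$ for $\epsilon$ small enough. The boundary term at $x=\epsilon$ is manifestly finite, and each jump integral is controlled by $\Vert p_{\bar\sigma}\Vert_\infty\int_0^\epsilon dx/(x(\log x)^2)=\Vert p_{\bar\sigma}\Vert_\infty/|\log\epsilon|<+\infty$, proving the lemma. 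The main obstacle is the boundary-crossing part of the flux identity, a Kac-Rice-type statement whose rigorous derivation for PDMPs requires some care; this can be sidestepped by localizing $g$ to $[0,\epsilon/2]\times S$ (where the boundary term vanishes) and handling the tail $x\in(\epsilon/2,\epsilon)$ separately, since $1/(x(\log x)^2)$ is bounded there.
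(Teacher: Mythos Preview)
Your approach is correct and genuinely different from the paper's. You express $\pi_Z$ analytically through a Palm/flux identity tied to the invariant densities $p_{\bar\sigma}$ with $\bar\sigma\in S_{\mathrm{in}}$, then invoke backward-completeness ($S_{\mathrm{in}}\cap\Sigma_0=\emptyset$) together with Theorem~\ref{thm:continuity_noncritical_interval} to get $p_{\bar\sigma}$ bounded near $0$, after which the conclusion follows from $\int_0^\epsilon dx/(x(\log x)^2)<\infty$. The paper instead stays entirely probabilistic: it uses the invariance $\mathbb E_{\pi_Z}[g(x_0)]=\mathbb E_{\pi_Z}[g(x_{\tau_1})]$, decomposes according to which post-exit jump time $T_k$ coincides with $\tau_1$, controls $\sum_k\mathbb P_{\pi_Z}(\tau_1>T_{k-1})$ via Chernoff bounds, and then does a case analysis on the state just before the entry jump; backward-completeness is used to kill the case $\sigma\in\Sigma_0\setminus S$, and the remaining case $\sigma\notin\Sigma_0$ is handled by the change of variables $y=\phi^\sigma_t(x)$ and the same integrability of $1/(y(\log y)^2)$. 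Your route is shorter and more conceptual, and the flux identity you need (for $g$ localized away from $x=\epsilon$, so only jump entries matter) can be made rigorous by combining the L\'evy-system formula for PDMP jump functionals with the representation of $\pi$ in Lemma~\ref{lem:stopping_time_representation}(iii); the paper's route is longer but entirely self-contained, avoiding any appeal to Palm-type machinery.
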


\begin{proof}[Proof of Theorem~\ref{thm:continuity_critical_point} (ii)]
	Because $v_\sigma(x) = -a_\sigma x + o(x)$ when  $x \to 0+$, there exists $\epsilon > 0$ such that for all $\sigma \in S$
	$$
		v_\sigma(x) \ge (-a_\sigma - \gamma/2) x \text{ and } v_\sigma(x) \le 0 \text{ for all } x \in [0, \epsilon].
	$$
	Hence it follows from the comparison principle for ODEs that
	$$
		x_t \ge x_0 \underbrace{e^{\int_0^t -a_{\sigma_s} -\gamma/2 ds}}_{=: e_t}.
	$$
Choosing $\epsilon > 0$ smaller if necessary, one may assume that $x \mapsto 1/[x (\log x)^2]$ is decreasing on~$[0, \epsilon]$. Hence
	$$
	\frac{1}{x_t (\log x_t)^2} \le \frac1{x_0 e_t (\log (x_0 e_t))^2} \le \frac{1}{x_0 (\log x_0)^2 e_t} \text{ for all } t \le \tilde \tau_1
	$$
	using the fact that $-a_\sigma - \gamma/2 < 0$ for all $\sigma \in S$ implies $e_t \le 1$ for the second inequality. Taking $\epsilon > 0$ smaller if necessary, it follows from Lemma~\ref{lem:stopping_time_representation} that
	\begin{align*}
	\sum_{\sigma \in S} \left[\int_0^\epsilon \frac{1}{x (\log x)^2} \rho_\sigma(x) dx\right] &= \frac{1}{\mathbb E_{\pi_Z}\left[ \tau_1 \right]} \mathbb E_{\pi_Z} \left[\int_0^{\tau_1} \frac{1}{x_t (\log x_t)^2} 1_{\{X_t \in [0, \epsilon] \times S\}} dt\right] \\
	&= \frac{1}{\mathbb E_{\pi_Z}\left[ \tau_1 \right]} \mathbb E_{\pi_Z} \left[\int_0^{\tilde\tau_1} \frac{1}{x_t (\log x_t)^2} dt\right] \\
	&= \frac{1}{\mathbb E_{\pi_Z}\left[ \tau_1 \right]} \mathbb E_{\pi_Z} \left[\int_0^{\tau} \frac{1}{x_t (\log x_t)^2} dt\right] \\
	&\le \frac{1}{\mathbb E_{\pi_Z}\left[ \tau_1 \right]} \mathbb E_{\pi_Z} \left[\int_0^{\tau} \frac{1}{x_0 (\log x_0)^2 e_t} dt\right] \\
	&\le \frac{1}{\mathbb E_{\pi_Z}\left[ \tau_1 \right]} \mathbb E_{\pi_Z} \left[ \frac{1}{x_0 (\log x_0)^2} \right]\left(\max_{\sigma \in S} \mathbb E_\sigma \left[\int_0^{\tau} \frac{1}{e_t} dt\right]\right).
	\end{align*}
By Lemma~\ref{lem:x0_integrability} one has
$$
	\mathbb E_{\pi_Z} \left[ \frac{1}{x_0 (\log x_0)^2} \right] < +\infty.
$$
Taking $c_\sigma = a_\sigma+ \gamma/2$ one has
$$
\max_{\sigma \in S} \mathbb E_\sigma \left[\int_0^{\tau} \frac{1}{e_t}\right] = \max_{\sigma \in S} E^c_\sigma < +\infty
$$
by assumption.
\end{proof}

\begin{proof}[Proof of Lemma~\ref{lem:x0_integrability}]
	Denote
	\begin{align*}
	T_0 &= \tilde \tau_1, & T_1 &= \inf\{t\ge T_0: \sigma_{t-} \ne \sigma_{t+}\},  & T_2 &= \inf\{t\ge T_1: \sigma_{t-} \ne \sigma_{t+}\}, 
	\end{align*}
	and so on the jumps of the velocity $\sigma_t$ after the time $\tilde \tau_1$. The invariance of $\pi_Z$ implies
	\begin{align*}
	\mathbb E_{\pi_Z} \left[ \frac{1}{x_0 (\log x_0)^2}\right] &= \mathbb E_{\pi_Z} \left[ \frac{1}{x_{\tau_1} (\log x_{\tau_1})^2}\right] \\
	&= \mathbb E_{\pi_Z} \left[ 1_{\{\tau_1 \notin \{T_0, T_1, \ldots\}\}} \frac{1}{x_{\tau_1} (\log x_{\tau_1})^2}\right] + \sum_{k = 1}^{+\infty} \mathbb E_{\pi_Z} \left[ 1_{\{\tau_1 = T_k\}} \frac{1}{x_{\tau_1} (\log x_{\tau_1})^2}\right].
	\end{align*}
	
	If $\tau_1 \notin \{ T_0, T_1, \ldots \}$ then $x_{\tau_1} = \epsilon$ and thus
	$$
	\mathbb E_{\pi_Z} \left[ 1_{\{\tau_1 \notin \{T_0, T_1, \ldots\}\}} \frac{1}{x_{\tau_1} (\log x_{\tau_1})^2}\right] = \mathbb P_{\pi_Z} \left( \tau_1 \notin \{T_0, T_1, \ldots\}\right) \frac{1}{\epsilon (\log \epsilon)^2} < +\infty.
	$$
	
	Furthermore, denoting $\bar\tau_1 = \inf \{t > 0: X_t \in [0, \epsilon] \times S\}$ and $\bar T_1 = \inf\{t \ge 0 : \sigma_{t-} \ne \sigma_{t+}\}$, it follows from the strong Markov property that
	\begin{align*}
	\sum_{k = 1}^{+\infty} \mathbb E_{\pi_Z} \left[ 1_{\{\tau_1 = T_k\}} \frac{1}{x_{\tau_1} (\log x_{\tau_1})^2}\right] &= \sum_{k = 1}^{+\infty} \mathbb E_{\pi_Z} \left[ 1_{\{\tau_1 > T_{k-1}\}} \mathbb E_{X_{T_{k-1}}} \left[ 1_{\{\bar\tau_1 = \bar T_1\}}\frac{1}{x_{\bar \tau_1} (\log x_{\bar \tau_1})^2}\right]\right] \\
	&\le \left( \sum_{k = 1}^{+\infty} \mathbb P_{\pi_Z} \left( \tau_1 > T_{k-1}\right) \right) \sup_{(x, \sigma) \notin [0, \epsilon] \times S} \mathbb E_{(x, \sigma)} \left[ \frac{1_{\{\bar\tau_1 = \bar T_1\}}}{x_{\bar \tau_1} (\log x_{\bar \tau_1})^2} \right].
	\end{align*}
	
	Fix $a > 0$ to be chosen later. One has
	\begin{align*}
	\sum_{k = 1}^{+\infty} \mathbb P_{\pi_Z} \left( \tau_1 > T_{k-1}\right) &\le \sum_{k = 0}^{+\infty} \mathbb P_{\pi_Z} \left( \tau_1 > a k \right) + \sum_{k = 0}^{+\infty} \mathbb P_{\pi_Z} \left( T_k < a k \right) \\
	&= \mathbb E_{\pi_Z} \left[ \sum_{k = 0} 1_{\{k < \tau_1/a\}}  \right] + \sum_{k = 0}^{+\infty} \mathbb P_{\pi_Z} \left( T_k < a k \right) \\
	&\le \underbrace{\mathbb E_{\pi_Z} \left[  \tau_1/a + 1 \right]}_{< +\infty} + \sum_{k = 0}^{+\infty} \mathbb P_{\pi_Z} \left( T_k < a k \right)
	\end{align*}
	
	Let $E_i$ be independent exponential random variables with rate $\max_{\sigma \in \Sigma} \lambda_\sigma$. Then, by stochastic domination, one has
	$$
	\mathbb P_{\pi_Z} \left(T_k < ak\right) \le \mathbb P\left( \sum_{i = 0}^{k-1} E_i < ak \right) = \mathbb P\left( \frac1k \sum_{i = 0}^{k-1} E_i < a \right).
	$$
	Hence, if $a < \mathbb E[E_i]$,  Chernoff bounds show that $\mathbb P\left( \frac1k \sum_{i = 0}^{k-1} E_i < a \right)$ decays exponentially with $k$ and hence
	$$
	\sum_{k = 1}^{+\infty} \mathbb P_{\pi_Z} \left( T_k < a k \right) < +\infty.
	$$
	
	It remains to show 
	$$
	\sup_{(x, \sigma) \notin [0, \epsilon] \times S} \mathbb E_{(x, \sigma)} \left[  \frac{1_{\{\bar\tau_1 = \bar T_1\}}}{x_{\bar \tau_1} (\log x_{\bar \tau_1})^2} \right] < +\infty.
	$$
	
	Let $\sigma \in \Sigma$ be arbitrary but fixed. Distinguish between the following cases
	\begin{itemize}
		\item \underline{Case $(x, \sigma) \in (\mathbb R \setminus [0, \epsilon]) \times S$.} The definition of $\bar\tau_1$ implies that $x_t \notin [0, \epsilon]$ for $t < \bar\tau_1$ and $x_t \in [0, \epsilon]$ for $t > \bar\tau_1$ close enough to $\bar\tau_1$. Together with the continuity of $t \mapsto x_t$ this implies $x_{\bar\tau_1} \in \{0, \epsilon\}$. Because $v_\sigma(0) = 0$, $x_{\bar\tau_1} = 0$ would imply $x_t = 0$ for all $t < \bar\tau_1$. This is absurd. Hence $x_{\bar\tau_1} = \epsilon$. It follows
		$$
			\mathbb P_{(x, \sigma)} (\bar\tau_1= \bar T_1) \le \mathbb P_{(x, \sigma)} (\bar T_1 \in \{ t \ge 0 : \phi^\sigma_t(x) = \epsilon \}) = 0
		$$
		using the fact that $\{ t \ge 0 : \phi^\sigma_t(x) = \epsilon \}$ is either a singleton or the empty set.
		\item \underline{Case $\sigma \in \Sigma_0 \setminus S$.} Then the backward-completeness of $S$ implies that one cannot go from $\sigma \in \Sigma_0 \setminus S$ to any state in $S$ in one jump. Thus $\mathbb P_{(x, \sigma)}(\bar \tau_1 = \bar T_1) = 0$.
		
		\item \underline{Case $\sigma \notin \Sigma_0$.} Because $v_\sigma(0) \ne 0$ there exists $\tilde \epsilon > 0$ such that $\inf_{x \in [0, \tilde \epsilon]} |v_\sigma(x)| > 0$. One has
		$$
			\mathbb E_{(x, \sigma)} \left[ \frac{1_{\{\bar\tau_1 = \bar T_1\}}}{x_{\bar \tau_1} (\log x_{\bar \tau_1})^2} \right] \le \mathbb E_{(x, \sigma)} \left[ \frac{1_{\{\bar\tau_1 = \bar T_1\}} 1_{\{x_{\bar \tau_1} \le \tilde \epsilon\}}}{x_{\bar \tau_1} (\log x_{\bar \tau_1})^2} \right] + \frac{1}{\tilde\epsilon (\log \tilde\epsilon)^2}.
		$$
		
		Define $\varphi(t) = \phi_t^\sigma(x)$. One has
		\begin{align*}
		&\mathbb E_{(x, \sigma)} \left[ \frac{1_{\{\bar\tau_1 = \bar T_1 \}} 1_{\{x_{\bar \tau_1} \le \tilde \epsilon\}}}{x_{\bar \tau_1} (\log x_{\bar \tau_1})^2} \right] \\
		&= \int_0^{+\infty} \frac{1_{\{\varphi(t) \in [0, \tilde \epsilon]\}}}{\varphi(t) (\log \varphi(t))^2}  \lambda_\sigma e^{-\lambda_\sigma t} \left( \sum_{\tilde\sigma \in S} \frac{\lambda_{\sigma\tilde\sigma}}{\lambda_\sigma}\right) dt \\
		&\le \left( \sum_{\tilde\sigma \in S} {\lambda_{\sigma\tilde\sigma}}\right) \int_0^{+\infty} \frac{1_{\{\varphi(t) \in [0,\tilde \epsilon]\}}}{\varphi(t) (\log \varphi(t))^2} dt.
		\end{align*}
	\end{itemize}

	If $\varphi(t) = x$ is constant then $v_\sigma(x) = 0$ so $\inf_{x \in [0, \tilde\epsilon]} |v_\sigma(x)| > 0$ implies that $x \notin [0, \tilde\epsilon]$. Hence $1_{\{\varphi(t) \in [0,\tilde \epsilon]\}} = 0$ for all $t \ge 0$.

	Because $v_\sigma$ is Lipschitz, if $v_\sigma(x) \ne 0$ then $\varphi(t) = \phi^\sigma_t(x)$ is a $C^1$ diffeomorphism from $(0, +\infty)$ to its image. One can thus make the change of variable $y = \varphi(t)$ and get
	\begin{align*}
	\int_0^{+\infty} \frac{1_{\{\varphi(t) \in [0, \tilde\epsilon]\}}}{\varphi(t) (\log \varphi(t))^2}  dt &= \int_{\varphi(0)}^{\lim_{t\to+\infty}\varphi(t)} \frac{1_{\{y \in [0, \tilde\epsilon]\}}}{y (\log y)^2} \frac{1}{\varphi'(\varphi^{-1}(y))} dy \\
	&\le \frac{1}{\inf_{x \in [0, \tilde\epsilon]} |v_\sigma(x)|} \int_0^{\tilde\epsilon} \frac1{y (\log y)^2} dy < +\infty
	\end{align*}
	using the fact that $|\varphi'(\varphi^{-1}(y))|^{-1} \le (\inf_{x \in [0, \tilde\epsilon]} |v_\sigma(x)|)^{-1}$ for all $y \in [0, \tilde \epsilon]$.
\end{proof}

\subsection{Proof of Theorem~\ref{thm:continuity_critical_point} (iii)}

The key observation to deduce continuity from Theorem~\ref{thm:continuity_critical_point} (ii) is to notice that if $\sigma^* \in S$ is s.t.
\begin{equation}\label{eq:pSigma_asymptotic}
p_{\sigma^*}(x) \underset{x \to 0}{\sim} C x^\nu (\log x)^k \text{ with } C \ne 0, \nu \in \mathbb R \text{ and } k \in \mathbb N
\end{equation}
then
$$
\sum_{\sigma \in S} \int_0^\epsilon \frac{1}{x (\log x)^2} p_\sigma(x) dx < +\infty
$$
implies $\nu > 0$ or $\nu = k = 0$. In particular $p_{\sigma^*}$ admits a limit to the right. The following lemma establishes a slightly weakened version of~\eqref{eq:pSigma_asymptotic}.

\begin{Lem}\label{lem:asymptotics_for_analytic_vector_fields}
	Assume that the $v_\sigma$ are all analytic at $0$ and $a_\sigma := -v'_\sigma(0) \ne 0$ for all $\sigma \in \Sigma_0$. If $B_0$ (defined as in Assumption~\ref{ass:continuity}) is invertible and has only real eigenvalues then for all $\sigma \in \Sigma$
	$$
		p_\sigma(x) = o(1) \text{ or } p_\sigma(x) = C x^{\nu} (\log x)^k + o(x^\nu (\log x)^k) + o(1) \text{ when } x \to 0+
	$$
	where $C \in \mathbb R^*$, $\nu \in \mathbb R$ and $k \in \mathbb N$.
\end{Lem}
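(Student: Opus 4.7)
The plan is to transform the stationary Fokker--Planck system into a first-order linear ODE with a regular singular point at $x=0$ and then invoke classical Frobenius theory. On a punctured right-neighborhood $(0,\epsilon)$ of $0$, no vector field vanishes, so Theorem~\ref{thm:Cr_regularity_noncritical_interval} gives that the $p_\sigma$ are smooth (in fact analytic, since the data are) and Lemma~\ref{lem:fokker_planck} can be rewritten in strong form. Setting $u_\sigma:=v_\sigma p_\sigma$ and integrating by parts yields
\[
u_\sigma'(x)=\sum_{\tilde\sigma\in\Sigma}\lambda_{\tilde\sigma\sigma}\,\frac{u_{\tilde\sigma}(x)}{v_{\tilde\sigma}(x)}\qquad\text{on }(0,\epsilon).
\]

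Multiplying by $x$, I would split the sum according to whether $\tilde\sigma\in\Sigma_0$. For $\tilde\sigma\in\Sigma_0$, $v_{\tilde\sigma}(x)=-a_{\tilde\sigma}x+O(x^2)$ gives $x/v_{\tilde\sigma}(x)=-1/a_{\tilde\sigma}+O(x)$; for $\tilde\sigma\notin\Sigma_0$, $x/v_{\tilde\sigma}(x)=O(x)$ is analytic. Collecting the constant-in-$x$ parts, the vector $U=(u_\sigma)_{\sigma\in\Sigma}$ satisfies
\[
xU'(x)=B(x)U(x),\qquad B(x)=B_0+xB_1(x),
\]
where $B_0$ is exactly the matrix of Assumption~\ref{ass:continuity} and $B_1$ is analytic near $0$. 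Hence $0$ is a regular singular point, and the classical theory \cite[Sec.~3.11]{taylor21} provides a fundamental matrix whose columns are finite sums $x^{\mu_j}(\log x)^{k}H_{j,k}(x)$, with $H_{j,k}$ analytic near $0$, the $\mu_j$ ranging over the eigenvalues of $B_0$, and $k$ bounded by the dimension $|\Sigma|$; logarithmic factors arise both from possible Jordan blocks of $B_0$ and from integer-difference resonances between the $\mu_j$. Because the $\mu_j$ are real by hypothesis, only real powers of $x$ and integer powers of $\log x$ appear.

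Writing the invariant-measure solution $U$ as a fixed finite linear combination of these basis solutions, I would then extract, as $x\to 0+$, the term with smallest exponent $\mu$ (and, for ties, largest $k$) whose coefficient is nonzero in the $\sigma$-th component. If no such term survives then $u_\sigma$ vanishes to arbitrarily high order and in particular $u_\sigma(x)=o(1)$; otherwise $u_\sigma(x)=Cx^{\mu}(\log x)^{k}+o(x^{\mu}(\log x)^{k})$ with $C\ne 0$. Converting back, for $\sigma\notin\Sigma_0$ the function $1/v_\sigma$ is analytic and nonvanishing at $0$, so $p_\sigma$ inherits the asymptotic of $u_\sigma$ up to a bounded analytic correction absorbed in the trailing $o(1)$; for $\sigma\in\Sigma_0$ the identity $v_\sigma(x)=-a_\sigma x(1+O(x))$ shifts the leading exponent by $-1$, yielding $p_\sigma(x)\sim -(C/a_\sigma)\,x^{\mu-1}(\log x)^{k}$, of the claimed form with $\nu=\mu-1$. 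The main obstacle I anticipate is the careful bookkeeping in the Frobenius decomposition when $B_0$ fails to be diagonalizable or when distinct eigenvalues differ by integers: both phenomena inject logarithmic corrections that must be tracked exactly. The reality assumption on the spectrum is precisely what rules out oscillatory complex-exponent contributions and keeps the leading asymptotic in the real form $Cx^\nu(\log x)^k$.
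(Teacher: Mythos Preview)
Your proposal is correct and follows essentially the same route as the paper: both set $\varphi_\sigma = v_\sigma p_\sigma$, recognize the resulting system $xU' = B(x)U$ as having a regular singular point at $0$ with residue matrix exactly $B_0$, invoke the Frobenius representation from \cite[Sec.~3.11]{taylor21} (the paper writes it as $\varphi(x) = U(x)\,x^{B_0} x^{\Gamma} u$ with $\Gamma$ nilpotent and $U$ analytic, then truncates $U$), and extract the leading real monomial $x^\nu(\log x)^k$ before dividing by $v_\sigma$. One small refinement: in your ``no surviving term'' case you need $u_\sigma = o(x)$ (which your phrase ``vanishes to arbitrarily high order'' already delivers), not merely $u_\sigma = o(1)$, since for $\sigma \in \Sigma_0$ you divide by $v_\sigma(x) \sim -a_\sigma x$ to reach $p_\sigma = o(1)$.
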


\begin{proof} As in~\cite[Sec.~7.2]{bakhtin15}, it follows from~\cite[Prop.~3.11.7]{taylor21} that there exists a nilpotent matrix $\Gamma$ such that
	$$
		\varphi(x) = (\varphi_\sigma(x))_{\sigma \in \Sigma} = (v_\sigma(x) p_\sigma(x))_{\sigma\in\Sigma}
	$$
	is given by
    $$
        \varphi(x) = U(x) x^{B_0} x^\Gamma u
    $$
    where $U(x) = \sum_{n =0}^{+\infty} U_n x^n$ is a matrix-valued analytic function, $u = (u_\sigma)_{\sigma \in \Sigma}$ is a vector and $x^{B_0}$ (resp.~$x^\Gamma$) stands for the matrix $e^{(\log x) B_0}$ (resp.~$e^{(\log x) \Gamma}$). Setting $N = |\Sigma|$, one has $\Gamma^N = 0$ so the entries of $x^\Gamma u$ are of the form
    $$
        \sum_{n = 0}^{N-1} a_n (\log x)^n
    $$
    where $a_n \in \mathbb R$ and the entries of $x^{B_0} x^\Gamma u$ are of the form
    $$
        \sum_{n = 0}^{N-1} \sum_{m = 0}^{N - 1} a_{nm} x^{\kappa_n} (\log x)^m
    $$
    where $a_{nm} \in \mathbb R$ and $\kappa_0, \ldots, \kappa_{N-1}$ are the real eigenvalues of $B_0$.

    Taking $M > 1 - \min \kappa_n$ and writing
    $$
        \varphi(x) = \left(\sum_{n = 0}^{M-1} U_n x^n \right) x^{B_0} x^{\Gamma} u + \underbrace{\left(\sum_{n = M}^{+\infty} U_n x^n \right) x^{B_0} x^{\Gamma} u}_{= o(x)}
    $$
    it follows that $\varphi_\sigma(x) = o(x)$ or $\varphi_\sigma(x) = C x^{\nu} (\log x)^k + o(x^{\nu} (\log x)^k) + o(x)$ for $C \ne 0$, $\nu \in \mathbb R$ and $k \in \mathbb N$. The result now follows from $p_\sigma = \varphi_\sigma / u_\sigma$.
\end{proof}

Proving Theorem~\ref{thm:continuity_critical_point}~(iii) now reduces to verifying that the left and right limits of $p_\sigma$ coincide.

\begin{proof}[Proof of Theorem~\ref{thm:continuity_critical_point} (iii)] Let $\sigma \in S$ be arbitrary but fixed. If $p_\sigma(x) = o(1)$ when $x \to 0+$ then $p_\sigma$ admits a limit to the right at $x = 0$. If not, Lemma~\ref{lem:asymptotics_for_analytic_vector_fields} implies that $p_\sigma(x) = C x^\nu (\log x)^k + o(x^\nu (\log x)^k) + o(1)$ when $x \to 0+$. It follows from Theorem~\ref{thm:continuity_critical_point} (ii) that there exists $\epsilon > 0$ such that
    $$
        \int_0^\epsilon \frac{1}{x (\log x)^2} p_\sigma(x)  dx < +\infty.
    $$
    This implies $\nu > 0$ or $\nu = k = 0$ so $p_\sigma(x)$ admits a limit to the right.
    
    If $\pi([-\epsilon, 0] \times \{ \sigma \}) > 0$ for all $\epsilon > 0$ then it follows from the argument above that $p_\sigma$ also admits a limit to the left. If there exists $\epsilon > 0$ such that $\pi([-\epsilon, 0] \times \{ \sigma \}) = 0$ then $\lim_{x \to 0-} p_\sigma(x) = 0$. In both cases $p_\sigma$ admits a limit to the left at $x = 0$.

    It remains to show that the limits to the left and right of $x = 0$ coincide. By Theorem~\ref{thm:Cr_regularity_noncritical_interval}, the analyticity of the $v_\sigma$ at $x = 0$ implies that there exists $\epsilon > 0$ such that $p_\sigma \in C^\infty(0, \epsilon)$ for all $\sigma \in\Sigma$. Hence it follows from Lemma~\ref{lem:fokker_planck} that
    $$
        -(v_\sigma p_\sigma)' + \sum_{\tilde\sigma \in \Sigma} \lambda_{\tilde\sigma\sigma} p_{\tilde\sigma} = -(v_\sigma p_\sigma)' + \sum_{\tilde\sigma \in S \cup S_\mathrm{in}} \lambda_{\tilde\sigma\sigma} p_{\tilde\sigma} = 0 \text{ for all } \sigma \in S
    $$
    in the strong sense on $(0, \epsilon)$.

    By assumption $S_\mathrm{in}\cap \Sigma_0 = \emptyset$ hence Theorem~\ref{thm:continuity_noncritical_interval} implies that $p_\sigma$ is continuous at $x = 0$ for $\sigma \in S_\mathrm{in}$. It follows that the limit of
    $$
    (v_\sigma p_\sigma)' = \sum_{\tilde\sigma \in S \cup S_\mathrm{in}} p_\sigma
    $$
    when $x\to0+$ exists. One has
    \begin{align*}
        \lim_{x \to 0+}(v_\sigma p_\sigma)'(x) &= \lim_{h \to 0+} \frac{v_\sigma(2h) p_\sigma(2h) - v_\sigma(h)p_\sigma(h)}{h} \\
        &= \lim_{h\to0+} \frac{v_\sigma(2h) - v_\sigma(h)}{h} p_\sigma(2h) + \lim_{h \to 0+} \frac{v_\sigma(h)}{h} \left( p_\sigma(2h) - p_\sigma(h) \right) \\
        &= v_\sigma'(0) p_\sigma(0+).
    \end{align*}

    It follows
    $$
        a_\sigma p_\sigma(0+) + \sum_{\tilde\sigma \in S \cup S_\mathrm{in}} \lambda_{\tilde\sigma\sigma} p_\sigma(0+) = 0
    $$
    so the invertibility of $A$ implies
    $$
    (p_\sigma(0+))_{\sigma \in S \cup S_\mathrm{in}} = A^{-1} \left( p_\sigma(0+) 1_{\{\sigma \in S_\mathrm{in}\}}\right)_{\sigma \in S\cup S_\mathrm{in}}.
    $$
    The same arguments imply
    $$
    (p_\sigma(0-))_{\sigma \in S \cup S_\mathrm{in}} = A^{-1} \left( p_\sigma(0-) 1_{\{\sigma \in S_\mathrm{in}\}}\right)_{\sigma \in S\cup S_\mathrm{in}}.
    $$
    The result now follows from the continuity of $p_\sigma$ for $\sigma \in S_\mathrm{in}$.
\end{proof}

\subsection{\texorpdfstring{Explicit computation of $E^{c}_\sigma$}{Explicit computation of Ecσ}}\label{sec:Ec_finiteness}

The goal of Theorem~\ref{thm:continuity_critical_point} is to obtain a threshold for the transition rates above which the invariant densities are continuous and below which they diverge. To achieve this, the conditions 
$$
\max_{\sigma \in S} E^c_\sigma  < +\infty \text{ and } \min_{\sigma \in S} E^c_\sigma = +\infty
$$
have to be made explicit in terms of model parameters. The upcoming Lemma~\ref{lem:Ec_ec_general_properties} shows that these quantities can be computed using the matrix $M^c = (M^c_{\sigma\tilde\sigma})_{\sigma,\tilde\sigma\in\Sigma}$ given by
$$
M^c_{\sigma\tilde\sigma} =
\left\{
\begin{array}{cl}
\lambda_{\sigma\tilde\sigma} + c_\sigma 1_{\{\sigma = \tilde\sigma\}} & \text{ for } \sigma \in S, \\
1_{\{\sigma = \tilde\sigma\}} & \text{ for } \sigma \notin S,
\end{array}
\right.
$$
and the linear system
\begin{align}\label{eq:ec_linear_system}
\sum_{\tilde\sigma \in \Sigma} \lambda_{\sigma\tilde\sigma} x_{\tilde\sigma}+ c_\sigma x_\sigma = -1 & \text{ for } \sigma \in S, \\
\notag x_\sigma = 0 & \text{ for } \sigma \notin S.
\end{align}
If $M^c$ is invertible then~\eqref{eq:ec_linear_system} has a unique solution, which we denote $e^c = (e^c_\sigma)_{\sigma \in \Sigma}$. 

\begin{Lem}\label{lem:Ec_ec_general_properties} \mbox{}
	\begin{enumerate}[label=(\roman*)]
		\item If $S$ is irreducible then either $E^c_\sigma < +\infty$ for all $\sigma \in S$ or $E^c_\sigma = +\infty$ for all $\sigma \in S$.
		\item If $E^c_\sigma < +\infty$ for all $\sigma \in S$ then $E^c = (E^c_\sigma)_{\sigma \in \Sigma}$ is a solution of~\eqref{eq:ec_linear_system}. In particular, if $M^c$ is invertible then $E^c = e^c$.
		\item If $x = (x_\sigma)_{\sigma \in \Sigma}$ is a solution of~\eqref{eq:ec_linear_system} and $x_\sigma \ge 0$ for $\sigma \in S$ then $E^c_\sigma \le x_\sigma < +\infty$ for $\sigma \in S$.
	\end{enumerate}
\end{Lem}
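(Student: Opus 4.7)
\textbf{Proof proposal for Lemma~\ref{lem:Ec_ec_general_properties}.} The plan is to handle the three claims in turn, each via the strong Markov property of the jump process $\sigma_t$, with (iii) reduced to a Dynkin-type martingale identity.

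For (i), I would argue by strong Markov property and the irreducibility of $S$. Suppose $E^c_{\tilde\sigma} = +\infty$ for some $\tilde\sigma \in S$ and pick an arbitrary $\sigma \in S$. By irreducibility there is a chain $\sigma = \sigma_1, \sigma_2, \dots, \sigma_N = \tilde\sigma$ in $S$ with $\lambda_{\sigma_i\sigma_{i+1}} > 0$. For any fixed $T > 0$, the event $A$ that the first $N-1$ jumps of $\sigma_t$ trace this chain and all occur before time $T$ has positive probability. On $A$ the process remains in $S$ up to the hitting time $\rho$ of $\tilde\sigma$, so $\rho < \tau$, and the factor $e^{\int_0^\rho c_{\sigma_s}\,ds}$ is bounded below by $e^{-TC}$ with $C = \max_{\sigma \in S}|c_\sigma|$. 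Applying the strong Markov property at $\rho$ gives $E^c_\sigma \ge e^{-TC}\,\mathbb{P}_\sigma(A)\,E^c_{\tilde\sigma} = +\infty$.

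For (ii), condition on the first jump time $T_1 \sim \mathrm{Exp}(\lambda_\sigma)$ and on the target $\sigma_{T_1}$, which is independent of $T_1$ with law $\lambda_{\sigma\tilde\sigma}/\lambda_\sigma$. Using $\sigma_s \equiv \sigma$ for $s < T_1$ and the strong Markov property at $T_1$,
\[
E^c_\sigma = \mathbb{E}_\sigma\!\left[\int_0^{T_1} e^{c_\sigma t}\,dt\right] + \sum_{\tilde\sigma \ne \sigma} \frac{\lambda_{\sigma\tilde\sigma}}{\lambda_\sigma}\, \mathbb{E}[e^{c_\sigma T_1}]\, E^c_{\tilde\sigma}.
\]
Finiteness of $E^c_\sigma$ forces $c_\sigma < \lambda_\sigma$ (otherwise the first term is already infinite). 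The explicit values $\mathbb{E}[e^{c_\sigma T_1}] = \lambda_\sigma/(\lambda_\sigma - c_\sigma)$ and $\mathbb{E}[\int_0^{T_1} e^{c_\sigma t}\,dt] = 1/(\lambda_\sigma - c_\sigma)$, together with the convention $E^c_{\tilde\sigma} = 0$ for $\tilde\sigma \notin S$, rearrange into~\eqref{eq:ec_linear_system}. When $M^c$ is invertible this solution is unique, forcing $E^c = e^c$.

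For (iii), the tool is Dynkin's formula applied to
\[
M_t := x_{\sigma_t}\, e^{\int_0^t c_{\sigma_s}\,ds} + \int_0^t e^{\int_0^u c_{\sigma_s}\,ds}\,du.
\]
A direct generator computation shows that on $\{t < \tau\}$ the drift of the first term is $\bigl(c_{\sigma_t} x_{\sigma_t} + \sum_{\tilde\sigma} \lambda_{\sigma_t\tilde\sigma} x_{\tilde\sigma}\bigr) e^{\int_0^t c_{\sigma_s}\,ds}$, which equals $-e^{\int_0^t c_{\sigma_s}\,ds}$ by~\eqref{eq:ec_linear_system}; hence $M_{t \wedge \tau}$ is a local martingale. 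Localizing with $\tau_n := n \wedge \tau$, $M_{\tau_n}$ is uniformly bounded (both $x_\sigma$ and $e^{\int_0^{\tau_n} c_{\sigma_s}\,ds}$ are bounded on $\{\tau_n \le n\}$), so optional stopping gives $\mathbb{E}_\sigma[M_{\tau_n}] = M_0 = x_\sigma$. Both components of $M_{\tau_n}$ are nonnegative: the integral term trivially, and $x_{\sigma_{\tau_n}} \ge 0$ since $x_\sigma \ge 0$ for $\sigma \in S$ by hypothesis while $x_\sigma = 0$ for $\sigma \notin S$ by the constraint of \eqref{eq:ec_linear_system}. Therefore $\mathbb{E}_\sigma\!\left[\int_0^{\tau_n} e^{\int_0^u c_{\sigma_s}\,ds}\,du\right] \le x_\sigma$, and monotone convergence as $n \to \infty$ yields $E^c_\sigma \le x_\sigma$.

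The main obstacle is the martingale step in (iii): the exponential factor can grow in time, so one cannot apply optional stopping directly at $\tau$. The localization $\tau_n = n \wedge \tau$ resolves this by truncating both the time and the process to a bounded range, after which standard tools apply and the required inequality follows by passage to the limit.
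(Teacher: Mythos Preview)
Your arguments for (i) and (ii) are correct and essentially identical to the paper's: both use the strong Markov property along an irreducible chain for (i), and first-jump conditioning for (ii).

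For (iii) your proof is also correct but follows a genuinely different route. The paper rewrites the linear system as
\[
x_\sigma = \frac{1}{\lambda_\sigma - c_\sigma} + \sum_{\tilde\sigma \ne \sigma} \frac{\lambda_{\sigma\tilde\sigma}}{\lambda_\sigma - c_\sigma}\, x_{\tilde\sigma}
\]
and then iterates this identity, substituting the analogous expression for each $x_{\tilde\sigma}$; after $n$ iterations the nonnegativity of the remainder yields $x_\sigma \ge \mathbb{E}_\sigma\bigl[1_{\{\tau \le T_n\}}\int_0^\tau e^{\int_0^t c_{\sigma_s}ds}\,dt\bigr]$, and monotone convergence in $n$ concludes. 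Your approach instead packages the same information into the Dynkin martingale $M_t = x_{\sigma_t} e^{\int_0^t c_{\sigma_s}ds} + \int_0^t e^{\int_0^u c_{\sigma_s}ds}\,du$, localizes in time rather than in jump count, and reads off the inequality from optional stopping plus nonnegativity. The paper's iteration is slightly more elementary and self-contained (no martingale machinery needed), while your version is cleaner once the martingale is identified and makes the role of the linear system as a ``harmonic'' equation for $x$ more transparent. One minor wording issue: ``$M_{\tau_n}$ is uniformly bounded'' should read ``bounded for each fixed $n$''; uniformity in $n$ fails and is not needed.
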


This kind of result is classical, we include its proof for the convenience of the reader.

\begin{proof}
	(i) Let $T_0, T_1, \ldots$ be the jump times of $\sigma_t$ and let $\sigma, \tilde \sigma \in S$ be arbitrary but fixed. Because $S$ is irreducible, there exist $\sigma = \varsigma_0, \varsigma_1, \ldots, \varsigma_N = \tilde\sigma \in S$ such that $\lambda_{\varsigma_{n}\varsigma_{n+1}} > 0$ for all $n < N$. Denoting $A = \cap_{n = 0}^N\{ \sigma_{T_n} = \varsigma_n \}$, the strong Markov property yields
	\begin{align*}
	\mathbb E_\sigma \left[ \int_0^{\tau} e^{\int_0^t c_{\sigma_s} ds} dt\right] &\ge \mathbb E_\sigma \left[ 1_A \int_0^{\tau} e^{\int_0^t c_{\sigma_s} ds} dt\right] \\
	&= \mathbb E_\sigma \left[ 1_A \int_0^{T_n} e^{\int_0^t c_{\sigma_s} ds} dt\right] + \mathbb E_\sigma \left[ 1_A e^{\int_0^{T_n} c_{\sigma_s} ds} \int_{T_n}^{\tau} e^{\int_{T_n}^{t} c_{\sigma_s} ds} dt\right] \\
	&= \mathbb E_\sigma \left[ 1_A \int_0^{T_n} e^{\int_0^t c_{\sigma_s} ds} dt\right] + \mathbb E_\sigma \left[ 1_A e^{\int_0^{T_n} c_{\sigma_s} ds} \right] \mathbb E_{\tilde\sigma} \left[ \int_0^{\tau} e^{\int_0^{t} c_{\sigma_s} ds} dt\right].
	\end{align*}
	
	Hence $E^c_{\tilde \sigma} = +\infty$ implies $E^c_\sigma = +\infty$. The assertion follows because $\sigma, \tilde\sigma\in S$ were arbitrary.
	
	(ii) Assume $E^c_\sigma < +\infty$ for all $\sigma \in S$. When $\sigma \notin S$ we have $E^c_\sigma = 0$. When $\sigma \in S$, conditioning on $T_1$ leads to
	\begin{align*}
	E^c_\sigma &= \mathbb E_\sigma \left[\int_0^{T_1} e^{\int_0^t c_{\sigma_s}ds} dt\right] + \mathbb E_\sigma \left[e^{\int_0^{T_1} c_{\sigma_s}ds} \right] \left(\sum_{\tilde\sigma\ne\sigma}  \frac{\lambda_{\sigma\tilde\sigma}}{\lambda_{\sigma}} \mathbb E_{\tilde\sigma} \left[\int_0^{\tau} e^{\int_0^t c_{\sigma_s}ds} dt\right] \right) \\
	\end{align*}
	
	The finiteness of $E^c_\sigma$ implies that $c_\sigma < \lambda_\sigma$ and
	$$
	\mathbb E_\sigma \left[\int_0^{T_1} e^{\int_0^t c_{\sigma_s}ds} dt\right] = \frac{1}{\lambda_\sigma - c_\sigma}
	$$
	hence
	\begin{align*}
	E^c_\sigma &= \frac{1}{\lambda_\sigma - c_\sigma} + \frac{\lambda_\sigma}{\lambda_\sigma - c_\sigma} \sum_{\sigma \ne \sigma} \frac{\lambda_{\sigma\tilde\sigma}}{\lambda_\sigma} E^c_\sigma \iff 	\sum_{\tilde\sigma \in \Sigma}\lambda_{\sigma\tilde\sigma} E^c_{\tilde\sigma} + c_\sigma E^c_\sigma = -1.
	\end{align*}
	
	(iii) Because $x$ solves \eqref{eq:ec_linear_system}, we have
	\begin{align*}
	x_\sigma &= \frac{1}{\lambda_\sigma - c_\sigma} + \frac{\lambda_\sigma}{\lambda_\sigma - c_\sigma} \sum_{\tilde\sigma \ne \sigma} \frac{\lambda_{\sigma\tilde\sigma}}{\lambda_\sigma} x_{\tilde\sigma} \\
	&= \frac{1}{\lambda_\sigma - c_\sigma} + \frac{\lambda_\sigma}{\lambda_\sigma - c_\sigma} \sum_{\tilde\sigma \ne \sigma} \frac{\lambda_{\sigma\tilde\sigma}}{\lambda_\sigma} \left( \frac{1}{\lambda_{\tilde\sigma} - c_{\tilde\sigma}} + \frac{\lambda_{\tilde\sigma}}{\lambda_{\tilde\sigma} - c_{\tilde\sigma}} \sum_{\hat\sigma \ne \tilde\sigma} \frac{\lambda_{\tilde\sigma\hat\sigma}}{\lambda_{\tilde\sigma}} x_{\hat\sigma} \right) \\
	&\ge \frac{1}{\lambda_\sigma - c_\sigma} + \frac{\lambda_\sigma}{\lambda_\sigma - c_\sigma} \sum_{\tilde\sigma \ne \sigma} \frac{\lambda_{\sigma\tilde\sigma}}{\lambda_\sigma} \frac{1}{\lambda_{\tilde\sigma} - c_{\tilde\sigma}} \\
	&= \mathbb E_\sigma \left[1_{\{\tau \le T_2\}}\int_0^{\tau} e^{\int_0^t c_{\sigma_s}ds} dt\right]
	\end{align*}
	using the positivity of the $x_\sigma$ for the inequality. Iterating the computation above yields
	$$
	x_\sigma \ge \mathbb E_\sigma \left[1_{\{\tau \le T_n\}}\int_0^{\tau} e^{\int_0^t c_{\sigma_s}ds} dt\right] \text{ for all } n \in \mathbb N.
	$$
	Hence it follows from the monotone convergence theorem that $E^c_\sigma \le x_\sigma < +\infty$.
\end{proof}

The following corollary is a consequence of the continuity of $c \mapsto \det M^c$ and $c \mapsto e^c_\sigma$. It is particularly useful when checking conditions~\eqref{eq:Ec_infinite} and~\eqref{eq:Ec_finite} of Theorem~\ref{thm:continuity_critical_point}.

\begin{Cor}\label{cor:finiteness_Ec_effective_condition} Assume that $M^a$ is invertible and that $S$ is irreducible.
	\begin{itemize}
		\item If $\min_{\sigma \in S} e^a_\sigma < 0$ then there exists $\gamma > 0$ such that
		$$
		\max_{\sigma \in S} |c_\sigma - a_\sigma| < \gamma \implies \min_{\sigma \in S} E^c_\sigma = +\infty.
		$$
		\item If $\min_{\sigma \in S} e^a_\sigma > 0$ then there exists $\gamma > 0$ such that
		$$
		\max_{\sigma \in S} |c_\sigma - a_\sigma| < \gamma \implies \max_{\sigma \in S} E^c_\sigma < +\infty.
		$$
	\end{itemize}
\end{Cor}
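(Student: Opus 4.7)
The plan is to combine the continuity of $c\mapsto e^c$ with the three parts of Lemma~\ref{lem:Ec_ec_general_properties}. Since $c\mapsto M^c$ is affine and $M^a$ is invertible by assumption, there exists $\gamma_0>0$ such that $M^c$ remains invertible whenever $\max_{\sigma\in S}|c_\sigma-a_\sigma|<\gamma_0$; on this neighborhood $e^c$ is uniquely defined and depends continuously on $c$ via Cramer's rule. In particular the sign of $\min_{\sigma\in S} e^c_\sigma$ can be kept equal to the sign of $\min_{\sigma\in S} e^a_\sigma$ by shrinking $\gamma_0$ if necessary.

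For the first bullet, assume $\min_{\sigma\in S} e^a_\sigma<0$ and pick $\gamma\in(0,\gamma_0]$ such that $\min_{\sigma\in S} e^c_\sigma<0$ throughout $U_\gamma:=\{c:\max_{\sigma\in S}|c_\sigma-a_\sigma|<\gamma\}$. Suppose, for contradiction, that $\min_{\sigma\in S}E^c_\sigma<+\infty$ for some $c\in U_\gamma$. Irreducibility of $S$ together with Lemma~\ref{lem:Ec_ec_general_properties}~(i) forces $E^c_\sigma<+\infty$ for all $\sigma\in S$, and then part~(ii) yields $E^c=e^c$. This contradicts $E^c_\sigma\ge 0$, since by construction some component of $e^c$ indexed by $S$ is strictly negative. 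Hence $\min_{\sigma\in S} E^c_\sigma=+\infty$ on all of $U_\gamma$.

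For the second bullet, assume $\min_{\sigma\in S} e^a_\sigma>0$ and pick $\gamma\in(0,\gamma_0]$ such that $e^c_\sigma>0$ for all $\sigma\in S$ and $c\in U_\gamma$. Because the linear system~\eqref{eq:ec_linear_system} pins down $e^c_\sigma=0$ for $\sigma\notin S$, the vector $x:=e^c$ satisfies the nonnegativity hypothesis of Lemma~\ref{lem:Ec_ec_general_properties}~(iii), which then gives $E^c_\sigma\le e^c_\sigma<+\infty$ for every $\sigma\in S$, and therefore $\max_{\sigma\in S} E^c_\sigma<+\infty$ throughout $U_\gamma$.

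No step should present real difficulty; the only point worth checking carefully is that part~(iii) of the lemma applies to the globally defined $x=e^c$, which is immediate from the structure of~\eqref{eq:ec_linear_system} combined with the strict positivity of $e^c$ on $S$.
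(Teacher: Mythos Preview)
Your proof is correct and follows exactly the approach the paper indicates: the paper states the corollary as a direct consequence of the continuity of $c\mapsto\det M^c$ and $c\mapsto e^c_\sigma$ without spelling out the details, and your argument fills those in precisely by invoking parts~(i)--(iii) of Lemma~\ref{lem:Ec_ec_general_properties} in the natural way.
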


We illustrate this section's results by applying them to Counterexample~\ref{def:counterexample}.

\begin{Prop}\label{prop:counterexample_continuity_threshold}
	\begin{enumerate}[label=(\roman*)]
	\item In the case of Counterexample~\ref{def:counterexample}, taking $S = \{1, 2\}$, one has
	$$
	\eqref{eq:Ea_finite} \iff \omega > \frac{3+ \sqrt{5}}{2} \text{ for } \sigma = 1, 2.
	$$
	\item Furthermore
	\begin{align*}
	\eqref{eq:Ec_infinite} &\iff \omega < \frac{3+ \sqrt 5}{2}, & \eqref{eq:Ec_finite} &\iff \omega > \frac{3+ \sqrt 5}{2}.
	\end{align*}
		\end{enumerate}
\end{Prop}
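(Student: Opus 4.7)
The plan is to compute the vector $(e^a_\sigma)_{\sigma \in \Sigma}$ defined by the linear system~\eqref{eq:ec_linear_system} explicitly and then invoke Lemma~\ref{lem:Ec_ec_general_properties} and Corollary~\ref{cor:finiteness_Ec_effective_condition}. From the vector fields of Counterexample~\ref{def:counterexample} one reads off $a_1 = -v_1'(0) = 1$ and $a_2 = -v_2'(0) = 2$, and since $v_3(0) = 1 \ne 0$ one has $\Sigma_0 = S = \{1, 2\}$. Plugging $c = a$ into~\eqref{eq:ec_linear_system} and using the rates of Counterexample~\ref{def:counterexample} reduces it to the $2 \times 2$ linear system
\begin{align*}
(1 - 2\omega)\, x_1 + 2\omega\, x_2 &= -1, \\
\omega\, x_1 + (2 - 2\omega)\, x_2 &= -1,
\end{align*}
with determinant $2(\omega^2 - 3\omega + 1)$, which vanishes exactly at $\omega_\pm := (3 \pm \sqrt{5})/2$. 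When this determinant is non-zero, Cramer's rule yields
\[
e^a_1 = \frac{2\omega - 1}{\omega^2 - 3\omega + 1}, \qquad e^a_2 = \frac{3\omega - 1}{2(\omega^2 - 3\omega + 1)}.
\]

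For part~(i) I would split into cases on $\omega$. When $\omega > \omega_+$, both numerators and the denominator are positive, so $(e^a_1, e^a_2)$ is non-negative and Lemma~\ref{lem:Ec_ec_general_properties}~(iii) gives $E^a_\sigma < +\infty$ for $\sigma \in S$. When $\omega < \omega_+$ with $\omega \ne \omega_-$, a short sign check on the explicit formulas above (subdividing on $1/3, 1/2$) shows that $\min(e^a_1, e^a_2) < 0$, so Lemma~\ref{lem:Ec_ec_general_properties}~(ii) together with the non-negativity of $E^a$ forces $E^a_\sigma = +\infty$ for some $\sigma \in S$; this propagates to the whole of $S$ by irreducibility of $S$ and Lemma~\ref{lem:Ec_ec_general_properties}~(i). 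At the singular values $\omega = \omega_\pm$ the two rows of the coefficient matrix are proportional with ratio $(1 - 2\omega)/\omega$, which equals the ratio $1$ of the right-hand side entries only when $\omega = 1/3 \ne \omega_\pm$; hence the system is inconsistent, no solution exists, and Lemma~\ref{lem:Ec_ec_general_properties}~(ii) again gives $E^a_\sigma = +\infty$.

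For part~(ii) the backward direction of each iff follows from~(i) by taking $c = a$: if $E^a < +\infty$ then~\eqref{eq:Ec_infinite} fails at $c = a$, and if $E^a_\sigma = +\infty$ then~\eqref{eq:Ec_finite} fails. The forward direction for $\omega \ne \omega_\pm$ follows from Corollary~\ref{cor:finiteness_Ec_effective_condition}: the sign of $\min e^a$ determined above gives~\eqref{eq:Ec_finite} for $\omega > \omega_+$ and~\eqref{eq:Ec_infinite} for $\omega < \omega_+$ (excluding $\omega_-$). The only remaining point is $\omega = \omega_-$, where $M^a$ is singular and the Corollary does not apply; here I would argue directly by perturbing $c$ around $a$. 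For generic $c$ in a neighborhood of $a$, the matrix $M^c$ is invertible, and the Cramer numerators $-c_2 + 4\omega$ and $-c_1 + 3\omega$ evaluated at $(c, \omega) = (a, \omega_-)$ equal $4 - 2\sqrt{5} < 0$ and $(7 - 3\sqrt{5})/2 > 0$ respectively, which are non-zero of opposite signs; by continuity they retain these signs under small perturbations of $c$, forcing $e^c_1$ and $e^c_2$ to have opposite signs and hence $\min e^c_\sigma < 0$. Combined with a short inconsistency check on the codimension-one set $\{c : \det M^c = 0\}$ (which near $(c, \omega) = (a, \omega_-)$ does not meet the compatibility line $c_1 = c_2 = 3\omega$), this forces $\min E^c_\sigma = +\infty$ for all $c$ close to $a$, establishing~\eqref{eq:Ec_infinite}.

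The main obstacle is the handling of the critical value $\omega = \omega_-$ in part~(ii), since Corollary~\ref{cor:finiteness_Ec_effective_condition} requires $M^a$ to be invertible; fortunately this reduces to an explicit sign tracking on a $2 \times 2$ linear system, which remains manageable.
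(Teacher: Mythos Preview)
Your argument is correct, but the route differs from the paper's. You carry out an exhaustive sign analysis of $e^a_1, e^a_2$ across all parameter ranges, and then handle the singular point $\omega=\omega_-$ by a direct perturbation of $c$ around $a$ using the Cramer numerators. The paper instead observes that $\omega \mapsto E^{c,\omega}_\sigma$ is non-increasing (by stochastic domination and coupling of the jump process $\sigma_t$): once $E^{a,\omega}_\sigma = e^{a,\omega}_\sigma < +\infty$ is established for $\omega > \omega_+$, monotonicity immediately forces $E^{a,\omega}_\sigma = +\infty$ for all $\omega \le \omega_+$, which settles~(i) without any casework. Similarly in~(ii), the paper checks that $e^{a,\omega}_1, e^{a,\omega}_2$ are \emph{both} strictly negative on a small interval $(\omega_+ - \delta, \omega_+)$, applies Corollary~\ref{cor:finiteness_Ec_effective_condition} there, and then uses the same monotonicity (now in $\omega$, for fixed $c$ with $c_\sigma \ge 0$) to propagate~\eqref{eq:Ec_infinite} to every smaller $\omega$, bypassing the singular value $\omega_-$ entirely. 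Your computational approach is self-contained and avoids the coupling step; the paper's monotonicity trick is shorter and would scale better to larger $S$.

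One small slip: your ``compatibility line'' is not $c_1 = c_2 = 3\omega$. Proportionality of the two rows of the augmented system with ratio $1$ (forced by the equal right-hand sides) gives $c_1 - 2\omega = \omega$ and $2\omega = c_2 - 2\omega$, i.e.\ $c_1 = 3\omega$, $c_2 = 4\omega$. This does not affect your conclusion, since $(a_1,a_2)=(1,2)$ is still bounded away from $(3\omega_-, 4\omega_-)$.
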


\begin{proof}
	(i) Take $S = \{ 1, 2 \}$ and $a_1 = 1, a_2 = 2$. Using the notations $M^{c, \omega}, e^{c,\omega}_\sigma, E^{c, \omega}_\sigma$ instead of $M^c, e^c_\sigma, E^c_\sigma$ to keep track of the $\omega$ dependence, one has that
	$$
	M^{a,\omega} =
	\begin{pmatrix}
	-2\omega+ 1 & 2\omega & 0 \\
	\omega & -2\omega + 2 & \omega \\
	0 & 0 & 1
	\end{pmatrix}
	$$
	is invertible when $\omega \ne \frac{3 \pm \sqrt{5}}{2}$. In that case
	\begin{align*}
	e^{a,\omega}_1 &= \frac{2 {\omega} - 1}{{\omega}^{2} - 3 {\omega} + 1}, & e^{a,\omega}_2 &= \frac{3 {\omega} - 1}{2 {\left({\omega}^{2} - 3 {\omega} + 1\right)}}.
	\end{align*}
	It follows $E^{a, \omega}_\sigma = e^{a, \omega}_\sigma$ for $\omega> \omega^* := \frac{3 + \sqrt{5}}{2}$. Because $\omega \mapsto E^{a, \omega}_\sigma$ is non-increasing, as can be seen from stochastic domination and coupling, one has $E^{a, \omega}_\sigma = +\infty$ for $\omega \le \omega^*$.
	
	(ii) When $\omega > \omega^*$ it follows from Corollary~\ref{cor:finiteness_Ec_effective_condition} that there exists $\gamma(\omega) > 0$ such that
	$$
	\max_{\sigma \in S} |c_\sigma - a_\sigma| < \gamma(\omega) \implies \max_{\sigma \in S} E^{c,\omega}_\sigma < +\infty.
	$$
	Furthermore, there exists $\delta > 0$ such that if $\omega < \omega^*$ and $|\omega - \omega^*| < \delta$ then $M^{a,\omega}$ is invertible and $e^{a,\omega}_1, e^{a,\omega}_2 < 0$ so Corollary~\ref{cor:finiteness_Ec_effective_condition} implies the existence of $\gamma(\omega) > 0$ such that
	$$
	\max_{\sigma \in S} |c_\sigma - a_\sigma| < \gamma(\omega) \implies \min_{\sigma \in S} E^{c,\omega}_\sigma = +\infty.
	$$
	
	Finally, if $c_\sigma \ge 0$ for all $\sigma \in S$ then $\omega \mapsto E^{c,\omega}_\sigma$ is non-increasing for all $\sigma \in S$. Hence if $\omega \le \omega^* - \delta/2$ then
	$$
	\max_{\sigma \in S} |c_\sigma - a_\sigma| \le \min \left[\gamma({\omega^* - \delta/2}), \min_{\sigma \in S} \frac{a_\sigma}{2}\right] \implies \min_{\sigma \in S} E^{c,\omega}_\sigma \ge \min_{\sigma \in S} E^{c,\omega^*-\delta/2}_\sigma = +\infty.
	$$
\end{proof}

\section{Applications to the power-law and harmonic processes}\label{sec:applications}

This section is dedicated to the proof of Theorems~\ref{thm:instantaneous_powerlaw_process} and~\ref{thm:finite_harmonic_process}, which characterize the shape transition of the power-law process and the harmonic process respectively.

For the power-law process, we note that $v_0(0) = v_0'(0) = 0$. While it follows from~\cite[Th.~1]{balazs11} that $p_0$ is locally bounded at $x = 0$, continuity at that point cannot be studied using the results of~\cite{bakhtin15}, as they require $v_0'(0) \ne 0$. We use the following technical lemma to show continuity at $x = 0$ irrespective of model parameters through a direct computation. Its proof is postponed to the end of the section.

\begin{Lem}\label{lem:integral_eq}
	Let $\epsilon, \omega, a > 0$ and $p > 1$. If $R : \mathbb R \to \mathbb R_+$ is continuous at $0$ and $R(0) > 0$ then  
	$$
		\int_x^\epsilon e^{\frac{\omega}{a(p-1)}y^{-(p-1)}} R(y) dy \sim R(0) \frac{a}{\omega} e^{\frac{\omega}{a(p-1)}x^{-(p-1)}} x^p \text{ as } x \to {0+}.
	$$
\end{Lem}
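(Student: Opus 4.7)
The plan is to treat this as a Laplace-type asymptotic, noting that $e^{h(y)}$ with $h(y) := \frac{\omega}{a(p-1)} y^{-(p-1)}$ is violently peaked at $y = x$ as $x \to 0+$, so the integral is dominated by a neighborhood of the left endpoint. The key identity is the exact differentiation
\[
\frac{d}{dy}\bigl[e^{h(y)}\bigr] = -\frac{\omega}{a}\, y^{-p}\, e^{h(y)}, \qquad \text{i.e.} \qquad e^{h(y)} = -\frac{a\, y^p}{\omega}\,\frac{d}{dy}\bigl[e^{h(y)}\bigr],
\]
which is the mechanism that produces the prefactor $\frac{a}{\omega} x^p e^{h(x)}$ via integration by parts. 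Since $R$ is only assumed continuous at $0$, I will not integrate by parts against $R$; instead I will first analyze the pure exponential integral and then absorb $R$ by a direct splitting.

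First I would fix a small $y_0 > 0$ and set $I(x) := \int_x^{y_0} e^{h(y)}\,dy$. Applying the identity above and integrating by parts gives
\[
I(x) = \frac{a\, x^p}{\omega} e^{h(x)} - \frac{a\, y_0^p}{\omega} e^{h(y_0)} + \frac{ap}{\omega}\int_x^{y_0} y^{p-1} e^{h(y)}\, dy,
\]
and the remaining integral is crudely bounded by $y_0^{p-1}\, I(x)$. Choosing $y_0$ small enough that $k := \frac{ap\, y_0^{p-1}}{\omega} < 1$ yields the two-sided estimate
\[
\frac{a\, x^p}{\omega} e^{h(x)} - C(y_0) \;\le\; I(x) \;\le\; \frac{1}{1-k}\,\frac{a\, x^p}{\omega} e^{h(x)},
\]
where $C(y_0) := \frac{a y_0^p}{\omega} e^{h(y_0)}$ is a constant. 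Since $\frac{a x^p}{\omega} e^{h(x)} \to +\infty$ as $x \to 0+$, this forces $\liminf_{x\to 0+} \frac{I(x)}{(a/\omega) x^p e^{h(x)}} \ge 1$ and $\limsup \le \frac{1}{1-k}$; taking $y_0$ arbitrarily small sends $k \to 0$, so $I(x) \sim \frac{a}{\omega} x^p e^{h(x)}$.

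To handle $R$, for any $y_0 > 0$ small I split
\[
\int_x^\epsilon e^{h(y)} R(y)\, dy = R(0)\, I(x) + \int_x^{y_0}\!\bigl(R(y)-R(0)\bigr) e^{h(y)}\, dy + \int_{y_0}^\epsilon e^{h(y)} R(y)\, dy.
\]
The last integral is a finite constant depending only on $y_0$ and $\epsilon$. The middle integral is bounded in absolute value by $\eta(y_0)\, I(x)$ where $\eta(y_0) := \sup_{[0,y_0]} |R(y)-R(0)| \to 0$ by continuity of $R$ at $0$. Dividing by $\frac{a}{\omega} x^p e^{h(x)}$ and using the asymptotic for $I(x)$, one gets
\[
R(0) - \eta(y_0) \;\le\; \liminf_{x\to 0+}\; \frac{\int_x^\epsilon e^{h(y)} R(y)\,dy}{(a/\omega)\, x^p\, e^{h(x)}} \;\le\; \limsup_{x\to 0+}\; \frac{\int_x^\epsilon e^{h(y)} R(y)\,dy}{(a/\omega)\, x^p\, e^{h(x)}} \;\le\; R(0) + \eta(y_0),
\]
and letting $y_0 \to 0$ gives the desired equivalence.

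There is no real obstacle beyond bookkeeping: the only subtle point is that the iterated integration-by-parts approach would require $R \in C^1$, which fails here, so the argument must separate the role of the exponential factor (handled by IBP and a fixed-point-style self-improving bound) from that of $R$ (handled by a $\sup$-bound splitting using only pointwise continuity at $0$). The condition $p > 1$ is used implicitly when choosing $y_0$ so that $k < 1$, but is available by hypothesis.
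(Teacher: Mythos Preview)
Your argument is correct. Both you and the paper extract the leading term via the same integration-by-parts identity $e^{h(y)} = -\frac{a y^p}{\omega}\,\frac{d}{dy}e^{h(y)}$, but the localization differs. The paper splits at the \emph{moving} point $2x$: on $[x,2x]$ the factor $R$ is uniformly close to $R(0)$ by continuity, the tail $\int_{2x}^\epsilon$ is discarded by a crude comparison of the exponentials at $3x/2$ and $2x$, and the integration by parts is then performed on $\int_x^{2x} e^{h(y)}\,dy$. You instead fix a cutoff $y_0$, use the absorption inequality $I(x)\le \frac{a}{\omega}x^p e^{h(x)} + k\,I(x)$ coming from the IBP remainder to control $I(x)$, and send $y_0\to 0$ only at the very end to simultaneously kill $k$ and $\eta(y_0)$. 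Your route avoids the explicit exponential comparison; the paper's route makes the Laplace-method picture (all mass concentrates near the left endpoint) more transparent. One bookkeeping remark: since your $I(x)=\int_x^{y_0} e^{h}$ depends on $y_0$, the intermediate claim ``$I(x)\sim\frac{a}{\omega}x^p e^{h(x)}$'' is not literally about a fixed function, and your displayed bounds $R(0)\pm\eta(y_0)$ should in fact carry a $\frac{1}{1-k}$ factor. The clean phrasing is that the two-sided estimate for $I(x)$ and the $\eta(y_0)$ error from $R$ transfer to the $y_0$-independent ratio $\int_x^\epsilon e^{h}R\,\big/\,\bigl(\frac{a}{\omega}x^p e^{h(x)}\bigr)$, after which letting $y_0\to 0$ is legitimate since both $k\to 0$ and $\eta(y_0)\to 0$. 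This does not affect the correctness of your conclusion.
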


\begin{proof}[Proof of Theorem~\ref{thm:instantaneous_powerlaw_process}] Using the terminology of~\cite[Sec.~4.2]{benaim23}, the point $x = 0$ is accessible and satisfies the weak bracket condition. It follows from~\cite[Th.~4.4]{benaim23} that the power-law process admits a unique invariant measure $\pi = \sum_{\sigma \in \Sigma} p_\sigma(x) dx \otimes \delta_\sigma$ and that this invariant measure admits a density. It follows from Theorem~\ref{thm:continuity_noncritical_interval} that
\begin{align*}
	p_2 &\in C^{0}(\mathbb R \setminus \{ x_+ \}), &
	p_0 &\in C^{0}(\mathbb R \setminus \{ 0 \}), &
	p_{-2} &\in C^{0}(\mathbb R \setminus \{ x_- \}).
\end{align*}
Using the terminology of~\cite[Sec.~6]{bakhtin15}, it follows from~\cite[Prop.~1]{bakhtin15} that $[x_{-}, x_{+}]$ is the only minimal invariant set of the process. The uniqueness of $\pi$ implies that it is ergodic. Hence it follows from~\cite[Prop.~7]{bakhtin15} that the support of the measure $p_\sigma(x) dx$ is $[x_{-}, x_{+}]$ for all $\sigma \in \Sigma$. It follows from~\cite[Th.~2]{bakhtin15} that
$$
p_2 \text{ is continuous at } x_+ \iff \lambda_2 > -v_2'(x_+) \iff \omega > a p x_+^{p-1}
$$
and similarly for the continuity of $p_{-2}$ at $x_-$.

It remains to show that $p_0$ is continuous at $x = 0$ irrespective of model parameters. It follows from Theorem~\ref{thm:Cr_regularity_noncritical_interval} that there exists $\epsilon > 0$ such that $p_\sigma \in C^\infty(0, \epsilon)$ for all $\sigma \in \Sigma$. Hence, setting $\varphi_0(x) = v_0(x) p_0(x)$, it follows from Lemma~\ref{lem:fokker_planck} that
$$
	\varphi_0'(x) = \frac{\omega}{a} x^{-p} \varphi_0(x) + \underbrace{2\omega p_2(x) + 2\omega p_{-2}(x)}_{:= R(x)}
$$
in the sense of classical ODEs. Explicitly solving this ODE yields that there exists $C \in \mathbb R$ such that
\begin{equation*}\label{eq:explicit_phi0}
\varphi_0(x) = C e^{-\frac{\omega}{a(p-1)}x^{-(p-1)}} - e^{-\frac{\omega}{a(p-1)}x^{-(p-1)}} \int_x^\epsilon e^{\frac{\omega}{a(p-1)}y^{-(p-1)}} R(y) dy \text{ for } x \in (0, \epsilon).
\end{equation*}
Hence
$$
p_0(x) = \underbrace{\frac{C e^{-\frac{\omega}{a(p-1)}x^{-(p-1)}}}{-2ax^p}}_{\overset{x \to 0+}{\longrightarrow} 0}
 + \frac{ e^{-\frac{\omega}{a(p-1)}x^{-(p-1)}}}{2a x^p} \int_x^\epsilon e^{\frac{\omega}{a(p-1)}y^{-(p-1)}} R(y) dy.
$$

Because $p_{-2}, p_{2}$ are continuous at $0$, so is $R$. One has $p_2(0), p_{-2}(0) > 0$ by ~\cite[Th.~4.4]{benaim23} so $R(0) > 0$. Hence it follows from Lemma~\ref{lem:integral_eq} that, in the $x \to {0+}$ limit,
$$
	\frac{ e^{-\frac{\omega}{a(p-1)}x^{-(p-1)}}}{2a x^p} \int_x^\epsilon e^{\frac{\omega}{a(p-1)}y^{-(p-1)}} R(y) dy \sim \frac{ e^{-\frac{\omega}{a(p-1)}x^{-(p-1)}}}{2a x^p} \frac{a}{\omega} e^{\frac{\omega}{a(p-1)}x^{-(p-1)}} x^p R(0) \sim \frac{R(0)}{2\omega}.
$$
Hence $\lim_{x \to 0+} p_0(x) = \frac{R(0)}{2\omega}$ and the same argument shows $\lim_{x \to 0-} p_0(x) = \frac{R(0)}{2\omega}$.
\end{proof}

The main difficulty in understanding the shape transition of the harmonic process is the joint vanishing of $v_{0_\pm}$ and $v_{0_0}$ at $x = 0$. This is addressed using Theorem~\ref{thm:continuity_critical_point}. 

\begin{proof}[Proof of Theorem~\ref{thm:finite_harmonic_process}] As in the proof of Theorem~\ref{thm:instantaneous_powerlaw_process}, by~\cite[Sec.~4.2]{benaim23}, the fact that the point $x = 0$ is accessible and satisfies the weak bracket condition implies that the harmonic process admits a unique invariant measure $\pi = \sum_{\sigma \in \Sigma} p_\sigma(x) \otimes \delta_\sigma$ and that this invariant measure possesses a density. It follows from Theorem~\ref{thm:continuity_noncritical_interval} that
\begin{align*}
	p_{\pm k} &\in C^0 \left( \mathbb R \setminus \{ x_{\pm k} \} \right), & p_{0_\pm}, p_{0_0} &\in C^0(\mathbb R \setminus \{ 0 \}).
\end{align*}
As in the proof of Theorem~\ref{thm:instantaneous_powerlaw_process}, $\pi$ is ergodic and $[x_{-2}, x_{+2}]$ is the only minimal invariant set of the process. Hence it follows from~\cite[Prop.~1]{bakhtin15} and~\cite[Prop.~7]{bakhtin15} that the support of the measure $p_\sigma(x)dx$ is $[x_{-2}, x_{+2}]$ for all $\sigma \in \Sigma$. The continuity or lack thereof of $p_{\pm k}$ at $x = x_{\pm k}$ follows from~\cite[Th.~2]{bakhtin15}. It remains to discuss the continuity of $p_{0_\pm}, p_{0_0}$ at $x = 0$ using Theorem~\ref{thm:continuity_critical_point}.

Take $S = \{ 0_\pm \}$. It is immediate that Assumptions~\ref{ass:D1_compact_set}, \ref{ass:D2_right_diff}, \ref{ass:D3_S_irreducible} and~\ref{ass:E1_analytic} are satisfied. Assumption~\ref{ass:D4_TV_convergence} can be checked using~\cite[Cor.~2.7]{benaim18}. Moreover, because the support of $p_{0_\pm}(x) dx$ is $[x_{-2}, x_{+2}]$, Assumption~\ref{ass:D5_inside_support} is also satisfied. One has that
$$
B_0 = 
\bordermatrix{ & \tilde \sigma = 2 & \tilde \sigma = 1 & \tilde \sigma = 0_\pm & \tilde \sigma = 0_0 & \tilde \sigma = -1 & \tilde \sigma = -2 \cr
	\sigma = 2     & 0 & 0 & 0                  & 0 & 0 & 0 \cr
	\sigma = 1     & 0 & 0 & -\frac{\alpha}{2v} & -\frac{\beta}{2v} & 0 & 0 \cr
	\sigma = 0_\pm & 0 & 0 & \frac{\alpha}{v}   & 0 & 0 & 0 \cr
	\sigma = 0_0   & 0 & 0 & 0                  & \frac{\beta}{v} & 0 & 0 \cr
	\sigma = -1    & 0 & 0 & -\frac{\alpha}{2v} & -\frac{\beta}{2v} & 0 & 0 \cr
	\sigma = -2    & 0 & 0 & 0                  & 0 & 0 & 0 
},
$$
is diagonalizable and its eigenvalues $\alpha/v, \beta/v, 0$ are all real. Thus Assumption~\ref{ass:E2_spectral} holds. Finally,
$$
A = 
\bordermatrix{  & \tilde \sigma = 1 & \tilde \sigma = 0_\pm  & \tilde \sigma = -1  \cr
	\sigma = 1 & 1 & 0 & 0  \cr
	\sigma = 0_\pm& \beta/2 & 2\alpha - 2v & \beta/2  \cr
	\sigma = -1 & 0 & 0 & 1  \cr
},
$$
is invertible, i.e.~Assumption~\ref{ass:E3_invertibility} is satisfied, when $\alpha \ne v$. Because $S = \{0_\pm\}$ is a singleton, we have
\begin{align*}
	\eqref{eq:Ec_infinite} &\iff \lambda_{0_\pm} > -v_{0_\pm}'(0) \iff \alpha > v, \\
	\eqref{eq:Ec_finite} &\iff \lambda_{0_\pm} < -v_{0_\pm}'(0) \iff \alpha  < v.
\end{align*}
Theorem~\ref{thm:continuity_critical_point} yields that $p_{0_\pm}$ is continuous (resp.~diverges) at $x = 0$ if $\alpha> v$ (resp.~$\alpha < v$). It follows from the same argument that $p_{0_0}$ is continuous (resp.~diverges) at $x = 0$ if $\beta > v$ (resp.~$\beta< v$).
\end{proof}

We end with the postponed proof of the technical lemma.

\begin{proof}[Proof of Lemma~\ref{lem:integral_eq}]
	Set $q = p - 1$ and split the integral as follows
	$$
	\int_{x}^\epsilon e^{\frac{\omega}{aq}y^{-q}} R(y) dy = \int_x^{2x} e^{\frac{\omega}{aq}y^{-q}} R(y) dy + \int_{2x}^\epsilon e^{\frac{\omega}{aq}y^{-q}} R(y) dy.
	$$
	One has
	$$
	\int_x^{2x} e^{\frac{\omega}{aq}y^{-q}} R(y) dy \ge \int_x^{\frac{3}{2}x} e^{\frac{\omega}{aq}y^{-q}} R(y) dy \ge \frac{1}{2} x e^{\frac{\omega}{aq}\left(\frac{3}{2}x\right)^{-q}} \inf_{y \in [x, 3x/2]} R(y)
	$$
	and
	$$
	\int_{2x}^\epsilon e^{\frac{\omega}{aq}y^{-q}} R(y) dy \le (\epsilon - 2x) e^{\frac{\omega}{aq}\left( 2x \right)^{-q}} \sup_{y \in [2x, \epsilon]} R(y).
	$$
	It follows
	$$
	\int_{x}^\epsilon e^{\frac{\omega}{aq}y^{-q}} R(y) dy \sim \int_{x}^{2x} e^{\frac{\omega}{aq}y^{-q}} R(y) dy \sim R(0) \int_{x}^{2x} e^{\frac{\omega}{aq}y^{-q}} dy \text{ as } x \to {0+}.
	$$
	Integrating by parts yields
	$$
	\int_{x}^{2x} e^{\frac{\omega}{aq}y^{-q}} dy = \int_{x}^{2x} \frac{e^{\frac{\omega}{aq}y^{-q}}}{y^p} y^pdy = \left[-\frac{a}{\omega} e^{\frac{\omega}{aq}y^{-q}} y^p\right]_{x}^{2x} + \int_x^{2x} \frac{a}{\omega} e^{\frac{\omega}{aq}y^{-q}} p y^q dy.
	$$
	One has
	$$
	\frac{\int_x^{2x} \frac{a}{\omega} e^{\frac{\omega}{aq}y^{-q}} p y^q dy}{\int_x^{2x} \frac{a}{\omega} e^{\frac{\omega}{aq}y^{-q}} dy} \to 0 \text{ as } x \to {0+}
	$$
	so that
	$$
	\int_{x}^{2x} e^{\frac{\omega}{aq}y^{-q}} dy \sim \left[-\frac{a}{\omega} e^{\frac{\omega}{aq}y^{-q}} y^p\right]_{x}^{2x} \sim \frac{a}{\omega} e^{\frac{\omega}{aq}x^{-q}} x^p.
	$$
\end{proof}

\paragraph{Acknowledgments} The author would like to sincerely thank Michel Benaïm and Tobias Hurth for many helpful discussions. The author is supported by the grant 200029-21991311 of the Swiss National Science Foundation.

\paragraph{Data availability statement} This manuscript has no associated data.

\paragraph{Conflict of interest statement} The author declares that there are no conflicts of interest.

\bibliographystyle{alpha}
\bibliography{biblio}

\end{document}